\newtheorem{thm}{Theorem}[subsection]
\newtheorem{prop}[thm]{Proposition}
\newtheorem{cor}[thm]{Corollary}
\newtheorem{lma}[thm]{Lemma}
\theoremstyle{definition}
\newtheorem{defn}[thm]{Definition}
\theoremstyle{remark}
\newtheorem{exmp}[thm]{Example}
\newtheorem{proofpart}{Part}
\newtheoremstyle{named}{}{}{\itshape}{}{\bfseries}{.}{.5em}{\thmnote{#3's }#1}
\theoremstyle{named}
\newcommand{\emodels}{\models_{\epsilon}}
\newcommand{\emodelsp}[1]{\models_{#1}}
\newcommand{\fmodels}{\vdash_\epsilon}
\newcommand{\fmodelsp}[1]{\vdash_{#1}}
\newcommand{\pmodel}[2]{(\mathcal{#1}, \mathcal{#2})}
\newcommand{\pmd}{\pmodel{M}{D}}
\newcommand{\emodelz}{\models_0}
\newcommand{\M}{\mathcal{M}}
\newcommand{\D}{\mathcal{D}}
\newcommand{\qmodels}{\models^{\mathrm{q}}}
\newcommand{\lL}{\mathcal{L}}
\newcommand{\frml}[3]{#1(\vec{#2}; \vec{#3})}
\newcommand{\phixp}{\frml \phi x p}
\newcommand{\varq}{\triangledown}
\newcommand{\eq}{\doteqdot}
\newcommand{\existseps}[1][\epsilon]{\underset{^{>#1}}\oldexists}
\newcommand{\foralleps}[1][\epsilon]{\underset{^{\ge1- #1}}\oldforall}
\newcommand{\indset}[1]{\llb #1\rrb}
\newcommand{\varqs}[2][x]{\varq_1 #1_1 \cdots \varq_{#2} #1_{#2}}
\newcommand{\Qr}[2][\epsilon]{\mathfrak{Q}^{_{#2}}_{^{#1}}}
\newcommand{\Qset}{\mathrm{QSET}}
\newcommand{\coerce}[1]{\operatorname{#1-coerce}}
\newcommand{\Ecoerce}{{\coerce{E}}}
\newcommand{\Fcoerce}{{\coerce{F}}}
\newcommand{\formone}{\underset{^{\ge 1}}\oldforall}
\newcommand{\foralle}{\underset{^{\ge \epsilon}}\oldforall}
\let\oldexists\exists
\let\oldforall\forall
\renewcommand{\powerset}{\mathfrak{P}}
\title{Computabilities of Validity and Satisfiability in Probability
Logics over Finite and Countable Models}
\author{Greg Yang
    \thanks{email: gyang@college.harvard.edu}
    \affil{Harvard University}
}
\let\theauthor\@author
\let\thetitle\@title
\begin{document}

\maketitle
\thispagestyle{empty}

\begin{abstract}
The $\epsilon$-logic (which is called $\epsilon$E-logic in this paper) of Kuyper
and Terwijn is a variant of first order logic with the same syntax, in which
the models are equipped with probability measures and in which the
$\forall x$ quantifier is interpreted as
``there exists a set $A$ of measure $\ge 1 - \epsilon$ such that for each $x
\in A$, \ldots.''
Previously, Kuyper and Terwijn proved that the general satisfiability and
validity problems for this logic are, i) for rational $\epsilon \in (0, 1)$,
respectively $\Sigma^1_1$-complete and $\Pi^1_1$-hard, and ii) for $\epsilon =
0$, respectively decidable and $\Sigma^0_1$-complete.
The adjective ``general'' here means ``uniformly over all languages.''

We extend these results in the scenario of finite models. In particular, we show
that the problems of satisfiability by and validity over finite models in
$\epsilon$E-logic are, i) for rational $\epsilon \in (0, 1)$,
respectively $\Sigma^0_1$- and $\Pi^0_1$-complete, and ii) for $\epsilon = 0$,
respectively decidable and $\Pi^0_1$-complete.
Although partial results toward the countable case are also achieved,
the computability of $\epsilon$E-logic over countable models still remains
largely unsolved.
In addition, most of the results, of this paper and of Kuyper and Terwijn,
do not apply to individual languages with a finite number of unary predicates.
Reducing this requirement continues to be a major point of research.

On the positive side, we derive the decidability of the corresponding problems
for monadic relational languages ---
equality- and function-free languages with finitely many unary and zero other
predicates.
This result holds for all three of the unrestricted, the countable, and the
finite model cases.

Applications in computational learning theory (CLT), weighted graphs, and
artificial neural networks (ANN) are discussed in the context of these
decidability and undecidability results.

\end{abstract}

\section{Introduction}
In the new age of ``big data," machine learning and statistical inference have
been increasingly applied in the technology sector, and more resources than ever
before are poured into advancing our understanding of these techniques.
One approach to this end is to reconcile the inductive nature of machine
learning with the deductive discipline of logic.
Previous attempts include one by computer scientist Leslie Valiant
, the creator of the PAC (Probably Approximately
Correct) model of computational learning theory (CLT).
In \textit{Robust Logics} \cite{Valiant_robust_logics}, he tried to combine the
PAC model with a fragment of first order logic (FOL) in the context of finite
models.
The logician H. Jerome Keisler in \cite{keisler_p_quantifiers} also investigated
a variant of FOL with probabilistic quantifiers of the form $(P x
\ge r)$ meaning ``holds for $x$ in a set of measure at least $r$.''

Most recently, Terwijn and Kuyper \cite{Terwijn_intro}
invented a probability logic with a fixed error parameter, called
$\epsilon$-logic (or, in this paper, $\epsilon$E-logic), that is inspired by
features of both Valiant and Keisler's work.
This $\epsilon$E-logic uses the same syntax as FOL and differ only
in that 1) the models are given probability measures, and 2) the $\forall$
quantifier has the interpretation of ``holds for $x$ in a set of measure at
least $1 - \epsilon$.''
In particular, the $\exists$ quantifier keeps the same, non-probabilistic
interpretation as in first order logic.

Such an unusual, asymmetric definition was motivated by the key property of
$\epsilon$E-logic to be learnable through examples, in a sense related to
Valiant's PAC-learning model \cite[Thm.~2.3.3]{probability_logic}:
roughly, for any desired error bound
$\epsilon$ and an example oracle that emits elements of the universe $M$
according to a distribution $\D$, we can learn
in time polynomial in $\f 1 \epsilon$ whether $\pmd \emodels\phi$ or $\pmd
\emodels \neg \phi$.
Thus, $\epsilon$E-logic has an \textit{inductive} property, in addition to
promises of \textit{deductive} properties that would seem to carry over from
classical first order logic.

However, it turns out that deductive reasoning in $\epsilon$E-logic is
computationally much harder than first order logic in the general case.
In fact its complexity does not even reside in the arithmetic hierarchy, but
rather the analytic one.
As a result, there is no algorithm to decide (uniformly over all first order
languages) whether a given sentence is valid or satisfiable.
The following table summarizes the current knowledge on the
satisfiability and validity
\footnote{This notion of validity is over all
probability models.
It is called \textit{normal $\epsilon$E-validity} in this paper.
See definition (\ref{def_etau}) and \cite[remark before
thm~2.6]{sat_is_sigma_1_1}}
problems of $\epsilon$E-logic (defined below in (\ref{def_esat}) and
(\ref{def_etau})).
Each tuple $\la n_1, n_2, \ldots\ra$ represents that the corresponding result
requires the language to have at least $n_1$ unary predicates, at least $n_2$
binary predicates, and so on.
In particular, the value of $\omega$ means an infinite number of the
corresponding type of predicate is necessary.
The empty tuple $\la \ra$ means that any language suffices.
Finally $\la*\ra$ denotes that the set of $\epsilon$E-valid sentences coincides
with the set of valid sentences in first order logic, so any language
that admits a decision algorithm in FOL will do so in $\epsilon$E-logic as well.
As general FOL validity is $\Sigma^0_1$-complete, this means that
general $\epsilon$E-validity is also $\Sigma^0_1$-complete.

\begin{table}[h]
    \centering
    \begin{tabular}{ c|c c }
        \hline\hline
                    &	$\epsilon \in (0, 1) \cap \Q$	&	$\epsilon = 0$	\\
        \hline
        $\epsilon$E-satisfiability	&
            $\Sigma^1_1$-complete	$\la \omega, 6, 2 \ra$
                \cite[thm~7.6]{sat_is_sigma_1_1}&
                decidable	$\la\ra$	\cite[thm~6.7]{sat_is_sigma_1_1}\\
        $\epsilon$E-validity &
            $\Pi^1_1$-hard		$\la \omega, 3, 2\ra$
                   \cite[thm~4.2]{val_is_pi_1_1}&
                $\Sigma^0_1$-complete	$\la*\ra$
                    \cite[prop~3.2]{Terwijn_dec}\\
        \hline\hline
    \end{tabular}
    \caption{Current knowledge on general $\epsilon$E-satisfiability
    and $\epsilon$E-validity \cite[Table 1]{sat_is_sigma_1_1}}
    \label{tble:general_computability}
\end{table}

It is still open, however, whether a smaller fragment of $\epsilon$E-logic
--- for example, languages with a finite number of unary predicates, or
languages with a single binary predicate ---
admits an easier complexity.

Notice that the results are not symmetrical as in the case of first order logic,
where $\phi$ is valid iff $\neg \phi$ is not satisfiable.
Indeed, $\epsilon$E-logic is
\textit{paraconsistent} \cite[Prop.~2.2.1]{probability_logic}, because it's
possible for $\pmd \emodels \forall x \phi(x)$ and $\pmd \emodels \exists x \neg
\phi(x)$ to hold at the same time (for example if the set of $x$ satisfying
$\neg \phi$ has measure 0 but is not empty).

In this paper, we answer the counterpart questions for
$\epsilon$E-satisfiability by and validity over finite models and in some
cases, countable models (see
(\ref{def:finite_countable_validity_satisfiability}) for definitions).
As noted above, we cannot in general answer the satisfiability question by just
answering the validity question, nor vice versa.

In first order logic, Trachtenbrot's theorem \cite{libkin} asserts that,
perhaps counterintuitively, assessing the validity of a theorem over
only finite models is $\Pi^0_1$-hard.
Therefore we do not even have a deductive calculus for this task.

We show that other than the case of $\epsilon = 0$, Trachtenbrot's theorem
holds also for $\epsilon$E-logic.
More precisely, we will establish the following characterizations.

\begin{table}[h]
    \centering
    \begin{tabular}{ c|c c }
        \hline\hline
                    &	$\epsilon \in (0, 1) \cap \Q$	&	$\epsilon = 0$	\\
        \hline
        finite $\epsilon$E-satisfiability
            &	$\Sigma^0_1$-complete $\la \omega, 3\ra$
                (\ref{thm:Esat_sigma1_complete})
                &	decidable	$\la\ra$
                    (\ref{thm:0F-validity_decidable})\\
        finite $\epsilon$E-validity
            &$\Pi^0_1$-complete $\la \omega, 1\ra$
                (\ref{thm:Eval_Pi1_complete})
                & $\Pi^0_1$-complete $\la*\ra$
                    (\ref{thm:Eval_Pi1_complete})\\
        \hline\hline
    \end{tabular}
    \caption{Summary of results in this paper: the finite model case}
    \label{tble:finite_computability}
\end{table}

Here, the tuple notation $\la n_1, n_2, \ldots\ra$ denotes language
requirements, as in the last table, but $\la*\ra$ means that the set of
$\epsilon$E-valid sentences over \textit{finite models} coincides with the
corresponding set \textit{over finite models} with regard to FOL.

Like in the unrestricted case, it is still open whether the conditions
on signature can be significantly weakened.

Hence, other than the case of finite 0E-satisfiability, the
$\epsilon$E-satisfiability and $\epsilon$E-validity problems in the finite
model scenario are as hard as the corresponding problems in ordinary first order
logic \cite[p. 166]{libkin}.
Therefore, no general deduction mechanism exists for theorems over finite
models.

In contrast, for $\epsilon \in [0, 1)$ rational, we show that these
problems are decidable over monadic relational languages, which are
languages with only unary predicates and no function symbols or equality.
This mirrors the characterizations for the corresponding FOL fragments.

Despite these successes, the countable case remains largely unsolved.
While models of $\epsilon$E-logic have a downward Lowenheim-Skolem theorem
transforming them into equivalent continuum-sized models
\cite[Thm.~4.6]{model_theory_of_e_logic},
this theorem does not hold when ``continuum-sized'' is replaced with
``countable.'' So unlike FOL, the set of sentences $\epsilon$E-valid over all countable
models does not coincide a priori with the set of those over all models.

In the format of the previous tables, we summarize the current knowledge on
countable $\epsilon$E-validity and satisfiability in table
(\ref{tble:countable_computability}).

\begin{table}[h]
    \centering
    \begin{tabular}{ c|c c }
        \hline\hline
                    &	$\epsilon \in (0, 1) \cap \Q$	&	$\epsilon = 0$	\\
        \hline
        countable $\epsilon$E-satisfiability
            &	unknown
                &	decidable	$\la\ra$
                    (\ref{thm:0F-validity_decidable})\\
        countable $\epsilon$E-validity
            &$\Sigma^0_1$-hard $\la \omega, 1\ra$
                (\ref{cor:countabl_epsval_Sigma1_hard})
                & $\Sigma^0_1$-complete $\la*\ra$
                    (\ref{cor:countabl_0val_Sigma1_complete})\\
        \hline\hline
    \end{tabular}
    \caption{Summary of results in this paper: the countable case}
    \label{tble:countable_computability}
\end{table}

The outline of the paper is thus:
After introducing some notations and prerequisites, we review the
basic definitions of $\epsilon$E-logic, $\epsilon$-model, the validity and
satisfiability problems, and other related notions in section
(\ref{sec:probability_logics}).
We then define in section (\ref{ssec:flogic}) the dual logic,
\textit{$\epsilon$F-logic}, in whose terms we phrase many of our results.
Briefly, in $\epsilon$F-logic, the syntax is once again identical to that of
FOL, but the quantifier $\exists x$ is interpreted to mean
``there exists a set $A$ of measure $> \epsilon$ such that for each $x \in A$,
\ldots.''
Having laid out the analogue concepts over finite models,
we dive straight into examples and applications in section
(\ref{ssec:examples_and_applications}), hoping to motivate our main theorems
and future research.

There, we employ $\epsilon$E-logic and $\epsilon$F-logic to
\begin{itemize}
    \item model the \textit{approximation concept existence} assumption
    of PAC learning;
    \item develop rudimental theories of graphs with weighted vertices and
    graphs with weighted edges; and
    \item compute the \textit{linear threshold update rule} in artificial neural
    networks.
\end{itemize}

With these examples in mind, we begin our deductions.
In section (\ref{ssec:0Fval}), we prove the decidability of 0E-satisfiability
over both finite models and countable models.
In section (\ref{ssec:monadic}), we show that the satisfiability and validity of
sentences in any monadic relational language both reduce to linear programming,
and thus are decidable whenever $\epsilon \in \Q$.
During the development of the reduction, we introduce trees as semantic tools.
These ideas are extended in section (\ref{ssec:q-sentence}) to define the
semantics of \textit{q-sentences}, which generalize both $\epsilon$E- and
$\epsilon$F-logics by allowing all forms of quantifiers.
This new form of sentences allows us to rigorously state the inter-reduction
results of Kuyper and Terwijn between different rational $\epsilon$ parameters.
Equipped aptly with powerful machinery, in section (\ref{ssec:Esat})
we tackle the $\Sigma^0_1$-completeness
of the $\epsilon$E-satisfiability problem over finite models.
The hardness proof involves a painstaking encoding of the halting set in
a suitable language.
The definability proof utilizes a perturbation lemma that simplifies
satisfiability to that by models with rationally-valued distributions.
Last but not least, we show in section (\ref{ssec:eval}) that 0E-validity
coincides with FOL validity when both are restricted to finite or countable
models.
(Of course, in the countable case, the ``countably FOL valid'' sentences are
just the unrestricted valid sentences by Lowenheim-Skolem).
With Kuyper's inter-reduction theorem, we deduce that $\epsilon$E-validity
for finite models is $\Pi^0_1$-complete and that for countable models is
$\Sigma^0_1$-hard for any strong enough language.
Finally, we wrap up and mention possible future directions of research in
section (\ref{sec:future}).

\subsection{Notation and Prerequisites}

Sets of the form $\{1, 2, \ldots, n\}$ will be abbreviated $\indset n.$
\subsubsection{Logic}

Script upper case letters $\M$ and $\mathcal N$ are used to denote first order
models.
Their underlying universes are written $M$ and $N$.
All first order languages are assumed to have an at most countable signature.

$\vec x$ denotes a finite sequence of variables or parameters. $|\vec x|$
denotes the length of this sequence.
$\phi(\vec x, \vec y)$ will always represent a formula with free variables $x_1,
\ldots, x_n$ and $y_1, \ldots, y_m$, possibly with other bound variables.
In the context of a first order model $\M$,
    $$\phi(\vec x, \vec y; \vec p) =
        \phi(x_1, \ldots, x_n, y_1, \ldots, y_m; p_1, \ldots, p_n),$$
represents a formula with free variables $\vec x, \vec y$ and parameters $p_i
\in M$.
$\forall \vec x$ is the shorthand for the quantifier block $\forall x_1 \forall
x_2 \cdots \forall x_n$.
Similarly, $\exists \vec x$ is the shorthand for the quantifier block $\exists
x_1 \exists x_2 \cdots \exists x_n$.

If $\phi$ is a (formal or informal) sentence, then $\|\phi\| \in \{0, 1\}$
denotes its truth value.

In formulas, we adopt the convention that $\land$ is parsed before $\lor$ when
written without parentheses. For example,
    $$A(x, y) \land B(y, z) \lor R(z) \land x = z$$
is parsed as
    $$(A(x, y) \land B(y, z)) \lor (R(z) \land x = z).$$
Formulas of the form
    $$\phi_1 \land \phi_2 \land \cdots \land \phi_k \limplies \psi$$
are parsed as
    $$(\phi_1 \land \phi_2 \land \cdots \land \phi_k) \limplies \psi.$$

In addition, we will use square brackets [] in place of parentheses () when
doing so improves the readability.

A subset $A \sbe \N$ is called $\Sigma^0_1$ or $\Sigma^0_1$-definable if
$A$ is the range of a recursive function. A subset $A \sbe \N$ is called
$\Sigma^0_1$-hard if for every $\Sigma^0_1$ set $A'$, there is a computable
many-one reduction from $A'$ to $A$. A subset $C \sbe \N$ is called
$\Sigma^0_1$-complete if $C$ is both $\Sigma^0_1$-definable and
$\Sigma^0_1$-hard.

Dually, a subset $B \sbe \N$ is called $\Pi^0_1$ or $\Pi^0_1$-definable (resp.
$\Pi^0_1$-hard and resp. $\Pi^0_1$-complete) if its complement $\N - B$ is
$\Sigma^0_1$ (resp. $\Sigma^0_1$-hard and resp. $\Sigma^0_1$-complete).

Please refer to Soare \cite{Soare} for unexplained concepts in
computability.

\subsubsection{Measure Theory}

Let $X$ be a set. $\powerset(X)$ denotes the power set of $X$.
By \textit{a measure $\mu$ on $X$}, we mean a set function $\mu: \Aa \to [0, \infty]$
that is defined and countably additive on some $\sigma$-algebra $\Aa \sbe
\powerset(X)$. The triple $(X, \Aa, \mu)$ is called a \textit{measure space.}
Similarly, by \textit{a finitely additive measure $\mu$ on $X$}, we mean a set
function $\mu: \Aa \to [0, \infty]$ defined on some Boolean algebra $\Aa \sbe
\powerset(X)$, and $\mu$ is finitely additive on $\Aa$.
In both cases, the $\sigma$-algebra or Boolean algebra $\Aa$ on which $\mu$ is
defined is denoted $\dom \mu$.
We say that \textit{$\mu$ is everywhere defined} if $\dom \mu = \powerset(X)$.
We say that \textit{a measure $\mu$ on $X$ is
extended by $\mu'$} if $\dom \mu' \spe \dom \mu$.
The $\mu$-measure of a set of elements satisfying some condition $\phi$ is
denoted
    $$\mu(x: \phi(x)).$$
When we say ``$A$ has measure at least $1/2$,'' we implicitly assume $A$ is
first of all measurable, and then that it has measure at least $1/2$.

Primarily we will be discussing \textit{probability measures}, i.e. measures
$\mu$ on $X$ with $\mu(X) = 1$.
We use upper case script letters starting
from $\D, \mathcal E$, etc to name them.
In the contexts of probability measures, we will also use the probability
notation $\Pr_{x \in \D}[\phi(x)]$ interchangeably with $\D(x: \phi(x))$.

Please refer to Bogachev \cite{Bogachev} for unexplained concepts in measure
theory.

\subsubsection{Linear Programming}

A \textit{linear program} is a triple $L = (\vec x, E, f)$ where
\begin{itemize}
\item $\vec x$ is a set of $|\vec x| = k$ variables,
\item $E$ is a set of $|E| = n$ (weak) linear inequalities
    $$e^i: \sum_{j = 1}^{k} c_j^i x_j \ge d^i,$$
    and
\item $f$ is a linear function, called the \textit{object function},
    $$f(\vec x) = \sum_{j=1}^{x} q_j x_j.$$
\end{itemize}

In general, if assignment $\vec x = \vec a$ satisfies all inequalities $e_i$,
then we write
    $$C\vec x \ge \vec d.$$
Here $C$ is the matrix $\{c_j^i\}_{i, j}$ with row vectors
$\vec{c^i} = (c_1^i, \ldots, c_k^i)$, and
$\vec d = (d^1, \ldots, d^n)$.

Because each equality $\sum p_j x_j = r$ can be written as two weak
inequalities, we can allow $E$ to contain equations as well.

To \textit{maximize} $L$ is to find
    $$\max(L) := \max_{C\vec x \ge \vec d} f(\vec x).$$
Likewise, to \textit{minimize} $L$ is to find
    $$\min(L) := \min_{C\vec x \ge \vec d} f(\vec x).$$

$L$ is said to be \textit{feasible} if $\{\vec x: C\vec x \ge\vec d\}$ is
nonempty.
In other words, $L$ is feasible iff $\max(L) > - \infty$ iff $\min(L) < \infty$.

In this paper, we are mainly concerned with the feasibility problem of linear
programs.
As such, we conveniently identify each program $L$ with its set of inequalities
$E$.

In the \textit{arithmetic model of computation}, the arithmetic operations of
addition, multiplication, subtraction, division, and comparison are assumed to
take unit time.
It is known that maximizing, minimizing, and finding the feasibility of a linear
program is polynomial time in the arithmetic model \cite{LP_complexity_survey}.
Because all such arithmetic operations on rational numbers are decidable,

\begin{prop} \label{prop:LPrat_decidable}
The feasibility problem of linear programs with rational coefficients is
decidable.
\end{prop}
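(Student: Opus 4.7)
The plan is to bootstrap from the cited complexity result for the arithmetic model of computation. By \cite{LP_complexity_survey}, there exists an algorithm $\mathcal{A}$ (for example, the simplex, ellipsoid, or an interior-point method) that, in the arithmetic model, decides feasibility of a linear program in polynomial time. In particular, $\mathcal{A}$ halts on every input, using only finitely many unit-time arithmetic operations drawn from $\{+, -, \times, \div, <, \le, =\}$.

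Next I would verify that on rational inputs, $\mathcal{A}$ can be simulated by a Turing machine. Encode each rational coefficient as a pair of integers $(p, q)$ with $q \neq 0$. The set $\Q$ is closed under $+, -, \times, \div$ (with the convention that division by $0$ does not occur in a correctly-run feasibility algorithm, since pivots and tests can be guarded), and the ordering relations on $\Q$ are computable from the integer representations via cross-multiplication. Hence each arithmetic primitive used by $\mathcal{A}$ is individually a total computable function on $\Q$, and its output is again a rational in encoded form.

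Combining the two observations: given a linear program $L$ with rational coefficients, simulate $\mathcal{A}$ step-by-step, replacing each unit-time arithmetic operation with its corresponding Turing-machine subroutine on rationals. Because $\mathcal{A}$ performs only polynomially many such operations and each subroutine halts, the overall simulation halts and returns $\mathcal{A}$'s feasibility verdict on $L$. Thus feasibility of rational linear programs is decidable.

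The only subtlety — essentially the sole place something could go wrong — is ensuring that intermediate values remain rational and that divisions by zero are avoided; but both are guaranteed by the correctness of $\mathcal{A}$ in the arithmetic model, since a procedure that is well-defined under unit-time rational arithmetic remains well-defined under any faithful implementation of that arithmetic. The bit-complexity blow-up of the numerators and denominators is irrelevant for decidability (though it matters for bit-complexity claims, which are not asserted here).
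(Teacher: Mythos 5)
Your proposal is correct and takes essentially the same route as the paper: the paper justifies the proposition by citing the polynomial-time solvability of linear programming in the arithmetic model and then observing that all the required arithmetic operations are decidable on the rationals, which is exactly your simulation argument (you merely spell out the encoding of rationals as integer pairs and the step-by-step replacement of unit-time operations by computable subroutines). No gap.
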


We will also briefly cross path with \textit{strict linear programs}.
These are linear programs $L = (\vec x, E, f)$ where $E$ consists of strict
inequalities or equalities but with no weak inequalities.

Please refer to Schrijver \cite{schrijver} for unexplained concepts with regard
to linear programming.

\section{Probability Logics}
\label{sec:probability_logics}

Here we formalize the ideas touched in the introduction. First, we revisit the
definitions of $\epsilon$E-logic, $\epsilon$E-model, and related ideas as
defined by Terwijn and Kuyper.
Then, we introduce $\epsilon$F-logic, the dual logic of
$\epsilon$E-logic.
Finally, an abundance of examples and applications are provided to clarify these
ideas.

\subsection{E-logic}
Let
\begin{itemize}
    \item $\lL$ be a first order language, possibly containing equality, of
a countable signature;
    \item $\epsilon \in [0, 1]$;
    \item $\M$ be a first-order model with universe $M$;
    \item $\D$ be a probability measure on $M$ defined on some
    $\sigma$-algebra $\dom \D \subseteq \powerset(M)$.
\end{itemize}

\begin{defn}[$\epsilon$E-truth] \label{def:E-truth}
\footnote{adapted from \cite{sat_is_sigma_1_1}}
Let $\phixp =
\phi(x_1, \ldots, x_n; p_1, \ldots, p_n)$ represent a formula with variables
$\vec x$ and parameters $p_i \in \M$. We define the notion of
\textbf{$\epsilon$E-truth}, denoted $\pmd \emodels \phi$, inductively as follows:

\begin{enumerate}
  \item For every atomic formula $\phi(\vec x; \vec p)$:
  $$\pmd \emodels \phixp \iff  \M \models \phixp.$$
  That is, for \textit{all} tuples $(a_1, a_2, \ldots, a_n) \in \M$, $\phi(\vec
  a; \vec p)$ holds.
  \item We treat the logical connectives $\land$ and $\lor$ classically. For
  example, for $\vec x, \vec y, \vec z$ distinct sequences of variables,
  $$\pmd \emodels \phi(\vec x, \vec z; \vec p) \land \psi(\vec y, \vec z; \vec
  p) $$
  iff
  for all $\vec a \in M^{|\vec x|}, \vec b \in M^{|\vec y|}, \vec c \in M^{|\vec
  z|}$,
  $$\pmd \emodels \phi(\vec a, \vec c; \vec p) \land \psi(\vec b, \vec c; \vec
  p)$$
  \item The existential quantifier is treated classically:
  $$\pmd \emodels \exists \vec x \phi(\vec x, \vec y; \vec p)$$
  iff there exists $\vec a \in M^{|\vec x|}$ such that
  $$\pmd \emodels \phi(\vec a, \vec y; \vec p).$$
  \item The universal quantifier is interpreted probabilistically:
  $$ \pmd \emodels \forall x \phi(x, \vec y; \vec p) \iff
  \Pr_{a \sim \D}[\pmd \emodels \phi(a, \vec y; \vec p)]
  \ge 1 - \epsilon.$$
  Note that the universal quantifier in this definition binds a single variable
  $x$; in general it's \textit{not} true that
  $$ \pmd \emodels \forall \vec x \phi(\vec x, \vec y; \vec p) \iff
  \Pr_{\vec a \sim \D^{|\vec x|}}[\pmd \emodels \phi(\vec a, \vec y; \vec p)]
  \ge 1 - \epsilon.$$
  \item The case of negation is split in subcases as below:
  \begin{enumerate}
    \item For $\phi$ atomic, $\pmd \emodels \neg \phixp \iff \pmd \not
    \emodels \phixp$.
    \item $\neg$ distributes classically over $\land$ and $\lor$, e.g.
    $$\pmd \emodels \neg(\phi(\vec x, \vec z; \vec p) \land \psi(\vec y, \vec
    z; \vec p)) \iff \pmd \emodels \neg \phi(\vec x, \vec z; \vec p) \lor \neg
    \psi(\vec y, \vec z; \vec p).$$
    \item $\pmd \emodels \neg \neg \phixp \iff \pmd \emodels \phixp$.
    \item $\pmd \emodels \neg \exists x \phi(x, \vec y; \vec p) \iff \pmd
    \emodels \forall x\neg \phi(x, \vec y; \vec p)$.
    \item $\pmd \emodels \neg \forall x \phi(x, \vec y; \vec p) \iff \pmd
    \emodels \exists x \neg \phi(x, \vec y; \vec p)$.
  \end{enumerate}
  \item The implication symbol $\limplies$ reduces to boolean combinations
  classically:
  $$\pmd \emodels \phi(\vec x, \vec z; \vec p) \limplies \psi(\vec y, \vec z; \vec
  p)
  \iff
  \pmd \emodels \neg \phi(\vec x, \vec z; \vec p) \lor \psi(\vec y, \vec z;
  \vec p).$$
  \item
  The equivalence symbol $\lequiv$ reduces to the conjunction of two
  implications:
  $$\pmd \emodels \phi(\vec x, \vec z; \vec p) \lequiv \psi(\vec y, \vec z; \vec
  p)$$
  $$\text{iff}$$
  $$\pmd \emodels [\phi(\vec x, \vec z; \vec p) \limplies \psi(\vec y, \vec z;
  \vec p)] \land [\psi(\vec x, \vec z; \vec p) \limplies \phi(\vec y, \vec z;
  \vec p)]$$

\end{enumerate}

This logic system is called \textbf{$\epsilon$E-logic}.
When referring to the set of all such logics for $\epsilon \in [0, 1]$ or when
$\epsilon$ is a fixed parameter implicit in the context, we simply use the term
\textbf{E-logic}.
\end{defn}

To make sure that the $\forall$ quantifier makes sense, we need to impose
measurability conditions on definable sets. In this paper, \textit{classical
models} refer to the models used in ordinary first-order logic. They are
distinct from the concept defined here:
\begin{defn}
Let $\lL$ be a first order language of a countable signature, possibly
containing equality, and let $\epsilon \in [0, 1]$. Then an
\textbf{$\epsilon$E-model} for the language $\lL$ consists of a classical
first-order $\lL$-model $\M$ together with a probability measure $\D$ over $\M$
such that:
\begin{enumerate}
  \item For all formulas $\phi = \phi(x_1, \ldots, x_n)$ and all $a_1, \ldots,
  a_{n-1} \in \M$, the set
  $$\{a_n \in \M: \pmd \emodels \phi(a_1, \ldots, a_n)\}$$
  is $\D$-measurable. \label{definable=>measurable}
  \item All relations of arity $n$ are $\D^n$-measurable (including equality, if
  it is in $\lL$), and all functions of arity $n$ are measurable as functions
  from $(\M^n, \D^n)$ to $\pmd$. In particular, constants are $\D$-measurable.
\end{enumerate}
A \textbf{probability model} is a pair $\pmd$ that is an $\epsilon$E-model for
every $\epsilon \in [0, 1]$.
\label{def_emodel}
\end{defn}

\begin{defn}
Two $\epsilon$E-models $\pmd$ and $\pmodel N E$ are
\textbf{$\epsilon$-elementarily equivalent}, denoted by $$\pmd \equiv_\epsilon
\pmodel N E,$$ iff for every formula $\phi$, $$\pmd \emodels \phi
\iff \pmodel N E \emodels \phi.$$\label{def:model_equivalence}
\end{defn}

\begin{defn}
Two formulas $\phi$ and $\psi$ are
\textbf{$\epsilon$-equivalent}, denoted by $$\phi \equiv_\epsilon \psi$$ iff for
every $\epsilon$E-model $\pmd$, $$\pmd \emodels \phi \iff \pmd \emodels \psi.$$

$\phi$ and $\psi$ are called \textbf{(semantically) equivalent}, written

$$\phi \equiv \psi,$$
if $\phi \equiv_\epsilon \psi$ for all $\epsilon \in [0, 1]$.
\label{def:formula_equivalence}
\end{defn}

\begin{defn}
Let $\phi$ be a first order sentence. We say that $\phi$ is
    \textbf{$\epsilon$E-valid}
if for all $\epsilon$E-models $\pmd$, $\pmd \emodels \phi$.
$\phi$ is
    \textbf{normally $\epsilon$E-valid}
iff for all \textit{probability models} $\pmd$, $\pmd \emodels \phi$.
\label{def_etau}
\end{defn}

Similarly we define
\begin{defn}
A sentence $\phi$ is said to be \textbf{$\epsilon$E-satisfiable} if there exists
an \textit{$\epsilon$E-model} $\pmd$ such that $\pmd \emodels \phi$.
\label{def_esat}
\end{defn}

To distinguish between these concepts and the analogue concepts over finite
and countable $\epsilon$E-models, we also prefix these terms with
\textit{unrestricted} or postfix them with \textit{in the unrestricted case}.
For example, $\epsilon$E-valid means the same thing as
\textit{unrestricted $\epsilon$E-valid}, or as
\textit{$\epsilon$E-valid in the unrestricted case}.

Likewise, to make the distinction clear from the corresponding notions in FOL,
we say
\begin{defn}
A formula $\phi$ is \textbf{classically valid} if every first order model
satisfies $\phi$.

A formula $\phi$ is \textbf{classically satisfiable} if some first order model
satisfies $\phi$.
\end{defn}

We could define \textit{normally $\epsilon$E-satisfiability} in analogy to
normally $\epsilon$E-validity, but this concept would be
equivalent to $\epsilon$E-satisfiability \cite[Thm~2.6, Prop.~2.7]{sat_is_sigma_1_1}.

Finally, we record the following proposition which will often be implicitly
applied.

\begin{prop}[Terwijn \cite{Terwijn_intro}]
Every formula $\phi$ is semantically equivalent to a formula $\phi'$ in prenex
normal form. \label{prenex}
\end{prop}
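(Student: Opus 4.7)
The plan is to mimic the classical proof of the prenex normal form theorem: argue by induction on the structure of $\phi$, relying on a short catalogue of $\epsilon$-equivalences to shuffle connectives and quantifiers into the desired shape. Clauses (6) and (7) of Definition \ref{def:E-truth} let us first eliminate every $\limplies$ and $\lequiv$ in favour of $\neg,\land,\lor$, so we may assume $\phi$ is built from atoms using only those connectives plus $\forall,\exists$.

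Next, clauses (5a)--(5e) are already precisely the De Morgan-style rewrites one needs to push negations down to atoms: $\neg(\alpha \land \beta) \equiv \neg\alpha \lor \neg\beta$ and its dual, $\neg\neg\alpha \equiv \alpha$, $\neg\exists x\,\alpha \equiv \forall x\,\neg\alpha$, and $\neg\forall x\,\alpha \equiv \exists x\,\neg\alpha$. Iterating these yields an $\epsilon$-equivalent formula in negation normal form in which only $\land$, $\lor$, and quantifiers occur as non-atomic constructors.

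The substantive step is to pull quantifiers outward through $\land$ and $\lor$; concretely, the distribution equivalence
\[
(Qx\,\alpha) \star \beta \;\equiv\; Qx\,(\alpha \star \beta) \quad\text{whenever } x \notin \mathrm{FV}(\beta)
\]
for $Q \in \{\forall,\exists\}$ and $\star \in \{\land,\lor\}$. The case $Q=\exists$ is immediate from clause (3), since $\exists$ retains its classical meaning. For $Q=\forall$ I plan to argue directly from clause (4): because $x$ is not free in $\beta$, the truth value of $\beta$ does not vary with the dummy $a$, so the set $\{a : \pmd \emodels \alpha(a) \star \beta\}$ is either $\{a : \pmd \emodels \alpha(a)\}$, all of $M$, or $\emptyset$, depending on $\|\beta\|$ and whether $\star$ is $\land$ or $\lor$. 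A short case split on these data then shows that the $\D$-measure threshold condition on one side matches the boolean combination on the other. Measurability of the newly formed definable sets is supplied by clause (\ref{definable=>measurable}) of Definition \ref{def_emodel}, so the rewritten formulas remain bona fide in every $\epsilon$E-model.

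The piece I expect to be most delicate is the bookkeeping around \emph{variable capture}: to apply the distribution rule one must first $\alpha$-rename bound variables of $\alpha$ that clash with free variables of $\beta$, which requires a bound-variable renaming lemma for $\epsilon$E-logic. Such a lemma should fall out of a routine parallel induction on formulas, the only nonclassical clause being (4), in which the renaming touches only the integration dummy and hence leaves the $\D$-probability unchanged. Once the renaming lemma and the distribution equivalences are both in hand, the outer induction closes and delivers a prenex $\phi'$ with $\phi \equiv \phi'$ in the sense of Definition \ref{def:formula_equivalence}.
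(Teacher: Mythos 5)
Your overall strategy is sound, and it is essentially the only proof on offer: the paper itself gives no argument for Proposition \ref{prenex} (it is quoted from Terwijn), and the standard route is exactly the one you describe --- eliminate $\limplies$ and $\lequiv$ via clauses (6)--(7), push $\neg$ inward via clauses (5a)--(5e) (which in this logic are definitional rather than theorems, so the negation normal form step is just unfolding Definition \ref{def:E-truth}), then extract quantifiers by the distribution laws after a bound-variable renaming lemma. The $\exists$ cases and the $\forall$-over-$\lor$ case go through for every $\epsilon$ exactly as you say, using that every $\epsilon$-model has nonempty universe, and your remark that measurability of the rewritten definable sets is already guaranteed is correct, since Definition \ref{def_emodel} imposes measurability for \emph{all} formulas at once.

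The one place your ``short case split'' does not close is the pair $Q=\forall$, $\star=\land$ at the endpoint $\epsilon=1$. When $\pmd\not\emodels\beta$, the set $\{a: \pmd\emodels \alpha(a)\land\beta\}$ is empty, and the threshold condition $\D(\emptyset)\ge 1-\epsilon$ is satisfied precisely when $\epsilon=1$; so $\forall x(\alpha\land\beta)$ is $1$E-true while $(\forall x\,\alpha)\land\beta$ is $1$E-false. This is not merely a defect of the rewrite rule: at $\epsilon=1$ every prenex sentence containing a $\forall$ is true in every model, and every prenex sentence without one has a $\D$-independent truth value, whereas $(\forall x\,Q(x))\land\neg P(c)$ is falsifiable at $\epsilon=1$ yet $\D$-dependent at $\epsilon=\tfrac{1}{2}$. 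Hence no single prenex $\phi'$ can be $\equiv_\epsilon$-equivalent to it simultaneously for all $\epsilon\in[0,1]$, as Definition \ref{def:formula_equivalence} literally demands. You should therefore either carry the standing restriction $\epsilon\in[0,1)$ (which is all this paper ever uses) or prove the statement as a family indexed by a fixed $\epsilon$, disposing of the degenerate case $\epsilon=1$ separately by replacing each subformula $\forall x\,\alpha$ with a trivially true one. With that caveat recorded, the rest of your argument, including the renaming lemma, is routine and correct.
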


%
%
%
%

\subsection{The Dual Logic, F-logic}
\label{ssec:flogic}

\begin{defn}
Let $\lL$ be a countable first order language, possibly containing equality.
Let $\Phi = \Phi(x_1, \ldots, x_n)$ be a first order
formula in the language $\lL$, and let $\epsilon \in [0, 1]$. If $\pmd$ is an
$\epsilon$E-model, then we define \textbf{$\epsilon$F-truth}, written $\pmd
\fmodels \Phi$, by
$$\pmd \fmodels \Phi \iff \pmd \not \emodels \neg \Phi.$$

We call the logic under $\fmodels$ \textbf{$\epsilon$F-logic}.
\end{defn}

Suppose for quantifiers $\varq_1, \varq_2, \ldots, \varq_n \in \{\forall,
\exists\}$ and a quantifier free formula $\psi$,
$$\Phi(\vec y; \vec p) := \varq_1 x_1 \varq_2 x_2 \cdots \varq_n x_n
\psi(x_1, x_2, \ldots, x_n, \vec y; \vec p).$$
Then if $\varq'_i$ denotes the dual quantifier of $\varq_i$
(interchange $\exists$ with $\forall$),
$$\neg\Phi(\vec y; \vec p) \equiv_\epsilon \varq'_1 x_1
\varq'_2 x_2 \cdots \varq'_n x_n \neg\psi(x_1, x_2, \ldots, x_n, \vec y;
\vec p).$$

For example, if $\varq_i$ is $\forall$ for odd $i$ and $\exists$ for even $i$,
and $n$ is odd, then $\pmd \not \emodels \neg \Phi$ iff
\begin{quote}
for all $x_1$,

there exists a set of $x_2$ with measure strictly greater than $\epsilon$ such
that,

for all $x_3$,

$\vdots$

for all $x_n$,

$\psi(x_1, x_2, \ldots, x_n)$ holds.
\end{quote}

With this remark, it's easy to see that $\epsilon$F-logic is the \textit{dual
logic} of $\epsilon$E-logic, in that the quantifier $\forall$ is interpreted
classically, while the quantifier $\exists$ is interpreted as ``with measure
strictly greater than $\epsilon$.''

More formally and in parallel with the inductive definition given for E-logic, we
can write

\begin{enumerate}
  \item For every atomic formula $\phi(\vec x; \vec p)$:
  $$\pmd \fmodels \phixp \iff  \M \models \phixp.$$
  That is, for \textit{all} tuples $(a_1, a_2, \ldots, a_n) \in \M$, $\phi(\vec
  a; \vec p)$ holds.
  \item We treat the logical connectives $\land$ and $\lor$ classically. For
  example, for $\vec x, \vec y, \vec z$ distinct sequences of variables,
  $$\pmd \fmodels \phi(\vec x, \vec z; \vec p) \land \psi(\vec y, \vec z; \vec
  p) $$
  iff
  for all $\vec a \in M^{|\vec x|}, \vec b \in M^{|\vec y|}, \vec c \in M^{|\vec
  z|}$,
  $$\pmd \fmodels \phi(\vec a, \vec c; \vec p) \land \psi(\vec b, \vec c; \vec
  p)$$
  \item The universal quantifier is treated classically:
  $$\pmd \fmodels \forall \vec x \phi(\vec x, \vec y; \vec p)$$
  iff for all $\vec a \in M^{|\vec x|}$
  $$\pmd \fmodels \phi(\vec a, \vec y; \vec p).$$
  \item The existential quantifier is interpreted probabilistically:
  $$\pmd \fmodels \exists x \phi(x, \vec y; \vec p) \iff
  \Pr_{a \sim \D}[\pmd \fmodels \phi(a, \vec y; \vec p)]
  > \epsilon.$$
  Note that the existential quantifier in this definition binds a single
  variable $x$; in general it's \textit{not} true that
  $$ \pmd \fmodels \exists \vec x \phi(\vec x, \vec y; \vec p) \iff
  \Pr_{\vec a \sim \D^{|\vec x|}}[\pmd \fmodels \phi(\vec a, \vec y; \vec p)]
  > \epsilon.$$
  \item The case of negation is split in subcases as below:
  \begin{enumerate}
    \item For $\phi$ atomic, $\pmd \fmodels \neg \phixp \iff \pmd \not
    \fmodels \phixp$.
    \item $\neg$ distributes classically over $\land$ and $\lor$, e.g.
    $$\pmd \fmodels \neg(\phi(\vec x, \vec z; \vec p) \land \psi(\vec y, \vec
    z; \vec p)) \iff \pmd \fmodels \neg \phi(\vec x, \vec z; \vec p) \lor \neg
    \psi(\vec y, \vec z; \vec p).$$
    \item $\pmd \fmodels \neg \neg \phixp \iff \pmd \fmodels \phixp$.
    \item $\pmd \fmodels \neg \exists x \phi(x, \vec y; \vec p) \iff \pmd
    \fmodels \forall x\neg \phi(x, \vec y; \vec p)$.
    \item $\pmd \fmodels \neg \forall x \phi(x, \vec y; \vec p) \iff \pmd
    \fmodels \exists x \neg \phi(x, \vec y; \vec p)$.
  \end{enumerate}
  \item The implication symbol $\to$ reduces to boolean combinations
  classically:
  $$\pmd \fmodels \phi(\vec x, \vec z; \vec p) \limplies \psi(\vec y, \vec z; \vec
  p)
  \iff
  \pmd \fmodels \neg \phi(\vec x, \vec z; \vec p) \lor \psi(\vec y, \vec z;
  \vec p).$$
  \item
  The equivalence symbol $\lequiv$ reduces to the conjunction of two
  implications:
  $$\pmd \fmodels \phi(\vec x, \vec z; \vec p) \lequiv \psi(\vec y, \vec z; \vec
  p)$$
  $$\text{iff}$$
  $$\pmd \fmodels [\phi(\vec x, \vec z; \vec p) \limplies \psi(\vec y, \vec z;
  \vec p)] \land [\psi(\vec x, \vec z; \vec p) \limplies \phi(\vec y, \vec z;
  \vec p)]$$
\end{enumerate}

We can similarly define \textit{$\epsilon$F-models} for F-logic by replacing
condition \ref{definable=>measurable} of definition (\ref{def_emodel}) with
\begin{itemize}
  \item For all formulas $\phi = \phi(x_1, \ldots, x_n)$ and all $a_1, \ldots,
  a_{n-1} \in \M$, the set
  $$\{a_n \in \M: \pmd \fmodels \phi(a_1, \ldots, a_n)\}$$
  is $\D$-measurable.
\end{itemize}

However, note that
$$\{a_n \in \M: \pmd \fmodels \phi(a_1, \ldots, a_n)\} =
M - \{a_n \in \M: \pmd \emodels \neg\phi(a_1, \ldots, a_n)\}$$
and thus, \textit{$\pmd$ is an $\epsilon$E-model iff it's also an
$\epsilon$F-model}. Henceforward we will uniformly adopt the term
\textbf{$\epsilon$-model} for this use case.

Similarly, it should be clear from definitions (\ref{def:model_equivalence}) and
(\ref{def:formula_equivalence}) that $\pmd \equiv_\epsilon \pmodel N E$ iff
$$ \pmd \fmodels \phi \iff \pmodel N E \fmodels \phi,$$
and $\phi \equiv_\epsilon \psi$ iff for every $\epsilon$-model $\pmd$,
$$\pmd \fmodels \phi \iff \pmd \fmodels \psi.$$

\textbf{$\epsilon$F-validity} and \textbf{$\epsilon$F-satisfiability} (along
with their synonyms) are defined similar to definitions (\ref{def_etau}) and
(\ref{def_esat}).
Due to duality, we have that

\begin{prop}
\newcommand{\sava}{\bigcirc}
$\phi$ is $\epsilon$F-valid iff $\neg \phi$ is $\epsilon$E-satisfiable.
In general, $\phi$ is $\epsilon$X-$\sava$ iff $\neg \phi$ is
not $\epsilon$Y-$\sava'$, where (X, Y) is a permutation of
\{E, F\}, and $(\sava, \sava')$ is a permutation of
\{valid, satisfiable\}.
\label{fereduction}
\end{prop}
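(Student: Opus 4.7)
The plan is to reduce the proposition to direct manipulation of the defining equivalence $\pmd \fmodels \Phi \iff \pmd \not\emodels \neg\Phi$, together with the basic observation established just before the proposition that an $\epsilon$E-model and an $\epsilon$F-model are the same object (an $\epsilon$-model). This last observation is essential, because otherwise the quantifier ``for all models $\pmd$'' in the definitions of E-validity and F-validity might range over different classes.

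First I would establish the symmetric companion equivalence $\pmd \emodels \Phi \iff \pmd \not\fmodels \neg\Phi$. This follows by substituting $\neg\Phi$ for $\Phi$ in the defining equation to get $\pmd \fmodels \neg\Phi \iff \pmd \not\emodels \neg\neg\Phi$, and then invoking clause 5(c) of definition \ref{def:E-truth} so that $\pmd \emodels \neg\neg\Phi \iff \pmd \emodels \Phi$. Negating both sides then yields the desired form. At this point the two logics stand on fully symmetric footing with respect to negation.

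Next I would unwind the four combinations in parallel. For the headline statement, $\phi$ is $\epsilon$F-valid iff for every $\epsilon$-model $\pmd$, $\pmd \fmodels \phi$, iff for every $\epsilon$-model $\pmd$, $\pmd \not\emodels \neg\phi$, iff \emph{no} $\epsilon$-model $\epsilon$E-satisfies $\neg\phi$, which is precisely the assertion that $\neg\phi$ is not $\epsilon$E-satisfiable. (This reading resolves the apparent mismatch with the ``in general'' sentence that follows.) The other three cases are symmetric: $\epsilon$E-validity of $\phi$ unpacks via the companion equivalence to non-$\epsilon$F-satisfiability of $\neg\phi$, and the two satisfiability statements unpack by swapping the $\forall$ with an $\exists$ over $\epsilon$-models in the argument above.

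There is no real obstacle here; the whole proposition is essentially a bookkeeping exercise once the symmetric equivalence is in hand. The only place where one might slip is to forget that the definition of $\fmodels$ is formally given \emph{only} by $\pmd \fmodels \Phi \iff \pmd \not\emodels \neg\Phi$, so the inductive clauses 1--7 listed for F-logic must be viewed as consequences rather than primitive, and the companion equivalence must be derived rather than assumed. Once this is clear, the proof is a one-line chain of biconditionals for each of the four cases.
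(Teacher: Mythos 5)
Your proposal is correct and is exactly the routine unpacking that the paper leaves implicit: the paper offers no proof beyond the phrase ``due to duality,'' and your chain of biconditionals (defining equivalence, double-negation clause 5(c) to get the companion equivalence $\pmd \emodels \Phi \iff \pmd \not\fmodels \neg\Phi$, the observation that $\epsilon$E- and $\epsilon$F-models coincide, then quantifier bookkeeping over $\epsilon$-models) is precisely what that duality amounts to. You are also right that the headline sentence of the proposition is missing a ``not'' relative to the ``in general'' clause, and your reading (validity of $\phi$ corresponds to \emph{non}-satisfiability of $\neg\phi$ in the dual logic) is the one consistent with the rest of the paper, e.g.\ with Table \ref{tble:F_general_computability} and Corollary \ref{thm:Esat_sigma1_complete}.
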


Table (\ref{tble:F_general_computability}) states the dual version of
table (\ref{tble:general_computability}).

\begin{table}[h]
    \centering
    \begin{tabular}{ c|c c }
        \hline\hline
                    &	$\epsilon \in (0, 1) \cap \Q$	&	$\epsilon = 0$	\\
        \hline
        $\epsilon$F-validity	&
            $\Pi^1_1$-complete	$\la \omega, 6, 2 \ra$
                \cite[thm~7.6]{sat_is_sigma_1_1}&
                decidable	$\la\ra$	\cite[thm~6.7]{sat_is_sigma_1_1}\\
        $\epsilon$F-satisfiability &
            $\Sigma^1_1$-hard		$\la \omega, 3, 2\ra$
                   \cite[thm~4.2]{val_is_pi_1_1}&
                $\Pi^0_1$-complete	$\la*\ra$
                    \cite[prop~3.2]{Terwijn_dec}\\
        \hline\hline
    \end{tabular}
    \caption{Summary of current knowledge on general $\epsilon$F-satisfiability
    and $\epsilon$F-validity}
    \label{tble:F_general_computability}
\end{table}
%


\subsection{Finite and Countable Concepts}

\begin{defn}
An $\epsilon$-model $\pmd$ is \textbf{finite} iff $|\M|$ is finite.
Similarly,
an $\epsilon$-model $\pmd$ is \textbf{countable} iff $|\M| \le \aleph_0$.
\end{defn}

Kuyper and Terwijin ahve shown well-formed model-theoretic properties exist for
E-logic (and by duality, F-logic). For example, a downward Lowenheim-Skolem
theorem always allows one to work on a model of continuum size
\cite[Thm.~4.6]{model_theory_of_e_logic}, and a variant of ultrapower construction
works for a weakened definition of $\epsilon$-models
\cite[Sec.~8]{model_theory_of_e_logic}. But they are futile in the finite setting
for the same reason that their classical counterparts do not work in classical
finite model theory.

Despite this difficulty, finite and countable $\epsilon$-models are still very
conducive to analysis because from them we automatically get (finite
and countable) probability models.

\newcommand{\pmdd}{\pmodel M {D'}}
\begin{lma}
Let $\pmd$ be an $\epsilon$-model. If $\D'$ extends $\D$, then $\pmdd
\equiv_\epsilon \pmd$. \label{lma:measure_extension_implies_equivalence}
\end{lma}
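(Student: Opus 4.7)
The plan is to prove the equivalence by induction on the complexity of $\phi$, carrying along the auxiliary invariant that the definable sets are measurable and of equal measure under both $\D$ and $\D'$. The atomic case follows immediately from clause (1) of Definition~\ref{def:E-truth}, since truth there depends only on the underlying classical model $\M$, which is shared by $\pmd$ and $\pmdd$. The cases of $\land$, $\lor$, $\neg$, $\limplies$, $\lequiv$, and the (classical) $\exists$ quantifier are also immediate from the induction hypothesis because none of them refer to the measure.

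The only substantive case is $\forall x\, \phi(x, \vec y; \vec p)$. Fix $\vec y = \vec b$ and parameters $\vec p$, and set
$$S := \{a \in M : \pmd \emodels \phi(a, \vec b; \vec p)\}.$$
Since $\pmd$ is an $\epsilon$-model, condition (1) of Definition~\ref{def_emodel} gives $S \in \dom \D$. By the induction hypothesis applied pointwise, $S = \{a \in M : \pmdd \emodels \phi(a, \vec b; \vec p)\}$. Because $\D'$ extends $\D$, we have $S \in \dom \D'$ and $\D'(S) = \D(S)$. Hence
$$\D(S) \ge 1 - \epsilon \iff \D'(S) \ge 1 - \epsilon,$$
which is precisely the clause (4) equivalence $\pmd \emodels \forall x\, \phi \iff \pmdd \emodels \forall x\, \phi$.

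A small prerequisite is to check that $\pmdd$ is itself an $\epsilon$-model, so that the right-hand assertion makes sense. Condition (1) of Definition~\ref{def_emodel} for $\pmdd$ follows from the same observation already used: by the induction, the definable set of $\phi$ relative to $\pmdd$ coincides with the one relative to $\pmd$, which lies in $\dom \D \subseteq \dom \D'$. Condition (2) follows from the standard measure-theoretic fact that the product measure $(\D')^n$ extends $\D^n$, so any $\D^n$-measurable relation or function remains measurable under $(\D')^n$.

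The only potential obstacle is bookkeeping: the measurability clause of $\epsilon$-model-hood and the semantic equivalence must be proved simultaneously, since the universal-quantifier case needs both the inductive truth equivalence (to identify the two definable sets) and the measurability of that common set (to make the measure comparison meaningful). Once this joint induction is set up, the argument at every step is a one-line observation.
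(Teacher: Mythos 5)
Your proof is correct and follows essentially the same route as the paper's: induction on formula complexity, with all cases except $\forall$ trivial because they do not involve the measure, and the $\forall$ case handled by noting that $\D'$ agrees with $\D$ on the (definable, hence $\dom\D$-measurable) set in question. Your extra care in verifying that $\pmdd$ is itself an $\epsilon$-model is a reasonable bookkeeping addition that the paper leaves implicit, but it does not change the argument.
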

\begin{proof}
By induction on formula complexity, we show $\pmodel M {D'} \emodels \phi \iff
\pmd \emodels \phi$. All cases other than $\forall$ are trivial, as they don't
involve measures.

For $\phi(\vec x; \vec p) = \forall y \psi(y, \vec x; \vec p)$, we have
\begin{align*}
& \pmd \emodels \phixp\\
\iff& \D(a \in \M: \pmd \emodels \psi(a, \vec x; \vec p)) \ge 1 - \epsilon\\
\iff& \D'(a \in \M: \pmd \emodels \psi(a, \vec x; \vec p)) \ge 1 - \epsilon\\
\iff & \pmdd \emodels \phixp
\end{align*}
where the middle equivalence derives from the fact that $\D'$ agrees with $\D$
on $\dom D$.
\end{proof}

\begin{lma}[Tarski \cite{birkhoff}]
Every finitely additive measure $\D$ on a set $X$ can be extended to a finitely
additive measure $\D'$ so that $\dom \D' = \powerset(X)$.
\end{lma}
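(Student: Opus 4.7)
The plan is to apply Zorn's lemma to the poset $P$ of all finitely additive extensions of $\D$, ordered by extension. $P$ is nonempty since $\D \in P$, and any chain $\{\D_i\}_{i \in I}$ has an upper bound: take the union $\bigcup_i \dom \D_i$, which is a Boolean algebra because any finite family of sets lies in a common $\dom \D_i$, and define the upper-bound measure by transporting values along the chain. Finite additivity on the union follows from finite additivity of each $\D_i$ together with the observation that any additivity check uses only finitely many sets and hence lives inside some single $\dom \D_i$. Zorn's lemma thus supplies a maximal extension $\mu$ of $\D$.

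The content of the theorem is to show $\dom \mu = \powerset(X)$. Assume for contradiction that some $E \sbe X$ fails to lie in $\dom \mu$. I would then construct a strict extension of $\mu$ to the Boolean algebra $\Aa'$ generated by $\dom \mu \cup \{E\}$, contradicting maximality. Define the inner and outer measures
$$\mu_*(E) = \sup\{\mu(A) : A \in \dom \mu,\ A \sbe E\}, \qquad \mu^*(E) = \inf\{\mu(A) : A \in \dom \mu,\ A \spe E\},$$
which are finite whenever $\mu(X) < \infty$; monotonicity of $\mu$ gives $\mu_*(E) \le \mu^*(E)$, so the closed interval is nonempty and one may pick any $\alpha \in [\mu_*(E), \mu^*(E)]$. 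Every element of $\Aa'$ has the form $(A \cap E) \cup (B \setminus E)$ for $A, B \in \dom \mu$, so the extension is determined once we declare $\mu(E) := \alpha$ and insist that $\mu(A \cap E) + \mu(A \setminus E) = \mu(A)$ for every $A \in \dom \mu$.

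The key step, and the main obstacle, is to verify that this prescription is well-defined and finitely additive on $\Aa'$. Well-definedness --- that two representations $(A_1 \cap E) \cup (B_1 \setminus E) = (A_2 \cap E) \cup (B_2 \setminus E)$ receive the same assigned value --- together with finite additivity of the candidate extension both reduce to the constraint $\mu_*(E) \le \alpha \le \mu^*(E)$, which exactly ensures that the inequalities $\mu(A) \le \alpha$ forced by $A \sbe E$ and $\alpha \le \mu(A)$ forced by $A \spe E$ are mutually consistent. Once this bookkeeping is carried out, the extended set function is a proper enlargement of $\mu$, contradicting maximality; hence $\dom \mu = \powerset(X)$. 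The case of an unbounded $\D$ reduces to the bounded case by partitioning $X$ into pieces of finite $\D$-measure and handling each separately, but in every application in this paper $\D$ is a probability measure, so boundedness is automatic and the above argument suffices directly.
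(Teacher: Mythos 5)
The paper does not prove this lemma at all --- it is quoted from the literature (Tarski, cited via Birkhoff) --- so there is no in-paper argument to compare against; your proposal has to stand on its own. Your overall strategy, Zorn's lemma on the poset of finitely additive extensions followed by a one-step extension of a maximal element $\mu$ to the algebra generated by $\dom\mu \cup \{E\}$, is the standard route, and the Zorn step (union of a chain of Boolean algebras is a Boolean algebra, additivity checks are finitary) is fine.

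The gap is in the one-step extension, which is where the entire content of the theorem lives. Declaring $\mu'(E) := \alpha$ and imposing $\mu'(A \cap E) + \mu'(A \setminus E) = \mu(A)$ does \emph{not} determine $\mu'$ on the generated algebra $\Aa'$: these constraints fix only the sums $\mu'(A\cap E)+\mu'(A\setminus E)$, never the individual values $\mu'(A \cap E)$. For instance, with $X = \{1,2,3,4\}$, $\dom\mu$ the algebra generated by $\{1,3\}$, and $E = \{1,2\}$, the value $\mu'(\{1\})$ is constrained only through $\mu'(\{1\})+\mu'(\{3\}) = \mu(\{1,3\})$ and $\mu'(\{1\})+\mu'(\{2\}) = \alpha$, which leave it undetermined. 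Consequently the claim that well-definedness and finite additivity ``both reduce to the constraint $\mu_*(E) \le \alpha \le \mu^*(E)$'' is not a proof: that constraint is necessary, but you have not actually exhibited a set function on $\Aa'$ about which anything could be verified. The standard repair is to define the extension explicitly, e.g.
$$\mu'\bigl((A\cap E)\cup(B\setminus E)\bigr) := \inf\{\mu(U): U \in \dom\mu,\ U \spe A \cap E\} \;+\; \sup\{\mu(V): V \in \dom\mu,\ V \sbe B\setminus E\},$$
and then check that this is well defined and finitely additive using subadditivity of the relative outer measure on the $E$-part and superadditivity of the relative inner measure on the complement; this produces the extension with $\alpha = \mu^*(E)$ (swapping $\inf$ and $\sup$ gives $\alpha = \mu_*(E)$, and convex combinations give every intermediate $\alpha$, though any single choice suffices for the present lemma). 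Until such a formula is written down and its additivity verified, the contradiction with maximality of $\mu$ has not been established.
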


The above two lemmas allow us to ``complete" $\epsilon$-models in the following
sense:

\begin{prop}\label{finite_model_extension}
Let $\pmd$ be any finite or countable $\epsilon$-model. $\D$ can be extended to
a measure $\D'$ with $\dom \D' = \powerset(M)$. Therefore, $\pmodel M {D'}$ is a
probability model and by (\ref{lma:measure_extension_implies_equivalence}),
    $$\pmd \equiv_\epsilon \pmodel M {D'}.$$
\end{prop}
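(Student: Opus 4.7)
The plan is to derive the proposition as a near-immediate consequence of Tarski's lemma together with Lemma~(\ref{lma:measure_extension_implies_equivalence}). Once an extension $\D'$ of $\D$ with $\dom \D' = \powerset(M)$ has been produced as a genuine probability measure, the model $\pmodel M {D'}$ is automatically a probability model: every subset of $M$ is $\D'$-measurable, so in particular every formula-defined set, every relation, and every function preimage is trivially measurable. Lemma~(\ref{lma:measure_extension_implies_equivalence}) then gives $\pmd \equiv_\epsilon \pmodel M {D'}$ for every $\epsilon$. The real task is thus to construct $\D'$.

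For finite $M$, I would apply Tarski's lemma directly to obtain a finitely additive extension $\D'$ on $\powerset(M)$. Since every $\sigma$-algebra over a finite set is finite, finite additivity coincides with countable additivity, and $\D'$ is a genuine probability measure with no further work.

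For countable $M$ there is a real subtlety: Tarski's lemma alone only produces a finitely additive extension, which need not be countably additive. I would instead construct $\D'$ from the atom structure of $\dom \D$. Put $a \sim b$ iff $a$ and $b$ belong to exactly the same sets of $\dom \D$, and let the $\sim$-classes be the atoms. The key observation, which genuinely uses countability of $M$, is that each atom $E$ lies in $\dom \D$: for every $b \in M \setminus E$ pick a separator $C_b \in \dom \D$ with $E \sbe C_b$ and $b \notin C_b$, and then $E = \bigcap_{b \in M \setminus E} C_b$ is a countable intersection inside the $\sigma$-algebra. The atoms thus form a countable partition of $M$ by $\dom \D$-sets. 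I would then pick a representative $\tilde a_E$ from each atom $E$, set $\D'(\{\tilde a_E\}) := \D(E)$ and $\D'(\{a\}) := 0$ for all other $a \in E$, and extend by $\D'(A) := \sum_{a \in A} \D'(\{a\})$. This is a purely atomic, countably additive probability measure on $\powerset(M)$, and it agrees with $\D$ on each set of $\dom \D$ because every such set is a disjoint union of atoms, so countable additivity of the original $\D$ gives $\D'(B) = \sum_{E \sbe B} \D(E) = \D(B)$.

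The main obstacle is precisely this countable case: upgrading from the finite additivity that Tarski supplies to the countable additivity required by the definition of ``measure.'' The atom-by-atom construction sidesteps the issue by producing a countably additive measure directly and then verifying agreement with $\D$ on $\dom \D$; from that point the two preceding lemmas close the argument immediately.
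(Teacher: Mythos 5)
Your proof is correct, and in the countable case it takes a genuinely different (and arguably cleaner) route than the paper's. Both arguments agree on the finite case (Tarski plus the observation that finite additivity equals countable additivity on a finite set) and both correctly identify that Tarski's lemma alone is insufficient for countable $M$. For the countable case the paper works with \emph{measure-theoretic} atoms: it asserts that $\D$ cannot be atomless (else $M$ would be uncountable), decomposes $M$ into countably many measure atoms plus a null set, and then, for each atom viewed as a $0$--$1$ measure, runs a proof by contradiction --- if no Dirac measure $I_x$ extended $\D$, each $x$ would lie in some null $T_x \in \dom\D$, and countable subadditivity applied to $\bigcup_x T_x = M$ would force $\D(M)=0$. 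You instead use the atoms of the \emph{$\sigma$-algebra} (equivalence classes of points lying in exactly the same sets of $\dom\D$), prove each class is measurable as a countable intersection of separating sets, and concentrate each class's mass on a chosen representative; agreement with $\D$ on $\dom\D$ then follows because every set in $\dom\D$ is a countable disjoint union of classes. Your construction buys two things: it avoids the contradiction argument entirely, and it supplies the justification --- which the paper leaves implicit --- for why a countably additive probability measure on a countable set cannot be atomless (some equivalence class must have positive measure, and such a class is automatically a measure atom). The paper's version is slightly more economical in that it never needs to verify that the equivalence classes are measurable, but the two proofs land on the same purely atomic extension.
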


\begin{proof}
The case of finite $\epsilon$-models $\pmd$ follows directly from the
lemma since $\D$ is countably additive iff $\D$ is finitely additive.

For the case of countable models $\pmd$, notice that $\D$ cannot be atomless, or
else $M$ would have to be uncountable. Suppose $a_0 \subseteq M$ is an atom. If $M -
a_0$ is not null, then the restriction of $\D$ on $M - a_0$ by the same
reasoning must not be atomless, and so there is an atom $a_1 \subseteq M - a_0$.
By induction, $M$ can be expressed as the disjoint union of at most countable
number of atoms and a null set. Hence it suffices to show that (by assuming $M$
is an atom itself) $\D$ extends to $\powerset(M)$ when $\D$ is a 0-1 measure.

Suppose not. Then every measure defined on all of $\powerset(M)$ is inconsistent
with $\D$.
In particular, the measure $I_x$ concentrating measure 1 on an element $x \in M$
cannot be a extension of $\D$, and that can happen only if there is
$T_x \in \D$ with measure 0 but contains $x$. Thus there
is a countable set $\{T_x: x \in M\}$ satisfying $\D(T_x) = 0$ and $T_x \ni x$.
But by countable subadditivity $0 = \sum_x \D(T_x) \ge \D(M) = 1$, which is a
contradiction.
\end{proof}

Because every finite or countable $\epsilon$-model $\pmd$ can be taken to have
$\D$ everywhere defined on $\powerset(M)$, we treat $\D$ as a point function on
$M$. It then makes sense to speak of $\D(a)$ for $a \in M$ and in particular,
elements of measure zero, or \textit{null elements}.

Finally, we define the main objects of study in this paper.

\begin{defn}
For X = F or E:

A sentence $\phi$ is said to be {\bf finitely $\epsilon$X-valid} and
is called a {\it finite $\epsilon$X-validity} (resp. {\bf countably
$\epsilon$X-valid} and {\it countable $\epsilon$X-validity}) if $\phi$ is
$\epsilon$X-satisfied by all finite (resp. countable) $\epsilon$-models.

Likewise,
a sentence $\phi$ is said to be {\bf finitely $\epsilon$X-satisfiable} and is
called a {\it finite $\epsilon$X-satisfiable}
(resp. {\bf countably $\epsilon$X-satisfiable} and {\it countable
$\epsilon$X-satisfiable}) if $\phi$ is $\epsilon$X-satisfied by some finite
(resp. countable) $\epsilon$-models.
\label{def:finite_countable_validity_satisfiability}
\end{defn}

In addition, to distinguish between these concepts and the similar concepts in
FOL, we say that
a formula $\phi$ is {\it finitely classically valid} if $\phi$ is satisfied
by all finite first order models;
a formula $\phi$ is {\it finitely classically satisfiable} if $\phi$ is
satisfied by some finite first order model.

Note that by the Lowenheim-Skolem theorem, what would be the concept of
``countably classically valid'' coincides with unrestricted validity in first
order logic.
This is not true for $\epsilon$E- or $\epsilon$F-logic unconditionally.
Terwijn and Kuyper provide a counterexample in
\cite[exmp~4.5]{model_theory_of_e_logic}.

\subsection{Examples and Applications}
\label{ssec:examples_and_applications}

Here the goal is two-fold:
1) we clarify concepts developed in the previous sections through examples,
and 2) we also note possible applications of $\epsilon$E- and $\epsilon$F-logic,
in part to motivate the main theorems.
For the second point, we feel it is illuminating to mention results in later
sections.
Readers are encouraged to check that these anachronism are correctly applied
after perusing their respective expositions.

We first exhibit some examples that highlight the difference in semantics
between classical first order logic and our $\epsilon$E- and $\epsilon$F-logics.

\begin{exmp}
Let $\phi := \exists x \forall y [x = y]$ where $=$ is true equality.
Classically, a model $\M \models \phi$ iff $M$ is a singleton.
This also holds in $\epsilon$F-logic for all $\epsilon < 1$.
In $\epsilon$E-logic, $\pmd \emodels \phi$ iff there is a singleton subset
$\{a\} \subseteq M$ such that $\D(\{a\}) \ge 1 - \epsilon$.
\end{exmp}

\begin{exmp}
Let $\phi := \forall x \exists y [x = y]$ where $=$ is true equality.
Classically, $\phi$ holds in every nonempty model.
This is true also for $\epsilon$E-logic.
But $\pmd \fmodels \phi$ iff every element of $M$ has $\D$-measure greater than
$\epsilon$.
In particular, when $\epsilon = \f 1 n$, $M$ must have less than $n$ elements;
when $\epsilon = 0$, $M$ is at most countable.
\end{exmp}

\begin{exmp}
Let $\phi := \exists x \forall y [x \not= y]$ where $=$ is true equality.
$\phi$ is a contradiction in classical first order logic and in
$\epsilon$F-logic,
but $\pmd \emodels \phi$ iff there is a singleton subset $\{a \} \sbe M$ with
$\D$-measure less than $\epsilon$.
\end{exmp}

\begin{exmp}
Let $\phi := \forall x \exists y [x \not= y]$ where $=$ is true equality.
Classically and in $\epsilon$E-logic, $\M \models \phi$ iff $|M| \ge 2$.
In $\epsilon$F-logic, $\pmd \fmodels \phi$ iff every singleton subset of $M$ has
$\D$-measure at most $1 - \epsilon$.
\end{exmp}

\begin{exmp}
Let $\psi(x)$ be a formula with a single free variable.
In first order logic,
    $$\psi(x),\ \forall y [ x = y \limplies \psi(y)],\ \exists y [ x = y \land
    \psi(y)]$$
are equivalent formulas.

In $\epsilon$E-logic, for any parameter $a \in M$,
    $$\pmd \emodels \psi(a) \iff \pmd \emodels \exists y[y = a \land \psi(y)]$$
but $\pmd \emodels \forall y [ y = a \limplies \psi(y)]$ whenever
$\{b: b \not= a\} \sbe M$ has inner $\D$-measure at least $1 - \epsilon$.

Likewise, in $\epsilon$F-logic, for any parameter $a \in M$,
    $$\pmd \fmodels \psi(a) \iff \pmd \fmodels \forall y [ y = a \limplies \psi(y)]$$
but $\pmd \not\fmodels \exists y[y = a \land \psi(y)]$ whenever
$\{a\} \sbe M$ has $\D$-measure $\le \epsilon$.
\end{exmp}

In first order logic, a common way to assess whether a theorem $\psi$ follows
from a set of axioms $T$ is to pass the sentence $\tau := \neg(T \limplies
\psi)$ to some resolution algorithm R.
$T$ implies $\psi$ iff R resolves $\tau$ to a contradiction.

But the obvious analogue for $\epsilon$E-logic cannot work:
$\pmd \emodels \phi \limplies \psi$ is not equivalent to
    $$\pmd \emodels \phi \implies \pmd \emodels \psi$$
because of the paraconsistency of $\epsilon$E-logic.
However, from definition
(\ref{def:E-truth}), $\pmd \emodels \phi \limplies \psi$ iff
    $$\pmd \emodels \neg \phi \lor \psi$$
which is equivalent to
    $$\pmd \not\emodels \neg \phi \implies \pmd \emodels \psi.$$
Rephrasing using $\epsilon$F-logic then,

\begin{prop}[deduction theorem]
$\pmd \emodels \phi \limplies \psi$ iff
    $$\pmd \fmodels \phi \implies \pmd \emodels \psi.$$
By duality, $\pmd \fmodels \phi \limplies \psi$ iff
    $$\pmd \emodels \phi \implies \pmd \fmodels \psi.$$
\end{prop}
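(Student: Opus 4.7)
The plan is a direct unfolding of the inductive definitions. By clause~6 of Definition~\ref{def:E-truth}, $\pmd \emodels \phi \limplies \psi$ is equivalent to $\pmd \emodels \neg\phi \lor \psi$. Clause~2 states that $\lor$ is handled classically, so this is equivalent to the disjunction $\pmd \emodels \neg\phi$ or $\pmd \emodels \psi$. Now apply the defining identity $\pmd \fmodels \phi \iff \pmd \not\emodels \neg\phi$ from the previous subsection: the first disjunct becomes $\pmd \not\fmodels \phi$, so the whole equivalence rewrites as $\pmd \not\fmodels \phi$ or $\pmd \emodels \psi$, which is exactly the material implication $\pmd \fmodels \phi \implies \pmd \emodels \psi$.

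For the dual half I would run the parallel argument through the inductive description of $\fmodels$. Clause~6 of that description gives $\pmd \fmodels \phi \limplies \psi$ iff $\pmd \fmodels \neg\phi \lor \psi$, and clause~2 (classical $\lor$) turns this into: $\pmd \fmodels \neg\phi$ or $\pmd \fmodels \psi$. The first disjunct unfolds by the $\fmodels$/$\emodels$ duality to $\pmd \not\emodels \neg\neg\phi$, and clause~5c (double negation) on $\emodels$ collapses this to $\pmd \not\emodels \phi$. Hence the statement becomes $\pmd \not\emodels \phi$ or $\pmd \fmodels \psi$, which is $\pmd \emodels \phi \implies \pmd \fmodels \psi$. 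Alternatively, one could derive this half from the first by applying the first half to $\neg\psi \limplies \neg\phi$ and invoking Proposition~\ref{fereduction}, but direct unwinding is just as short.

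There is no genuine obstacle here beyond bookkeeping: the content is entirely in the fact that $\lor$, $\land$, $\neg$, and hence the derived $\limplies$ all behave classically (clauses~2, 5, 6), so the crucial disjunction $\pmd \emodels \neg\phi \lor \psi$ really does split into the separate statements $\pmd \emodels \neg\phi$ and $\pmd \emodels \psi$. The only mild subtlety worth flagging is that one must use the classical treatment of $\lor$ rather than a probabilistic reading; once that is invoked, the $\emodels/\fmodels$ duality does the remaining work in one step on each side.
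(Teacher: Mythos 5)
Your argument is correct and is essentially the paper's own: the paper likewise unfolds $\pmd \emodels \phi \limplies \psi$ to $\pmd \emodels \neg\phi \lor \psi$, splits the (classically treated) disjunction, and rewrites $\pmd \not\emodels \neg\phi$ as $\pmd \fmodels \phi$, with the dual half obtained by duality just as you do. The only implicit point in both versions is that the splitting of $\lor$ into a metalevel disjunction requires $\phi$ and $\psi$ to be sentences, which is the intended setting.
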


Thus, axioms $T$, interpreted inside $\epsilon$F-logic, (metalogically) imply
theorem $\psi$, interpreted inside $\epsilon$E-logic, iff
    $$T \limplies \psi$$
is an $\epsilon$E-validity.

By results (\ref{sat_reduction''}) and (\ref{fsat_reduction''}) of Kuyper and
Terwijn that we record in later sections,
we can even interpret each quantifier in $T$ (resp. in $\psi$) in
$\alpha$F-logics (resp. $\alpha$E-logics) for different $\alpha$s, in this
sense:

\newcommand{\noteq}{$\nabla$}
\newcommand{\MyQuote}[1]{\vspace{\baselineskip}
     \parbox{.8\linewidth}{#1}\hspace*{.1\linewidth}(\noteq)\\
     [\baselineskip]}
\MyQuote{
    Let $T$ be in prenex normal form $\varqs n\phi(\vec x)$.
    For every $\varq_i = \exists$, we translate $\varq_i x_i$ as
        ``there exists a set $A_i$ with measure $> \alpha_i$ such that
        for each $x_i \in A_i$, \ldots;''
    The quantifier $\forall$ is interpreted as usual.
    Each $\alpha_i$ can be any arbitrary rational number in $[0, 1]$,
    independent of what other $\alpha_j$s are.

    Similarly,
    let $\psi$ be in prenex normal form $\varqs m\phi'(\vec x)$.
    For every $\varq_i = \forall$, we translate $\varq_i x_i$ as
        ``there exists a set $B_i$ with measure $\ge 1 - \beta_i$ such that
        for each $x_i \in B_i$, \ldots.''
    The quantifier $\exists$ is interpreted as usual.
    Each $\beta_i$ can be any arbitrary rational number in $[0, 1]$,
    independent of what other $\beta_j$s are.

    The implication $T \implies \psi$ (resp. $\psi \implies T$) under
    these translations can be expressed as some sentence in $\epsilon$E-logic
    (resp. $\epsilon$F-logic).
}

As a corollary,
if $T$ is a conjunction of sentences $\{\Lambda_j\}_{j=1}^l$ and
$\psi$ is a disjunction of sentences $\{\Gamma_i\}_{i=1}^k$, then
for any finite $\{\alpha_j\}_{j=1}^l, \{\beta_i\}_{i=1}^k$ of
rational numbers in $[0, 1]$,
there is some sentence $\Phi$ such that the following are equivalent:
\begin{itemize}
    \item for all finite $\pmd$, if $\pmd$ simultaneously
    $\alpha_j$F-satisfies each $\Lambda_j$, then
        $$\pmd \emodelsp{\beta_i} \Gamma_i\qquad
            \text{for some $i \in \indset k$}.$$
    \item $\Phi$ is finitely $\epsilon$E-valid.
\end{itemize}

Though such a form of deduction theorem may  not seem useful at first, it
nevertheless allows the expression of statements regarding many types of
mathematical objects, including concept classes
in CLT, graphs with weighted vertices,
graphs with weighted edges, and artificial neural networks.

In what follows, we give examples of such expressions and their
meanings in both $\epsilon$E-logic and $\epsilon$F-logic.
For ease of reading, we will write
$\foralleps$ for $\forall$ in the context of $\epsilon$E-logic and
$\existseps$ for $\exists$ in the context of $\epsilon$F-logic.
As noted, in a single sentence, the $\epsilon$ parameter may vary, so
it is meaningful to write $\existseps[0] x \existseps[1/2] y
\existseps[\epsilon] z.$
Each boolean symbol whose meaning has not been defined is a shorthand for
the usual composition of symbols $\neg, \land, \lor.$

These examples will hopefully give a rough picture of the implications of our
decidability and undecidability results.

\begin{exmp}[PAC learning]
\newcommand{\cC}{\mathcal{C}}
\newcommand{\oracle}{\mathrm{EX}}
Define the following:
\begin{itemize}
\item an \textit{example space} is a probability space $(X, \D)$
with probability distribution $\D$;
\item a \textit{concept class} $\cC$ is a collection $\{U \sbe X\}$ of subsets
of $X$, identified with their indicator functions;
\item for $U \in \cC$, an example oracle $\oracle(X, \D, U)$ is a device that randomly
returns a pair $(x \in X, U(x))$ for every invocation.
The pair is sampled according to $\D$.
\end{itemize}
In the basic PAC learning model \cite{valiant_PAC}, we are given $X$, $\cC$, and
an example oracle $\oracle$ that emits elements of $X$ according to an unknown
distribution $\D$ and unknown concept $U \in \cC$.
We wish to efficiently find a concept close enough to $U$,
in the following sense:
\begin{quote}
We have a probabilistic algorithm that,
for all error parameters $\epsilon$ and $\delta$,
for all distributions $\D$ on $X$,
in time polynomial in $\f 1 \epsilon$ and $\f 1
\delta$, returns $U' \in \cC$ with
    $$\Pr_{x \sim \D}[U'(x) \not= U(x)] < \epsilon$$
more than $1 - \delta$ of the time.
Such a $U'$ is called an \textit{approximation concept}.
\end{quote}

\newcommand{\bclass}{\mathfrak{B}}
Typically, $X$ is taken to be $\bclass_s = \{0, 1\}^s$ for some $s$, and $\cC$
is a subclass of the boolean functions on $\bclass_n$.

\newcommand{\clsf}{\mathfrak{c}}
It is important to note that this learning model assumes
\begin{equation}
\text{for the given $\oracle$, some approximation concept $U'$ exists in $\cC$.}
\tag{$\wp$}\label{stmt:PAC_assume}
\end{equation}
In particular, this happens if $\oracle$ labels elements according to some
concept in $\cC$.
But in practice this may not always be the case \footnote{hence the research
into \textit{agnostic learning}; see \cite{kearns_agnostic} for an overview}.
In $\epsilon$E-logic with a language having monadic predicate $P_1, \ldots,
P_s$ and $\clsf$, we can express this assumption.

In what follows,
for a general vector $w$, $w_i$ will represent the $i$th value of $w$;
$\M$ is a first order model over universe $\bclass_s$, and
    $$P^\M_i(v) = v_i,\qquad
        \clsf^\M(v) = \text{the label generated by $\oracle$};$$
$\D$ is a probability measure everywhere defined over $M$.
\renewcommand{\forall}{\foralleps}
\begin{enumerate}
    \item (Point class). The point concept class over $\bclass_s$ is the
    collection
        $$\{\{v\}: v \in \bclass_s\}.$$
    In other words, each concept labels exactly one point of $\bclass_s$ as
    1 and the rest as 0.
    Assumption (\ref{stmt:PAC_assume}) holds for the point class iff
        $$\pmd \emodels \exists x \forall y [\clsf(y) \lequiv
            \bigwedge_{i=1}^s(P_i(x) \lequiv P_i(y))].$$
    (Of course it is much more convenient to use equality, but we refrain in order
    to take advantage of the decidability of monadic relational languages)
    \item (Parity class). For each $v \in \bclass_s$, let
        $$O_v(x) := v \cdot w \mod 2$$
    where $(-) \cdot(-) \mod 2$ is the dot product in the vector space
    $(\Z_2)^s$.

    The parity concept class over $\bclass_s$ is the collection
        $$\{O_v: v \in \bclass_s\}.$$

    Then assumption (\ref{stmt:PAC_assume}) holds for the parity class iff
        $$\pmd \emodels \exists x \forall y [\clsf(y) \lequiv
            \bigoplus_{i=1}^s (P_i(x) \wedge P_i(y))].$$
    \item (Conjunction class). The conjunction concept $C_w$ represented by a
    vector $w \in \{-1, 0, 1\}^s$ labels $v \in \bclass_s$ as 1 iff
    for each $i \in \indset s$ such that $w_i \not = 0$, $v_i = (w_i + 1)/2$.
    The conjunction concept class is the collection of $C_w$ over all $w \in
    \{-1, 0, 1\}^s$.

    Assumption (\ref{stmt:PAC_assume}) holds for the conjunction class iff
        $$\pmd \emodels
                \exists x \exists y \forall z \lp\clsf(z) \lequiv
                    \bigwedge_{i=1}^s[
                    (P_i(x) \land P_i(y) \limplies P_i(z)) \land
                    (\neg P_i(x) \land \neg P_i(z) \limplies \neg P_i(z))]\rp$$
    Here we code each conjunction concept $C_w$
    using 2 bitstrings $x$ and $y$:
    if $x_i = y_i = 1$, $w_i = 1$;
    if $x_i = y_i = 0$, $w_i = -1$;
    otherwise $w_i = 0$.
    \newcommand{\DL}{\mathrm{DL}}
    \item (1-Decision lists). Let $Z$ be a triple $(\alpha, \beta, b)$
    where $\alpha, \beta \in \bclass_s$ and $b \in \{0, 1\}$.
    A 1-decision list $\DL_Z$ represented by $Z$ is the decision procedure that,
    on input $v \in \bclass_s$, runs as follows
    \begin{algorithmic}
        \If{$v_1 = \alpha_1$} output $\beta_1$
        \ElsIf{$v_2 = \alpha_2$} output $\beta_2$
        \State $\vdots$
        \ElsIf{$v_s = \alpha_s$} output $\beta_s$
        \Else \ output $b$
        \EndIf.
    \end{algorithmic}

    Let
    \begin{align*}
    \psi_i(x, y, w) &:= [P_1(x) \oplus P_1(w)] \land [P_2(x) \oplus P_2(w)]
    \land \cdots \\
    & \qquad \land [P_{i-1}(x) \oplus P_{i-1}(w)] \land [P_i(x) \lequiv P_i(w)]\\
    & \qquad \limplies [\clsf(w) \lequiv P_i(y)].
    \end{align*}

    $\psi_i$ represents a computation of $\DL_{(x, y, b)}$ (for any $b$) on
    input $w$ that proceeds to the $i$th \textbf{if} statement before returning.

    Let
    \begin{align*}
    \phi(x, y, z, w) &:= [P_1(x) \oplus P_1(w)] \land [P_2(x) \oplus P_2(w)]
    \land \cdots \\
    & \qquad \land [P_{s-1}(x) \oplus P_{s-1}(w)] \land [P_s(x) \oplus P_s(w)]\\
    & \qquad \limplies [\clsf(w) \lequiv P_1(z)]
    \end{align*}

    $\phi$ represents a computation of $\DL_{(x, y, z_1)}$ on input $w$ that
    proceeds to the \textbf{else} statement.

    Then assumption (\ref{stmt:PAC_assume}) holds for the 1-decision lists iff
         $$\pmd \emodels \exists x \exists y \exists z \forall w
            \lp\phi(x, y, z, w) \land \bigland_{i=1}^s \psi_i(x, y, w)\rp.$$

\end{enumerate}
\renewcommand{\forall}{\oldforall}

\textit{VC dimension} is an important quantity of concept
classes studied in CLT.
We will not define it here (consult \cite{kearns_CLT}) but wish to mention
to the computational learning theorists that, as these examples illustrate, when
a concept class $\cC$ over $\bclass_s$ has VC dimension $f(s)$,
assumption (\ref{stmt:PAC_assume}) can be expressed over $\pmd$ by a formula
with $O(f(s))$ number of quantifiers (exercise!).

On the other hand, in $\epsilon$F-logic, we can express certain interesting
conditions on the probability space $(M, \D)$.
\renewcommand{\exists}{\existseps}
\newcommand{\uniform}{\textsc{uniform}}
\newcommand{\atteff}{\textsc{atteff}}
\begin{enumerate}
    \item ``more than half the time, the label is 1":
        $$\pmd \fmodelsp{\f 1 2} \existseps[1/2] x \clsf(x).$$
    \item ``the probability that the $i$th bit is 1 is strictly between
    $\epsilon$ and $1 - \epsilon$'':
        $$\uniform^i_{|\f 1 2 - \epsilon|} := \pmd \fmodels \forall x \exists y
        [P_i(x) \oplus P_i(y)].$$
    As a straightforward generalization, the quantifier-free part of the
    expression can be swapped out for a more complicated boolean combination.
    \item ``most bits are irrelevant to the concept ---
        the label only depends on 2 fixed bits":
        $$\atteff_2 := \pmd \fmodels \biglor_{i < j} \forall x \forall y [
                (P_i(x) \lequiv P_i(y)) \land (P_j(x) \lequiv P_j(y))
                \limplies
                (\clsf(x) \lequiv \clsf(y))].$$

        This is the setting for \textit{attribute efficient learning}
        \cite{attribute_efficient_klivans}.
        \label{exmpitem:attribute_efficient}
\end{enumerate}
\renewcommand{\exists}{\oldexists}
As noted before, each $\epsilon$ in these examples can vary over the rationals
at will.
Imagine we want to find if $\uniform^i_{\delta_1}$ and $\uniform^j_{\delta_2}$
along with $\atteff_2$ would imply (\ref{stmt:PAC_assume}) for some concept
class $\cC$.
We may do so by querying for the $\epsilon$E-validity of a two-part sentence
    $$T \limplies \phi$$
derived from applying note (\noteq).

By the decidability of monadic relational languages
(\ref{thm:E-monadic_decidable} and \ref{thm:F-monadic_decidable}) established
later, these kinds of questions are all decidable, as long as all statements use
only unary predicates.

It is also possible to add a BIT relation to $\epsilon$E- and $\epsilon$F-logics
along the lines of the corresponding relation in classical finite model theory
\cite{libkin}.
One can then express concept classes over all size parameter $s$ with
one single sentence.
However, the complexity of deduction then becomes unknown.

\end{exmp}
\begin{exmp}[graphs with weighted vertices]
Let $\lL$ be a language with a single binary relation $E(\cdot, \cdot)$,
standing for the edge relation between two vertices. As in first order logic,
any $\epsilon$-model of this language is automatically a directed graph with at
most one edge between every pair of vertices (including loops). Moreover, in
such $\epsilon$-models, to each vertex of the graph structure is assigned a
weight in $[0, 1]$ such that the total sum of all vertex weights is 1; however,
the edges are not weighted.
Such graphs with weighted vertices can be used to model, among many things,
cities with populations and in general the PageRank algorithm.

Because in
$\epsilon$F-logic, $\forall$ is interpreted classically, we can express quite
a few properties of graphs:
\renewcommand{\exists}{\existseps}
\begin{enumerate}
    \item loopless:
        $$\forall x \neg E(x, x).$$
    \item undirected:
        $$\forall x \forall y [E(x, y) \lequiv E(y, x)].$$
    \item complete:
        $$\forall x \forall y [E(x, y) \land E(y, x)].$$
    \item bipartite with a fixed partition (if $\lL$ has a unary predicate $A$):
        $$\forall x \forall y [(A(x) \lequiv A(y)) \limplies \neg E(x, y) \land
                                                        \neg E(y, x)].$$
        Similarly, $k$-coloring can be expressed as well.
    \item in a simple graph, ``every 1-neighborhood collectively has weight more
    than $\epsilon$'' (if $\lL$ has equality):
        $$\forall x \exists y [y = x \lor E(x, y)].$$
    \item ``every directed triangle has collective measure more than $\epsilon$"
    (if $\lL$ has equality):
        $$\forall x \forall y \forall z [
            E(x, y) \land E(y, z) \land E(z, x) \limplies
                \exists w (w = x \lor w = y \lor w = z)].$$
    \item ``every element has a unique successor of positive measure" (when
    $\lL$ has equality):
        $$\forall x \existseps[0] y [E(x, y) \land
            \forall z (E(x, z) \limplies z = y)].$$
    \item ``there exists a set of `initial' vertices $A$ collectively with
    weight more than $\epsilon$ such that every $v \in A$ is connected to every
    vertex in the entire graph":
        $$\exists x \forall y E(x, y).$$
\end{enumerate}

\renewcommand{\exists}{\oldexists}
On the other hand, in $\epsilon$E-logic, we can express the likes of
the following.
\renewcommand{\forall}{\foralleps}
\newcommand{\isom}{\textsc{iso}}
\begin{enumerate}
    \item ``there is a clique of size $k$'':
        $$\exists \vec x \lp \bigland_{1 \le i < j \le k} E(x_i, x_j) \land
                                        E(x_j, x_i)\rp.$$
    In general, for any fixed graph $G$, we can express the existence of a
    subgraph isomorphic to $G$.
    \item ``there is a subgraph isomorphic to $G$ that carries weight
    $\ge 1 - \epsilon$":
        $$\exists \vec x \left[\isom_G(\vec x) \land
                \forall y \biglor_{i=1}^k y = x_i\right],$$
    where $\isom_G$ is a formula expressing $\vec x$ is isomorphic to $G$.
    \item in a simple graph, ``there is a single vertex $v$ that is connected to
    at least $1-\epsilon$ (by weight) of other vertices":
        $$\exists v \forall x E(x, v).$$
\end{enumerate}
\renewcommand{\forall}{\oldforall}

In $\epsilon$E-logic, we can also make weakened version of universal statements
from the $\epsilon$F-examples by replacing $\forall$ with $\foralleps[0] = \formone$.
The transformed sentences will then quantify over all elements of positive
measure, rather than all elements.
For instance, the sentences given for the loopless, undirected, complete, and
bipartite properties all carry over to apply to the subgraph consisting of all
nonnull vertices.

This weakened quantifier suffices in most cases.
In fact, later on, our proof of the undecidability of
finite $\epsilon$E-satisfiability (\ref{thm:Esat_Sigma1-hard}) depends heavily
on this competency of $\formone$.

In some cases, when $\epsilon > 0$, the quantifier $\existseps$ can also be
replaced (per note (\noteq))
with $\foralle$ without affecting the intended semantics very much.
For instance, the property ``every 1-neighborhood collectively has weight more
than $\epsilon$'' differs very little from ``every 1-neighborhood collectively
has weight at least $\epsilon$'' in most imaginable applications.

\end{exmp}

\begin{exmp}[graphs with weighted edges]
\label{exmp:graph_weighted_edges}
Instead of assigning a measure to vertices, often we want to assign numbers to
edges of a graph, for example in a MAX-FLOW or a path-finding problem.

Let $\lL$ be a language with binary relations $I(\cdot, \cdot)$,
$C(\cdot, \cdot)$, and $D(\cdot, \cdot)$.
Here $I(x, y)$ represents that the codomain of edge $x$ equals the domain of
edge $y$; $D(x, y)$ (resp. $C(x, y)$) represent that edges $x$ and $y$ have the
same domain (resp. codomain).
Thus, any graph with weighted edges $\{e_i\}_{i \in M}$ such that the total
weight equals 1 is automatically a probability model in $\lL$.

Conversely, suppose $\pmd$ has an everywhere defined measure
$\D$ and classically satisfies the axioms of

\begin{itemize}
\item ``$C$ and $D$ are equivalence relations'':
\begin{equation*}
\begin{aligned}
    \forall x & C(x, x)\\
    \forall x \forall y & C(x, y) \lequiv C(y, x) \\
    \forall x \forall y \forall z & C(x, y) \land C(y, z) \limplies C(z, x) \\
\end{aligned}
                \tag{EQR}	\label{axiom:equiv_do_co}
\end{equation*}
along with the analogues for $D$.
\item ``incidence relation respects domain and codomain":
\begin{equation}
\begin{aligned}
    \forall x \forall y & (C(x, y) \limplies
                    \forall z [I(x, z) \lequiv I(y, z)])\\
    \forall x \forall y & (D(x, y) \limplies
                    \forall z [I(z, x) \lequiv I(z, y)]).\\
\end{aligned}
                \tag{IDC}	\label{axiom:inc_do_co}
\end{equation}
\item ``domain and codomain respect incidence relation":
\begin{equation}
\begin{aligned}
    \forall x \forall y & (I(x, y) \limplies
        \forall z [C(z, x) \lequiv I(z, y)])\\
    \forall x \forall y & (I(x, y) \limplies
        \forall z' [D(z', y) \lequiv I(x, z')])
\end{aligned}
                \tag{DCI}	\label{axiom:do_co_inc}
\end{equation}
\item ``domain and codomain are unique'':
\begin{equation*}
\begin{aligned}
    \forall x \forall y \forall z & [
        I(x, y) \land I(x, z) \limplies D(y, z)] \\
    \forall x \forall y \forall z & [
        I(y, x) \land I(z, x) \limplies C(y, z)] \\
\end{aligned}
                \tag{!DC}	\label{axiom:unique_do_co}
\end{equation*}
\end{itemize}

Let $V_D$ and $V_C$ be respectively the equivalence classes modulo $D^\M$ and
$C^\M$.
For every element $a$, denote its equivalence class in $V_D$ by $\llb a\rrb_D$
and that in $V_C$ by $\llb a \rrb_C$.
By axiom (\ref{axiom:do_co_inc}), $\llb a \rrb_D \in V_D$ can be identified
with $\llb b \rrb_C \in V_C$,
$\llb a\rrb_D\sim \llb b \rrb_C$, if $I^\M(b, a)$.
Thus we can form the vertex set $V = (V_D \cup V_C)/\sim$.
Axiom (\ref{axiom:inc_do_co}) implies that the relation $I^\M$ induces
a relation $I^\M_C$ on $V_C \times M$.
By factoring through the identification $\sim$, the relation $I^\M_C$ can be
treated as a relation on $V \times M$ with the property that, for every element
$a$ of $M$, $I^M_C(\llb a\rrb_D, a)$.
But axiom (\ref{axiom:unique_do_co}) says that this $I^\M_C$ is in
fact a function $M \to V$.
We therefore retrieve the domain function
$\dom: M \to V, a \mapsto \llb a \rrb_D$.
By the same reasoning, we also derive the codomain function
$\cod: M \to V, a \mapsto \llb a \rrb_C$.
These data then uniquely determine a graph with edges $M$ and vertices $V$.

As $\epsilon$F-logic interprets $\forall$ classically, it can convey
the axioms along with many of the usual properties of graphs.
\newcommand{\pathmin}[2][\epsilon]{\textsc{pathmin}^{#2}_{#1}}
\renewcommand{\exists}{\existseps}
\begin{enumerate}
    \item loopless:
        $$\forall x \neg I(x, x).$$
    \item ``no more than one edge per pair of vertices" (if $\lL$ has equality):
        $$\forall x \forall y [D(x, y) \land C(x, y) \limplies x = y].$$
    \item bidirectional: ``each edge $x$ has a corresponding edge with positive
        weight that goes in the opposite direction":
        $$\forall x \existseps[0] y [I(x, z) \land I(z, x)].$$
    On the other hand, we cannot express \textit{un}directedness without
    changing the axioms.
    The reader is encouraged to work out the axioms for a simple graph with
    weighted edges.
    \item complete: ``for any two edges $x$ and $y$, there is an edge with
        positive weight that connects $x$ to $y$":
        $$\forall x \forall y \existseps[0] z [I(x, z) \land I(z, y)].$$
    \item ``every directed triangle has collective weight $>\epsilon$''
        (if $\lL$ has equality):
        $$\forall x \forall y \forall z [I(x, y) \land I(y, z) \land I(z, x)
                            \limplies
                            \exists w (w = x \lor w = y \lor w = z)].$$
       For any fixed $k$, we can also make the analogous statement for
       $k$-cycles.
    \item ``every length $k$ path from $x$ to $y$ has weight $> \epsilon$"
        (if $\lL$ has equality):
        $$\pathmin k (x, y) := \forall \vec x \lp
                        \bigland_{i=1}^{k-1} I(x_i, x_{i+1}) \limplies
                        \exists y \lp \biglor_{i=1}^k y = x_i \rp\rp.$$
        For any fixed graph $G$ of size $k$,
        the constructions in this and the last items generalize to make
        statements of the form
        ``any subgraph isomorphic to $G$ has total weight $>\epsilon$.''

\end{enumerate}

\renewcommand{\exists}{\oldexists}

As with example (\ref{exmp:graph_weighted_edges}), in $\epsilon$E-logic, for any
fixed graph $G$, we can express the existence of a subgraph isomorphic to $G$.
We can also assert that some such subgraph has weight $\ge 1 - \epsilon$.
These properties may be desirable when working with MAX-FLOW problems.

Finally, we can also transform statements in $\epsilon$F-logic into
weaker statements in $\epsilon$E-logic by replacing $\forall$ with $\formone$
and $\existseps$ with $\foralle$.
With emphasis, we note that all axioms
(\ref{axiom:equiv_do_co}), (\ref{axiom:do_co_inc}), (\ref{axiom:inc_do_co}), and
(\ref{axiom:unique_do_co}) of graphs with weighted edges are universal
sentences.
Therefore, as long as the presence of zero-weight edges present no difficulty,
we can also investigate implications
    $$T \implies \psi$$
with $\psi$ interpreted in $\epsilon$F-logic.
\end{exmp}

\newcommand{\activated}{\textsc{actv}}
\begin{exmp}[artificial neural networks]

\newcommand{\weight}{\mathfrak{w}}
\textit{Artificial neural network (ANN)} is a very popular biologically
inspired technique in machine learning that is often used in pattern recognition
\cite{mackay}\cite{russel_norvig}.
Each ANN is a directed graph in which each edge $e$ has weight $\weight(e)$.
Its nodes are called \textit{neurons} and its edges are called
\textit{connections}.
If neuron $\eta$ connects to neuron $\zeta$ via connection $e$, we say $\eta$
\textit{feeds into} $\zeta$ via $e$ (written $\eta \xrar e \zeta$), $\eta$ is
the \textit{presynaptic neuron} of $e$, and $\zeta$ is the \textit{postsynaptic
neuron} of $e$.
Each neuron is either \textit{activated} or not.
Its state at time $t+1$ depends on the activation states at time $t$ of the
neurons that feed into it.
The exact update rule may vary in different neural networks, but usually
it is implemented as a \textit{linear threshold function}:
\newcommand{\threshold}{\mathfrak{T}}
\begin{quote}
Each neuron $\eta$ has a \textit{threshold value} $\threshold$ such that
$\eta$ is activated at time $t+1$ iff
    $$\sum_{\zeta \xrar e \eta}
            \weight(e) \cdot \|``\zeta \text{ activated at time } t"\| >
            \threshold.$$
\end{quote}

Like in the previous example, ANNs can be represented by finite
$\epsilon$-models $\pmd$ with $\dom D = \powerset(M)$ of the language $\lL$ with
binary relations $I$, $D$, and $C$.
The measure $\D(a)$ of each element $a$ of $M$ correspond to the weight
$\weight(a)$ in the ANN.
If the threshold $\threshold$ is fixed across all neurons, then
the linear threshold update rule can be expressed in $\epsilon$F-logic.

We introduce new predicates $\activated_t(x)$ that represents whether
the presynaptic neuron of edge $x$ is activated at time $t$.
It satisfies the following relations for each $t$.

\begin{enumerate}
    \item ``Suppose $x$ and $y$ have the same presynaptic neuron.
    Then $\activated_t(x)$ holds iff $\activated_t(y)$ holds'':
        $$\forall x \forall y (D(x, y)
                        \limplies
                        [\activated_t(x) \lequiv \activated_t(y)]).$$
    \item the linear threshold update rule:
    $$\forall x (\activated_{t+1}(x) \lequiv
        \existseps[\threshold] y [I(y, x) \land \activated_t(y)]).$$
\end{enumerate}

\newcommand{\inp}{\mathfrak{I}}
\newcommand{\outp}{\mathfrak{O}}
In a typical usage of ANN, there are two sets of distinguished neurons
$\inp$ and $\outp$ called \textit{input neurons} and \textit{output neurons}.
At the beginning, each neuron of $\inp$ is activated or deactivated according
to an input bitstring, for example derived from a digital image.
All other neurons are not activated.
After some time $t$, the activation states of the neurons of $\outp$ are
returned as a bitstring.
Continuing our example, we might desire the output of 1 from every output neuron
iff the image is of a butterfly.

Imagine we are interested in whether some property $\Phi$ of ANN implies some
property $\Psi$.
If we can phrase $\Phi$ as a sentence to be interpreted under $\epsilon$F-logic
and $\Psi$ as a sentence to be interpreted under $\epsilon$E-logic,
then we can answer this question by querying for the $\epsilon$E-validity of
    $$\Phi \land \Lambda \limplies \Psi,$$
where $\Lambda$ is the conjunction of the axioms of the graph from
example (\ref{exmp:graph_weighted_edges}) and the axioms of $\activated_t$ from
this example.
\end{exmp}

Unlike the example of PAC learning, we cannot say with certainty whether any or
all of the theories of finite graphs with weighted vertices, finite graphs with
weighted edges, or finite artificial neural networks are decidable.
The main theorems of this paper will establish that the naive method is out of
the picture: there is no general deduction mechanism for $\epsilon$E- or
$\epsilon$F-logics when restricted to finite $\epsilon$-models.
In particular, this result holds even when restricted to first order languages
with a finite number of binary relations and an infinite number of unary predicates.
But that is not enough to determine the exact computability of the above
theories, each of which uses only a finite number of unary predicates.
(Even for ANN, in almost all use cases, only a finite number of $\activated_t$
predicates are considered).
Weakening the language requirement remains a major research area in $\epsilon$E-
and $\epsilon$F-logics.

Related to the issue of decidability is expressability.
We note in passing that despite these examples, $\epsilon$E- and
$\epsilon$F-logics still have nontrivial limitations in expression power.
These limitations derive in many cases from the limitations of
first order logic itself.
A detailed discussion of impossibility results in expressability is outside
the scope of this paper, but we mention that quite a few techniques for
first order logic, like locality, carry over to our probability logics.
The interested reader is advised to consult \cite{libkin}.

%

\section{Validities and Satisfiabilities}

\label{sec:finite_val_sat_monadic}

\subsection{Finite and Countable 0E-Satisfiabilities}
\label{ssec:0Fval}
\newcommand{\vecy}[1][y]{\vec{\underline{#1}}}

In contrast to first order logic, where Trachtenbrot's theorem implies
that an effective calculus for deducing true theorems over finite models cannot
exist, we show here that finite and countable 0F-validities are decidable.
Moreover, the results of this subsection apply to any first order language.
Consequently, finite and countable 0E-satisfiability are also decidable
regardless of language.

(Recall that $\vec x$ is a shorthand for a sequence of variables $x_1, x_2,
\ldots, x_n$ for some $n \ge 0$, and $\forall \vec x$ is a shorthand for
$\forall x_1 \forall x_2 \cdots \forall x_n$).
\begin{lma}[validity conversion]
Let $\varq_i \in \{\forall, \exists\}$ represent quantifiers, and
$$\phi:= \varq_1 x_1 \varq_2 x_2 \cdots \varq_n x_n \psi(\vec x)$$
where $\psi$ is a quantifier free formula.
Suppose $I = \{i_1, i_2, \ldots, i_k\} \sbe \{1, \ldots, n\}$ is an enumeration
of all indices $i$ such that $\varq_i = \forall$.
Define
$$\phi^*(y):= \forall x_{i_1} \forall x_{i_2}\cdots \forall
x_{i_k} \psi(\vecy, x_{i_1}, \vecy, x_{i_2}, \vecy,
\ldots, \vecy, x_{i_k}, \vecy),$$
where each $\vecy$ denote a block $y, y, \ldots, y$ of $y$ repeated some number
of times, depending on the location of $\vecy$. In other words, in $\phi^*$, all
$x_j$ with $j \not \in I$ has been substituted with the free variable $y$.

Let
$$\phi' := \forall y \phi^*(y).$$

Then $\phi$ is finitely 0F-valid iff $\phi'$ is finitely classically valid.
$\phi$ is countably 0F-valid iff $\phi'$ is classically valid.
\end{lma}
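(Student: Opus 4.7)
The plan is to prove both directions by an explicit choice of distribution. Begin by invoking proposition (\ref{finite_model_extension}) so that for any finite or countable 0-model $\pmd$ we may assume $\D$ is everywhere defined, hence a point function on $M$. Under this convention, the 0F-interpretation of an existential quantifier $\exists x\,\theta(x)$ simplifies to ``there exists $a \in M$ with $\D(a) > 0$ such that $\theta(a)$ holds,'' while each universal quantifier still ranges classically over all of $M$.

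For the $\Leftarrow$ direction, suppose $\phi'$ is classically valid (over finite models in the finite case; over arbitrary---equivalently countable, by Lowenheim-Skolem---models in the countable case). Given any 0-model $\pmd$ of the appropriate cardinality, pick $y_0 \in M$ with $\D(y_0) > 0$; such a $y_0$ exists because $\D$ is a probability measure on a finite or countable set. The strategy is to witness every existential $\exists x_j$ with $j \notin I$ by this $y_0$. Under this choice, the 0F-evaluation of $\phi$ on $\pmd$ collapses exactly to the classical evaluation of $\phi^*(y_0)$ on $\M$, which holds by the classical validity of $\forall y\,\phi^*(y)$.

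For the $\Rightarrow$ direction, I argue the contrapositive. If $\phi'$ fails classically in some model $\M$, take $\M$ finite (in the finite case) or, by the downward Lowenheim-Skolem theorem, countable (in the countable case), and pick $y_0 \in M$ witnessing $\M \not\models \phi^*(y_0)$. Build $\pmd$ by setting $\D(y_0) = 1$. Since $y_0$ is then the only element of positive measure, every $\exists$-quantifier in the 0F-evaluation of $\phi$ on $\pmd$ is forced to choose $y_0$, while the $\forall$-quantifiers continue to range over all of $M$. So the 0F-truth of $\phi$ on $\pmd$ reduces precisely to the classical truth of $\phi^*(y_0)$ on $\M$, which fails by the choice of $y_0$.

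The main care is in tracking the quantifier evaluation order, so as to confirm that the syntactic substitution specified in the statement---keeping each $\forall x_{i}$ with $i \in I$ in place and replacing every other $x_j$ by $y$---is exactly the semantic collapse used in both directions. With that bookkeeping, and the Lowenheim-Skolem appeal for the countable case, no substantive obstacle is expected.
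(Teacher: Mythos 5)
Your proposal is correct and uses essentially the same argument as the paper: point-mass (Dirac) measures to force every existential to a single designated element in one direction, and the existence of a positive-measure element (guaranteed on finite or countable probability spaces) to witness all existentials in the other. The only cosmetic difference is that you phrase the first direction as a contrapositive, while the paper runs it directly over all Dirac measures $\mathcal E_{i,k}$; the underlying construction is identical.
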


\begin{proof}
\newcommand{\eik}{\mathcal E_{i, k}}
\newcommand{\eikp}[2]{\mathcal E_{#1, #2}}
\newcommand{\vk}{\mathcal V_k}
\newcommand{\vkp}[1]{\mathcal V_{#1}}
\newcommand{\pve}{(\vk, \eik)}
\newcommand{\pvep}[2]{(\vkp{#1}, \pve{#1}{#2})}
Let's consider the finite validity claim of the theorem. The countable validity
portion is almost exactly the same.

(\textit{$\phi$ finitely 0F-valid $\implies$ $\phi'$ finitely classically
valid})
Let $\mathcal V_k$ be a classical model of size $k$, with
universe $\{1, 2, \ldots, k\}$. Define measures $\mathcal E_{i, k}$ on it such
that $\eik(i) = 1$ and $\eik(j) = 0,\ \forall j \not= i$.
If $\pve \fmodelsp 0 \phi$, then all the $x_j, j \not \in I$ (i.e. all those
with an existential quantifier) must be interpreted as $i$ since $i$ has measure
1.
Hence $\pve \fmodelsp 0 \phi$ implies $\vk \models \phi^*(i)$.

Since $\phi$ is finitely 0F-valid, for any fixed $k$, $\pve \fmodelsp 0 \phi$
and thus $\vk \models \phi^*(i)$ hold for all $1 \le i \le k$. Therefore,
$$\vk \models \forall y \phi^*(y) \implies \vk \models \phi'$$
Now vary $k$, and we conclude that $\phi'$ is classically finitely valid.

Note that the above reasoning did not use finiteness in an essential way. In
fact, slightly modifying the argument shows that $\phi$ 0F-valid for all
0-models of size $\kappa$ implies $\phi'$ classically valid for all first order
models of size $\kappa$.

(\textit{$\phi'$ finitely classically valid $\implies$ $\phi$ finitely
0F-valid})
Let $\pmd$ be a 0-model with the universe $M = \{1, \ldots, k\}$.
By proposition (\ref{finite_model_extension}), we can take $\D$ to be defined on
all subsets of $M$. Since $\phi'$ is satisfied by all classical models, $\mathcal M
\models \phi' \implies \mathcal M \models \forall y \phi^*(y)$.
Because $\M$ is finite (this is the only place where the finiteness is used;
substitute countability for the countable case), there must be an element $a \in
M$ with positive measure.
Then $\M \models \phi^*(a)$, meaning that all the $\exists$ bindings (with the
interpretation under $\fmodels$ of measure strictly positive) in $\phi$ are
realized by $a$.
Thus $\pmd \fmodelsp 0 \phi$.
Since $\pmd$ is an arbitrary finite 0-model, $\phi$ is finitely 0F-valid.

\end{proof}

\begin{lma}
Let
$$\phi(\vec y):=\forall \vec x \psi(\vec x, \vec y)$$
be a universal formula, where $\psi$ is quantifier free.

The following are equivalent:
\begin{enumerate}
    \item $\phi$ is finitely classically valid. \label{cfinval}
    \item $\phi$ is classically valid. \label{cval}
\end{enumerate}
\label{0ftautologycollapse}
\end{lma}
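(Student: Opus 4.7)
The implication (\ref{cval}) $\Rightarrow$ (\ref{cfinval}) is trivial, since every finite $\lL$-structure is in particular an $\lL$-structure. For the converse, I would argue by contrapositive: given a classical countermodel to $\phi$, extract a finite one. Suppose $\M \not\models \phi(\vec p)$ for some $\lL$-structure $\M$ and parameters $\vec p \in M^{|\vec y|}$. Then there is a tuple $\vec a \in M^{|\vec x|}$ with $\M \models \neg\psi(\vec a, \vec p)$. The key observation is that, since $\psi$ is quantifier free, its truth under this assignment depends only on the interpretation of the finitely many subterms and relation symbols actually appearing in $\psi(\vec a, \vec p)$.

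Concretely, let $T$ be the finite set of subterms of $\psi(\vec a, \vec p)$, and set $A := \{t^\M : t \in T\} \sbe M$ (padded with one arbitrary element of $M$ should $T$ be empty). I would build a finite $\lL$-structure $\mathcal N$ on $A$ as follows: every relation symbol $R$ is interpreted by $R^{\mathcal N} := R^\M \cap A^{\mathrm{arity}(R)}$; every constant $c$ occurring in $\psi$ by $c^{\mathcal N} := c^\M \in A$; and every function symbol $f$ occurring in $\psi$ by
$f^{\mathcal N}(t_1^\M, \ldots, t_n^\M) := f(t_1, \ldots, t_n)^\M$
whenever $f(t_1, \ldots, t_n) \in T$, extending $f^{\mathcal N}$ to the remaining tuples of $A^n$ (and interpreting symbols not appearing in $\psi$) arbitrarily so as to land in $A$. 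A routine induction on term structure then shows $t^{\mathcal N} = t^\M$ for every $t \in T$, and equality is automatically preserved because $\mathcal N$ inherits true equality from $\M$. Hence every atomic subformula of $\psi(\vec a, \vec p)$ has the same truth value in $\mathcal N$ as in $\M$, so $\mathcal N \models \neg\psi(\vec a, \vec p)$, exhibiting a finite countermodel to $\phi$.

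The only real subtlety is the presence of function symbols: the substructure of $\M$ genuinely generated by $\vec a$, $\vec p$, and the constants of $\psi$ may be infinite, so one cannot simply pass to a substructure in the model-theoretic sense. The point, however, is that since $\psi$ is quantifier free, we never need $f^{\mathcal N}$ to agree with $f^\M$ outside the short list of term-tuples actually occurring in $\psi(\vec a, \vec p)$; we are free to redefine $f$ elsewhere without affecting the truth value of that one quantifier-free formula. In a purely relational language this subtlety vanishes entirely, and the argument reduces to the classical fact that $\forall$-sentences are preserved under substructures.
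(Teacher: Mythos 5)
Your proposal is correct and follows essentially the same route as the paper: both proofs extract the finitely many term values relevant to the quantifier-free matrix, restrict the relations, and redefine the function symbols arbitrarily off the tuples that actually occur, so that a countermodel yields a finite countermodel. The only cosmetic difference is that the paper closes under function application up to the maximal term depth and adds a junk element $r$ as the default function value, whereas you take exactly the subterm values of $\psi(\vec a, \vec p)$ and extend the functions within that set.
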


\begin{proof}
Certainly, $(\ref{cval}) \implies (\ref{cfinval})$.
It suffices to show $(\ref{cfinval}) \implies (\ref{cval})$.

Because for any first order model $\M$, $\M \models \phi(\vec y) \iff
\M \models \forall \vec y \phi(\vec y)$, we assume that $\phi$ is a sentence
$\forall \vec x \psi(\vec x)$. Let $n = |\vec x|$.

Suppose $\phi$ is satisfied by all finite (classical) models but there is an
infinite model $\M \models \neg \phi$.
Then there is a tuple $\vec a \in M^n$ such that $\mathcal M \models
\neg \psi(\vec a)$.
We form a finite model $\M'$ containing $\{a_i\}_{i=1}^n$ such that $\M' \models
\neg \psi(\vec a)$, which would yield a contradiction.

Let
$F_0 := \{a_i\}_{i=1}^n \cup
\{c^\M: c\text{ is a constant symbol that appears in $\phi$}\}$.
Given $F_i$, set
$$F_{i+1} := \{f^\M(\xi) : \xi \in F_i, f\text{ is a
function symbol that appears in $\phi$}\}$$

Then we define the universe
of our model to be $$M' = \bigcup_{i=0}^k F_k \cup \{r\}$$ where $k$ is
maximal number of times any function symbol appears in $\psi$, and $r$ is an
arbitrary new element.
$F_0$ is obviously finite, and given $F_i$ is finite, $|F_{i+1}| \le |F_i|
\cdot |\mathrm{length}(\phi)|$ is finite.
Thus each $F_i$ is finite and so $M'$ is finite.

The relations in $\M'$ will be the relations of $\M$ restricted to $M'$. For
each function symbol $f$ in the language, define
$$f^{\M'}(\xi) = \begin{cases}
f^\M(\xi) & \text{if $\xi \in F_i$ for some $i < k$}\\
r & \text{otherwise}
\end{cases}$$
if $f$ appears in $\psi$, and otherwise arbitrary.
Finally, for each constant symbol $c$, the interpretation is
$$c^{\M'} = \begin{cases}
c^\M & \text{if $c^\M \in H$} \\
r & \text{otherwise}
\end{cases}$$

It's easy to check that if $t$ is a term that appears in $\psi$, then $t^\M
= t^{\M'}$, and if $R$ is an $n$-ary relation that appears in $\psi$ then
$R^\M(\vec \xi) = R^{\M'}(\vec \xi)$ for any $\vec \xi \in M'^n$. Thus by
induction $\M'$ as constructed satisfies $\neg \psi(\vec a)$ as desired.
\end{proof}

But universal classical validities reduce to propositional tautologies: for each
prime formula $\pi$ with $n$ arguments, and each $n$-tuple $\vec x$ of variables
in the language, form a propositional variable $p_{\pi, \vec x}$. Then a
universal formula $\forall \vec y \psi(\vec y, \vec z)$ with $\psi$
quantifier-free is valid iff $\psi^{\mathrm{prop}}$, the propositional formula
where all instances of prime formulas $\pi(\vec x)$ are replaced by the
propositional variable $p_{\pi, \vec x}$, is a propositional tautology. Since
propositional validity is decidable, this combined with
(\ref{0ftautologycollapse}) yields

\begin{thm}
\label{thm:0F-validity_decidable}
For any first order language, the set of finitely 0F-valid formulas coincides
with the set of countably 0F-valid formulas.
They are both decidable.
Therefore, the set of finitely 0E-satisfiable sentences
coincides with the set of countably 0E-satisfiable
sentences, and they are both decidable.
\end{thm}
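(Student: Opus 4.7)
The plan is to chain together the two preceding lemmas to collapse the finite/countable 0F-validity problems onto classical universal validity, then reduce the latter to propositional tautology checking, and finally invoke duality (Proposition \ref{fereduction}) to transfer the decidability from 0F-validity to 0E-satisfiability.

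First, given a sentence $\phi$, I would apply Proposition \ref{prenex} to put $\phi$ into prenex normal form $\varq_1 x_1 \cdots \varq_n x_n \psi(\vec x)$ with $\psi$ quantifier-free, so that the validity conversion lemma applies. This produces the universal sentence $\phi' = \forall y\, \phi^*(y)$ with the property that $\phi$ is finitely 0F-valid iff $\phi'$ is finitely classically valid, and $\phi$ is countably 0F-valid iff $\phi'$ is classically valid. Then Lemma \ref{0ftautologycollapse} applied to $\phi'$ collapses finite classical validity to full classical validity, so both conditions on $\phi$ are equivalent to a single statement about $\phi'$. This immediately gives the first coincidence claim: finite 0F-validity equals countable 0F-validity.

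For decidability, I would invoke the observation stated just before the theorem: a universal first order sentence $\forall \vec y\, \psi(\vec y)$ with $\psi$ quantifier-free is classically valid iff the propositional formula obtained by replacing each prime formula $\pi(\vec x)$ with a fresh propositional variable $p_{\pi,\vec x}$ is a tautology, and propositional tautology checking is decidable. Since the conversion $\phi \mapsto \phi'$ is a purely syntactic, computable transformation, the composite procedure (prenex $\to$ validity conversion $\to$ propositional encoding $\to$ tautology check) yields a decision algorithm for finite and countable 0F-validity uniformly over all first order languages.

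Finally, to transfer to 0E-satisfiability, I would apply the duality of Proposition \ref{fereduction} in its finite/countable form: $\phi$ is finitely (resp.\ countably) 0E-satisfiable iff $\neg\phi$ is not finitely (resp.\ countably) 0F-valid. This duality in the restricted setting follows immediately from the fact that every finite or countable $\epsilon$-model can be taken to have $\D$ everywhere defined (Proposition \ref{finite_model_extension}), so the definition $\pmd \fmodels \Phi \iff \pmd \not\emodels \neg\Phi$ lifts to the restricted quantification over finite (resp.\ countable) models. Hence decidability of finite/countable 0F-validity yields decidability of finite/countable 0E-satisfiability, and the two satisfiability notions coincide because the two validity notions do. The main subtlety I anticipate is simply being careful that the validity conversion lemma truly applies to \emph{any} prenex sentence (including those that begin with an existential block, in which case $I$ may be empty and $\phi^*$ has no universal quantifiers, making $\phi' = \forall y\, \psi[\vec x \mapsto y]$), and that the duality carries through without a measurability snag — but Proposition \ref{finite_model_extension} handles the latter, so the argument goes through cleanly.
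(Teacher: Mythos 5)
Your proposal is correct and follows essentially the same route as the paper: prenex normal form, the validity conversion lemma, Lemma \ref{0ftautologycollapse} to collapse finite to unrestricted classical validity, the propositional-tautology reduction for universal sentences, and duality via Proposition \ref{fereduction} to transfer decidability to 0E-satisfiability. Your explicit attention to the edge case where $I$ is empty and to the measurability point handled by Proposition \ref{finite_model_extension} is a welcome bit of extra care, but it does not change the argument.
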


\subsection{Monadic Relational Language}
\label{ssec:monadic}
\newcommand{\thetaprop}{\theta^{\mathrm{prop}}}
\newcommand{\ineqset}{\textsc{ineqset}}
\newcommand{\NiceComment}[1]{\Comment{\parbox[t]{.5\linewidth}{#1}}}

Let $\lL$ be a first order language
\begin{itemize}
    \item with no equality,
    \item with no function symbols,
    \item with no relation of arity at least 2, and
    \item with at most a finite number of unary predicates $P_1, P_2, \ldots,
    P_s$.
\end{itemize}
We call $\lL$ a \textit{monadic relational language}.
In this subsection, we
show that for any such language, unrestricted, countable, and finite
$\epsilon$E-satisfiabilites and $\epsilon$F-satisfiabilities are all decidable,
ergo the computability of $\epsilon$E-validities as well.

The essence of the proofs in this
section resides in the fact that each $\epsilon$-model in such a language ``has
only a finite amount of information'': They are partitioned by the monadic
predicates into a finite number of indistinguishable parts, and the measures of
these parts uniquely determine the models up to $\epsilon$-elementary
equivalence.
This observation allows us to reduce these satisfiability problems to linear
programming.

\begin{lma} \label{lma:monadic_finite_model}
Let $\lL$ be monadic relational with unary predicates $P_1, P_2, \ldots, P_s.$
Suppose $\pmd$ is an $\epsilon$-model in $\lL$. Then there is a finite
probability model $\pmodel N E$ such that $\pmd \equiv_\epsilon \pmodel N E.$

Furthermore, we can take the universe to be some subset $N \sbe
\powerset(\indset s)$ and require that
\begin{quote}
for every $a, b \in N$, if $P_l^{\mathcal
N}(a)$ holds when and only when $P_l^{\mathcal N}(b)$ holds, then $a = b$.
\end{quote}
\end{lma}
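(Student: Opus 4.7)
The plan is to collapse $\pmd$ into a finite quotient indexed by monadic ``types.'' For each $a \in M$, let $\sigma(a) = \{l \in \indset s : P_l^\M(a)\}$, and for each $\tau \subseteq \indset s$, let $M_\tau = \{a \in M : \sigma(a) = \tau\}$. Each $M_\tau$ is definable by $\bigwedge_{l \in \tau} P_l(x) \wedge \bigwedge_{l \notin \tau} \neg P_l(x)$, hence $\D$-measurable by the first clause of Definition~\ref{def_emodel}. The $M_\tau$ partition $M$ into at most $2^s$ nonempty pieces.

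Set $N = \{\tau \subseteq \indset s : M_\tau \neq \emptyset\}$, interpret $P_l^{\mathcal N}(\tau) \iff l \in \tau$, and define $\mathcal E$ on all of $\powerset(N)$ by $\mathcal E(S) = \sum_{\tau \in S} \D(M_\tau)$. Then $\mathcal E(N) = \D(M) = 1$, so $\mathcal E$ is a probability measure; since $N$ is finite and $\mathcal E$ is everywhere defined, $\pmodel N E$ is automatically a probability model, and distinct $\tau$'s have distinct monadic profiles by construction, giving the ``furthermore'' clause for free.

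The core of the argument is to prove by induction on the complexity of $\phi$ that for every formula $\phi(x_1, \ldots, x_n)$ and every $(a_1, \ldots, a_n) \in M^n$,
\[
\pmd \emodels \phi(a_1, \ldots, a_n) \iff \pmodel N E \emodels \phi(\sigma(a_1), \ldots, \sigma(a_n)).
\]
The atomic case is immediate from the definitions of $\sigma$ and $P_l^{\mathcal N}$, and the boolean clauses follow directly from Definition~\ref{def:E-truth}. The existential case uses that $\tau \in N$ iff $M_\tau \neq \emptyset$, so any witness $b \in M$ may be replaced by its type $\sigma(b) \in N$ and, conversely, any $\tau \in N$ lifts to some $b \in M_\tau$. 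The main obstacle is the universal case: one must check that measures transfer cleanly. By the induction hypothesis, the set $A = \{b \in M : \pmd \emodels \phi(b, \vec a)\}$ depends only on $\sigma(b)$, so it equals $\bigsqcup_{\tau \in B} M_\tau$ where $B = \{\tau \in N : \pmodel N E \emodels \phi(\tau, \sigma(\vec a))\}$. Hence $\D(A) = \sum_{\tau \in B} \D(M_\tau) = \mathcal E(B)$, and $\D(A) \geq 1 - \epsilon$ iff $\mathcal E(B) \geq 1 - \epsilon$, matching the $\emodels \forall x$ clause on both sides. Specializing to sentences yields $\pmd \equiv_\epsilon \pmodel N E$.
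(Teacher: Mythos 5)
Your proof is correct and takes essentially the same route as the paper: partition $M$ into the at most $2^s$ monadic types, quotient to a finite model on the nonempty types with the pushforward measure, and transfer truth by induction, with the universal quantifier handled by the identity $\D(A) = \mathcal E(B)$. The only differences are organizational --- the paper inducts on the quantifier prefix of a prenex form and states the type-invariance of truth as a separate observation $(\Delta)$, whereas you fold that invariance into the induction hypothesis over general formulas, and you also make explicit the $\D$-measurability of each $M_\tau$, which the paper leaves implicit.
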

\begin{proof}
For each $U \sbe \indset s$, define the subset
    $$M_U := \{a \in M: \forall l \in \indset s,\ P_l^\M(a) \iff l \in U\}.$$
$\{M_U\}_{U \sbe \indset s}$ partitions $\pmd$ into at most $2^s$ disjoint
parts. It should be immediate that for any formula $\phi(x)$, any $U \sbe
\indset s$, and $a, b \in M_U$,
\begin{equation}
\pmd \emodels \phi(a) \iff \pmd \emodels \phi(b).
\label{stmt:indist}\tag{$\Delta$}
\end{equation}

Now we define $\pmodel N E$. Let
    $$N := \{U: M_U \not= \emptyset\}$$
and let $\mathcal E$ be defined on points $U$ by
    $$\mathcal E(U) = \D(M_U).$$
Then $\sum_{U \in N} \mathcal E(U) = 1$, so $\mathcal E$ is a probability
measure.

Finally, define the interpretations $P_l^\N$ on $N$ by
    $$P_l^N(U) :\iff l \in U.$$

Since all subsets of $N$ are $\mathcal E$-measurable, $\pmodel N E$ is a
probability model.

For $\epsilon$-elementary equivalence, we show the stronger claim that:
\begin{quote}
For any quantifier-free formula $\phi(\vec x, \vec
y)$ with $j = |\vec x|$ and $k = |\vec y|$, every $\vec U \in N^{k}$, and
every sequence of quantifiers $\varq_1, \ldots, \varq_{j} \in \{\exists,
\forall\}$,
    $$\pmodel N E \emodels \varq_1 x_1 \cdots \varq_{j} x_j \phi(\vec x, \vec U)$$
iff for all (and, by (\ref{stmt:indist}), for any) $\vec a \in M_{\vec U} :=
M_{U_1} \times M_{U_2} \times \cdots \times M_{U_k}$,
    $$\pmd \emodels \varq_1 x_1 \cdots \varq_{j} x_j \phi(\vec x, \vec a).$$
\end{quote}

We proceed by induction on the number $j$ of quantifiers. The
case of $j = 0$ is immediate by our construction.

\let\oldvarqs\varqs
\renewcommand{\varqs}{\oldvarqs{j'}}
Suppose that our claim is proved for $j = j' \ge 0$. The case of
$\psi(\vec U) := \exists y \varqs \phi(\vec x, y, \vec U)$ does not involve
measures and is obvious.
For $\psi(\vec U) := \forall y \varqs \phi(\vec x, y, \vec U)$,
$\pmodel N E \emodels \psi(\vec U)$ iff
    $$W := \{V: \pmodel N E \emodels\varqs \phi(\vec x, V, \vec U)\},\quad
    \mathcal E(W) \ge 1 - \epsilon.$$
By induction hypothesis,
    $$W = \{V: \forall a \in M_V,\ \forall \vec b \in M_{\vec U},\
            \pmd \emodels \varqs \phi(\vec x, a, \vec b)\}.$$

Therefore, for all $\vec b \in M_{\vec U},$
    $$\bigcup_{V \in W} M_V = \{a: \pmd \emodels\varqs \phi(\vec x, a, \vec b)\}$$
and thus
\beq
    &&\Pr_{a \sim \mathcal D}[\pmd \emodels\varqs \phi(\vec x, a, \vec b)]\\
    &=& \Pr_{a \sim \mathcal D}[a \in \bigcup_{V \in W} M_V]\\
    &=& \sum_{V \in W} \D(M_V)\\
    &=& \sum_{V \in W} \mathcal E(V)\\
    &=& \mathcal E(W)\\
    &\ge& 1 - \epsilon.
\eeq
implying
    $$\pmd \emodels \forall y \varqs \phi(\vec x, y, \vec U).$$

The converse direction follows by reversing this line of reasoning and applying
(\ref{stmt:indist}).

The claim starting from ``Furthermore'' follows by our construction.
\end{proof}

We introduce the concept of $\epsilon$E- and $\epsilon$F-trees to help us
analyze satisfiability of sentences.
\newcommand{\child}[2]{{#1}^{\succ #2}}
\newcommand{\rt}{\operatorname{root}}
\begin{defn}
Let $M$ be a set.
A \textbf{tree in $M$ with height $n$} is defined as a tree $T$ with $n$
levels (from 1 to $n$) with the following properties
\begin{enumerate}
    \item all nodes are subsets of $M$.
    \item if node $V$ is at level $k < n$, then $V$ has a child $\child V x$
    for each $x \in V$, and these are all of $V$'s children.
    \item if node $V$ is at level $n$, then $V$ has no children; $V$ is called
    a \textit{leaf node} of $T$.
\end{enumerate}

The unique root of $T$ is denoted $\rt T$.
A \textbf{bran} of a tree in $M$ with height $n$ is defined as a sequence
$\la (a_i, V_i) \ra_{i=1}^n$ of pairs, where for each $i$,
\begin{enumerate}
    \item $V_i \sbe M$ is a node of $T$ at level $i$,
    \item $a_i \in V_i$, and
    \item $V_{i+1} = \child{V_i}{a_i}$ if $i < n$.
\end{enumerate}
We will also write $\la a_i \in V_i \ra_{i=1}^n$ for a bran, which should not
cause any confusion.
\end{defn}

\begin{defn} \label{defn:epsilonE-tree}
Let $Q = \la \varq_1, \varq_2, \ldots, \varq_n\ra$ be a sequence of quantifiers
from $\{\exists, \forall\}$.
Let $\pmd$ be an $\epsilon$-model.
An \textbf{$\epsilon$E-tree in $\pmd$ with levels $Q$} is defined as
a tree in $M$ with height $n$ such that:
\begin{enumerate}
    \item if $\varq_k = \exists$, then all nodes at level $k$ are nonempty
    subsets of $M$; level $k$ is called a $\exists$-level.
    \label{prpty:Etree_exists}
    \item if $\varq_k = \forall$, then all nodes at level $k$ are
    $\D$-measurable subsets of $M$ with $\D$-measure at least $1 -
    \epsilon$;
    level $k$ is called a $\forall$-level.\label{prpty:Etree_forall}
\end{enumerate}

Let $\Phi:=\varqs n \phi(\vec x)$ where $\phi$ is quantifier-free.
An \textbf{$\epsilon$E-tree in $\pmd$ for $\Phi$} is defined as an
$\epsilon$E-tree in $\pmd$ with levels $\la \varq_1, \ldots, \varq_n\ra$ with
the additional property that
\begin{enumerate}[label=(S)]
    \item \label{prpty:Ebranch}
    for every bran $\la a_i \in V_i\ra_{i=1}^n$,
        $$\M \models \phi(\vec a).$$
\end{enumerate}

When $\Phi$ (or $Q$) and $\pmd$ are clear from the context, we will simply use
the term \textbf{$\epsilon$E-tree}.
\end{defn}

\begin{exmp}
If $Q = \la \exists, \forall, \exists\ra$, $M = \indset 4$, and $\D$ is the
uniform distribution, then
$$\begin{tikzcd}
        &	\{1\} \dar\\
        &	\{1, 2, 3\} \dlar \dar \drar\\
\{2\}	&	\{3\}	&	\{4\}
\end{tikzcd}$$
is an $\f 1 2$E-tree in $\pmd$ with levels $Q$.

If $\Phi = \exists x \forall y \exists z [x + y = z]$ (where $x + y = z$ should
be treated as a relation over $(x, y, z)$), then the above tree is
also an $\f 1 2$E-tree in $\pmd$ for $\Phi$.

Note that nodes across a level need not be distinct.
For example, if $\Phi = \forall x \exists y [x \not = y]$, then
$$\begin{tikzcd}
        &	\{1, 2, 3\} \dlar \dar \drar\\
\{2\}	&	\{1, 3\}	&	\{2\}
\end{tikzcd}$$
would be a valid $\f 1 2$E-tree in $\pmd$ for $\Phi$.
\end{exmp}

Similarly, we define

\begin{defn} \label{defn:epsilonF-tree}
Let $Q = \la \varq_1, \varq_2, \ldots, \varq_n\ra$ be a sequence of quantifiers
from $\{\exists, \forall\}$.
Let $\pmd$ be an $\epsilon$-model.
An \textbf{$\epsilon$F-tree in $\pmd$ with levels $Q$} is defined as
a tree in $M$ with height $n$ such that:
\begin{enumerate}
    \item if $\varq_k = \exists$, then all nodes at level $k$ are
    $\D$-measurable subsets of $M$ with $\D$-measure greater than $\epsilon$;
    level $k$ is called a $\exists$-level.\label{prpty:Ftree_exists}
    \item if $\varq_k = \forall$, then all nodes at level $k$ are $M$;
    level $k$ is called a $\forall$-level.
    \label{prpty:Ftree_forall}
\end{enumerate}

Let $\Phi:=\varqs n \phi(\vec x)$ where $\phi$ is quantifier-free.
An \textbf{$\epsilon$F-tree in $\pmd$ for $\Phi$} is defined as an
$\epsilon$F-tree in $\pmd$ with levels $\la \varq_1, \ldots, \varq_n\ra$ with
the additional property that
\begin{enumerate}[label=(S)]
    \item \label{prpty:Fbranch}
    for every bran $\la a_i \in V_i \ra_{i=1}^n$,
        $$\M \models \phi(\vec a).$$
\end{enumerate}

When $\Phi$ and $\pmd$ are clear from the context, we will simply use the term
\textbf{$\epsilon$F-tree}.
\end{defn}

\begin{exmp}
With the same setting as above of $Q = \la \exists, \forall,
\exists\ra$, $M = \indset 4$, and $\D$ the uniform distribution,
$$\begin{tikzcd}
        &	\{1\} \dar\\
        &	\{1, 2, 3\} \dlar \dar \drar\\
\{2\}	&	\{3\}	&	\{4\}
\end{tikzcd}$$
is \textit{not} a $\f 1 2$F-tree in $\pmd$ with levels $Q$ because the root
$\{1\}$ has measure $\f 1 4 < \f 1 2$ and the second level node is not all of
$M$.
However, the following \textit{is} a $\f 1 2$F-tree with levels in $\la
\exists, \forall\ra$:
$$\begin{tikzcd}
        &	\{1, 2, 3\} \dlar \dar \drar\\
\indset 4&	\indset 4	&	\indset 4\\
\end{tikzcd}$$

If $\Phi = \forall x \exists y [ x \not= y]$, then
$$\begin{tikzcd}
        &	\{1, 2, 3, 4\} \dlar \ar{ddl} \ar{ddr} \ar{dr}\\
\{3, 4\}&		&\{1, 2\}	\\
\{3, 4\}& 		&\{1, 2\}
\end{tikzcd}$$
is a $\f 1 4$F-tree for $\Phi$, but not a $\f 1 2$F-tree.
As an exercise, the reader should verify that there is no $\f 3 4$F-tree for
$\Phi$ in our choice of $\pmd$.
Note once again that, as this example illustates, nodes across the same
level in an $\epsilon$F-tree need not be distinct.
\end{exmp}

It should be apparent from the definitions and examples that
\begin{prop} \label{prop:sat_reduce2tree}
For any first order language $\lL$, let $\Phi$ be an $\lL$-sentence and $\pmd$
be an $\epsilon$-model of $\lL$.
Then $\pmd \emodels \Phi$ iff there exists an $\epsilon$E-tree in $\pmd$ for
$\Phi$.
Similarly, $\pmd \fmodels \Phi$ iff there exists an $\epsilon$F-tree in $\pmd$ for
$\Phi$.

Therefore, $\Phi$ is (unrestricted/finitely/countably) $\epsilon$E-satisfiable
iff there exists an $\epsilon$E-tree in some
(unrestricted/finite/countable) $\pmd$ for $\Phi$.
This statement holds also when F substitutes E.
\end{prop}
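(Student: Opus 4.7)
The plan is to put $\Phi$ in prenex normal form via Proposition~\ref{prenex}, write $\Phi = \varqs n \phi(\vec x)$ with $\phi$ quantifier-free, and induct on $n$. Treating parameters as constants in an extended signature, I may freely apply the inductive hypothesis to sub-formulas of the form $\Phi'(a) = \varq_2 x_2 \cdots \varq_n x_n \phi(a, x_2, \ldots, x_n)$. The F-logic biconditional follows from the identical argument with clauses~\ref{prpty:Ftree_exists} and \ref{prpty:Ftree_forall} substituted for \ref{prpty:Etree_exists} and \ref{prpty:Etree_forall}; alternatively it can be deduced from the E-logic case applied to $\neg\Phi$ together with the duality $\pmd \fmodels \Phi \iff \pmd \not\emodels \neg\Phi$. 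I therefore focus on the E-logic biconditional.

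For the base case $n = 0$, a tree of height $0$ has only the empty bran, so condition~\ref{prpty:Ebranch} collapses to $\M \models \phi$, which by clause~(1) of Definition~\ref{def:E-truth} is equivalent to $\pmd \emodels \phi$. For the inductive step, I split on $\varq_1$. If $\varq_1 = \exists$, the forward direction invokes clause~(3) of Definition~\ref{def:E-truth} to obtain a witness $a \in M$ with $\pmd \emodels \Phi'(a)$; the inductive hypothesis supplies a tree $T'$ for $\Phi'(a)$, which I graft beneath the singleton root $V_1 := \{a\}$ by declaring $\child{V_1}{a} := \rt T'$. Every bran of the new tree is the prepending of $(a, \{a\})$ to a bran of $T'$, so condition~\ref{prpty:Ebranch} is preserved. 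Conversely, any tree for $\Phi$ has nonempty root $V_1$ by \ref{prpty:Etree_exists}; choose any $a \in V_1$, and the inductive hypothesis applied to the subtree rooted at $\child{V_1}{a}$ yields $\pmd \emodels \Phi'(a)$ and hence $\pmd \emodels \Phi$. If $\varq_1 = \forall$, put $W := \{a \in M : \pmd \emodels \Phi'(a)\}$; this is $\D$-measurable by condition~\ref{definable=>measurable} of Definition~\ref{def_emodel} and has $\D$-measure $\ge 1-\epsilon$ by clause~(4) of Definition~\ref{def:E-truth}. I take $V_1 := W$ as the root and graft an inductively-obtained subtree $T_a$ below each $a \in V_1$. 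Conversely, if a given tree has root $V_1$ of $\D$-measure $\ge 1-\epsilon$, the inductive hypothesis at each $a \in V_1$ gives $V_1 \subseteq W$, whence $\D(W) \ge 1 - \epsilon$ and $\pmd \emodels \Phi$.

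The final satisfiability claims are then immediate unpacking of definitions: $\Phi$ is (unrestricted, finitely, or countably) $\epsilon$E-satisfiable iff $\pmd \emodels \Phi$ for some $\pmd$ in the corresponding class, which by the main biconditional is equivalent to the existence of an $\epsilon$E-tree in such a $\pmd$ for $\Phi$; the same chase works for F. I anticipate no serious obstacle. The only subtle point is the measurability of $W$ in the $\forall,\Rightarrow$ step, but this is precisely condition~\ref{definable=>measurable} of the $\epsilon$-model definition, without which clause~(4) of Definition~\ref{def:E-truth} would already be ill-posed, so no additional work is needed.
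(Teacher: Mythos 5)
Your proof is correct: the paper offers no argument at all for this proposition (it is asserted as ``apparent from the definitions''), and your induction on the length of the quantifier prefix, peeling off $\exists$ by choosing a witness singleton root and $\forall$ by taking the root to be the definable set $W$ (measurable by condition~\ref{definable=>measurable} of Definition~\ref{def_emodel}), is exactly the argument the paper leaves implicit. The only remark I would strike is the ``alternatively by duality'' aside for the F-case --- duality gives you $\pmd \fmodels \Phi$ iff no $\epsilon$E-tree for $\neg\Phi$ exists, which is not the same statement as the existence of an $\epsilon$F-tree for $\Phi$ without a further tree-level argument --- but your primary route (rerunning the induction with clauses~\ref{prpty:Ftree_exists} and~\ref{prpty:Ftree_forall}) is sound and suffices.
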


Now we are ready to tackle the decidability of monadic relational languages.

\begin{thm} \label{thm:E-monadic_decidable}
Let $\lL$ be a monadic relational first order language and
$\epsilon \in [0, 1]$ be a rational number.
Then $\epsilon$E-satisfiability in $\lL$ is decidable for the unrestricted, the
countable, and the finite cases.
\end{thm}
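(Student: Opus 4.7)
The plan is to reduce $\epsilon$E-satisfiability over $\lL$ to a finite disjunction of linear programming feasibility problems with rational coefficients, which is decidable by Proposition~\ref{prop:LPrat_decidable}. First I would put $\Phi$ in prenex normal form $\Phi = \varq_1 x_1 \cdots \varq_n x_n \phi(\vec x)$ via Proposition~\ref{prenex}. By Lemma~\ref{lma:monadic_finite_model}, $\Phi$ is $\epsilon$E-satisfied by some $\epsilon$-model iff it is $\epsilon$E-satisfied by a finite probability model $\pmodel N E$ with $N \subseteq \powerset(\indset s)$ and $P_l^{\mathcal N}(U) \iff l \in U$, which has cardinality at most $2^s$. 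In particular, the unrestricted, countable, and finite satisfiability notions coincide for $\lL$. By Proposition~\ref{prop:sat_reduce2tree}, satisfaction is further equivalent to the existence of an $\epsilon$E-tree $T$ in $\pmodel N E$ for $\Phi$. Any such $T$ has depth $n$, branching at most $2^s$, and every node is a subset of $\powerset(\indset s)$, so the collection $\mathcal T$ of candidate tree-shapes is finite and effectively enumerable.

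For each candidate $T \in \mathcal T$, the algorithm performs two checks. The first is the combinatorial condition~\ref{prpty:Ebranch}: for every bran $\la a_i \in V_i \ra_{i=1}^n$ of $T$, the formula $\phi(\vec a)$ must evaluate to true under the interpretation $P_l(a_i) \iff l \in a_i$. This is finite and decidable. The second is the measure condition, phrased as feasibility of the linear program with variables $\{x_U : U \in \powerset(\indset s)\}$, the equation $\sum_U x_U = 1$, the inequalities $x_U \ge 0$, and, for each $\forall$-level node $V$ of $T$, the inequality $\sum_{U \in V} x_U \ge 1 - \epsilon$. Since $\epsilon \in \Q$, all coefficients are rational and feasibility is decidable by Proposition~\ref{prop:LPrat_decidable}. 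Declare $\Phi$ satisfiable iff some $T \in \mathcal T$ passes both checks.

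The main subtlety I anticipate is establishing the equivalence between LP feasibility and the existence of an appropriate $\epsilon$-model. In the forward direction, a witnessing tree furnished by Lemma~\ref{lma:monadic_finite_model} and Proposition~\ref{prop:sat_reduce2tree} converts directly into some $T \in \mathcal T$ together with the LP solution $x_U := \mathcal E(U)$. For the converse, given a feasible $\{x_U\}$ and a passing $T$, I would set $\mathcal E(U) := x_U$ on $\powerset(\indset s)$; by Proposition~\ref{finite_model_extension} the resulting pair is a legitimate probability model with $\mathcal E$ everywhere defined, rendering $\forall$-level measurability automatic, while property~\ref{prpty:Etree_exists} is enforced by restricting $\mathcal T$ to trees whose $\exists$-level nodes are nonempty subsets (a purely combinatorial condition independent of $\mathcal E$). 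Once this equivalence is nailed down, the overall algorithm enumerates the finite set $\mathcal T$ and performs finitely many decidable checks, yielding the desired decision procedure.
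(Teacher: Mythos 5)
Your proposal is correct and follows essentially the same route as the paper's proof: reduce to finite ``simple'' models over $\powerset(\indset s)$ via Lemma~(\ref{lma:monadic_finite_model}), recast satisfaction as the existence of an $\epsilon$E-tree via Proposition~(\ref{prop:sat_reduce2tree}), enumerate the finitely many candidate trees, check the $\exists$-level and bran conditions combinatorially, and test the $\forall$-level measure conditions as feasibility of a rational linear program. The only (harmless) cosmetic difference is that you fix the universe to all of $\powerset(\indset s)$ and let the LP assign measure zero, whereas the paper enumerates candidate universes $\dot M \sbe \powerset(\indset s)$ alongside the trees.
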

\begin{proof}
\newcommand{\esatdecide}{\textsc{esat-decide}}

By lemma (\ref{lma:monadic_finite_model}), all three
$\epsilon$E-satisfiability follow from the finite case, so we will prove the
latter.
In particular, it suffices to consider only
satisfaction by probability models $\pmd$ with universe $M \sbe
\powerset(\indset s)$, with $\D$ everywhere defined, and with the property that
    $$\forall a, b \in M,\ [\forall l \in \indset s, P^\M_l(a) \iff P^\M_l(b)]
        \implies a = b.$$
Call such models \textit{simple models}.

Let $\Phi$ be a sentence in $\lL$, without loss of generality in prenex normal
form
    $$\Phi := \varqs n \phi(\vec x)$$
where $\phi(\vec x)$ is quantifier-free and each $\varq_i$ is a quantifier.

By proposition (\ref{prop:sat_reduce2tree}), $\Phi$ is $\epsilon$E-satisfiable
by simple models iff there is an $\epsilon$E-tree in some simple $\pmd$ for
$\Phi$.
But all such trees are trees in some subset of $\powerset(\indset{s})$ with
height $n$, which are finite in number.
Thus, we only need to devise an algorithm that,
for any $\dot M \sbe \powerset(\indset{s})$ and any tree $T$ in $\dot M$ with
height $n$, tests whether there exists simple model $\pmd$ with universe
$\dot M$ such that $T$ is an $\epsilon$E-tree in $\pmd$ for $\Phi$ ---
and it suffices to verify definition (\ref{defn:epsilonE-tree}).
But $\dot M$ is the universe of some such simple model $\pmd$ iff $\dot M$ is so
with the interpretations
    $$\forall U \in \dot M,\ P_l^\M(U) \iff l \in U.$$
Thus, assuming this structure of $\M$,
we can check conditions (\ref{prpty:Etree_exists}) and \ref{prpty:Ebranch}
of definition (\ref{defn:epsilonE-tree}) immediately, in finite time.

Finally, for any surviving $T$ and $\M$ with universe $\dot M$, we find whether
there exists a probability measure $\D$ everywhere defined such that $T$ satisfies
condition (\ref{prpty:Etree_forall}).
$\pmd$ fulfills this condition iff for every node $V \sbe \dot M$ at a
$\forall$-level in $T$,
\begin{equation}
\sum_{a \in V} \D(a) \ge 1 - \epsilon. 		\tag{$\diamond_V$} \label{ineq:LP}
\end{equation}
This is equivalent, then, to the feasibility of the linear program LP with
variables $\mu_a$ for every $a \in \dot M$ and inequalities
\begin{enumerate}
    \item (\ref{ineq:LP}) with $\mu_a$ replacing $\D(a)$,
    \item $\mu_a \ge 0$, for each $a \in \dot M$, and
    \item $\sum_{a \in \dot M} \mu_a = 1$.
\end{enumerate}

Evidently, because $\epsilon$ is rational, all coefficients in LP are rational.
By proposition (\ref{prop:LPrat_decidable}), LP is solvable in finite time.

Thus condition (\ref{prpty:Etree_forall}) can be verified effectively.

\end{proof}

The corresponding result for $\epsilon$F-logic will proceed similarly, except
that the linear program will involve strict inequalities. The following
result of Carver is needed to prove the next lemma.

\begin{prop}[Carver \cite{schrijver}]
Let $A$ be a matrix and let $b$ be a column vector. There exists
a vector $x$ with $Ax < b$ iff $y = 0$ is the only solution for
    $$y \ge 0,\ yA = 0,\ yb \le 0.$$
\end{prop}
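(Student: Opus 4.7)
The proposition is Carver's transposition theorem, a close relative of Farkas' and Motzkin's lemmas. The plan is to establish the two directions separately: the forward direction is a one-line calculation, while the reverse direction reduces to classical linear programming duality.

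For the forward direction ($\Rightarrow$), I would suppose some $x$ satisfies $Ax < b$ componentwise and that $y \ge 0$ satisfies $yA = 0$ and $yb \le 0$. Aim for a contradiction assuming $y \ne 0$: pick any index $i$ with $y_i > 0$; since $(Ax)_i < b_i$ strictly while each other coordinate contributes $y_j (Ax)_j \le y_j b_j$, multiplying the vector inequality $Ax < b$ on the left by $y$ yields the strict scalar inequality $y(Ax) < yb$. But the left side equals $(yA)x = 0$ and the right side is at most $0$, forcing $0 < 0$. Hence $y = 0$.

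For the reverse direction ($\Leftarrow$), I would prove the contrapositive: assuming no $x$ satisfies $Ax < b$, exhibit a nonzero $y \ge 0$ with $yA = 0$ and $yb \le 0$. Introduce the auxiliary linear program
\[
P:\quad \max\ t \quad\text{subject to}\quad Ax + t\mathbf{1} \le b,
\]
with variables $(x,t)$ and $\mathbf{1}$ the all-ones column. I would split on whether $\{x : Ax \le b\}$ is empty. If it is, Farkas' lemma immediately produces a nonzero $y \ge 0$ with $yA = 0$ and $yb < 0$, and we are done. Otherwise any feasible $x$ paired with $t = 0$ is feasible for $P$, and the hypothesis prevents any $t > 0$ from being feasible (else $Ax \le b - t\mathbf{1} < b$ componentwise). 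Thus the optimum of $P$ equals $0$.

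By LP duality, the dual program
\[
D:\quad \min\ y^{\top} b \quad\text{subject to}\quad y^{\top} A = 0,\ y^{\top}\mathbf{1} = 1,\ y \ge 0
\]
attains the same optimum $0$. A dual optimizer $y$ is then nonzero (its entries sum to $1$), nonnegative, and satisfies both $yA = 0$ and $yb = 0 \le 0$, exactly as required. The main obstacle is the delicate split on feasibility of $Ax \le b$: without handling that case via Farkas first, strong duality would need careful justification (the primal $P$ could be infeasible, in which case the dual is unbounded but still yields a separating $y$). Dispatching the infeasible case separately keeps the duality step clean.
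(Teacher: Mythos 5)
Your argument is correct. Note, however, that the paper does not prove this proposition at all: it is quoted as Carver's theorem directly from Schrijver's book and used as a black box in the decision procedure for strict linear programs (Lemma on deciding $Ax < b,\ Bx = c$). So there is no "paper proof" to compare against; what you have supplied is a self-contained derivation of the cited result. On its own merits it is sound: the forward direction is the standard one-line contradiction (left-multiplying $Ax < b$ by a nonzero $y \ge 0$ with $yA = 0$ gives $0 = (yA)x = y(Ax) < yb \le 0$), and the reverse direction correctly splits into the case where $\{x : Ax \le b\}$ is empty, handled by the affine Farkas lemma which produces $y \ge 0$, $yA = 0$, $yb < 0$ (hence $y \ne 0$), and the case where it is nonempty, handled by strong duality applied to $\max t$ subject to $Ax + t\mathbf{1} \le b$, whose dual constraint $y^{\top}\mathbf{1} = 1$ guarantees the nonzero certificate. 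The only point worth making explicit is that in the second case the primal optimum $0$ is actually attained and the dual is feasible with equal value; this follows from the standard fact that a feasible, bounded linear program attains its optimum and satisfies strong duality, which is legitimate to invoke here since the proposition itself lives at the same level of the LP toolkit.
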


\begin{lma} \label{lma:strict_LP_decide}
Let $A$ and $B$ be matrices and let $b$ and $c$ be column vectors.
If all entries in $A, B, b, c$ are rational, then there is an
algorithm that decides the feasibility of the system
    $$Ax < b,\ Bx = c.$$
\end{lma}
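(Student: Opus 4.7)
The plan is to reduce the mixed strict-and-equality system to a pure strict system by eliminating the equalities, and then to convert the resulting strict feasibility question into a weak rational LP feasibility question via Carver's theorem; decidability will then follow from proposition \ref{prop:LPrat_decidable}.

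First I would decide whether $Bx = c$ has any solution at all. This is plain rational Gaussian elimination and is clearly decidable. If it is infeasible, the entire system is infeasible and we are done. Otherwise, rational Gaussian elimination produces a particular rational solution $x_0$ together with a rational matrix $N$ whose columns form a basis of $\ker B$. Every solution of $Bx = c$ then has the form $x = x_0 + Nz$, so the original system is feasible iff the pure strict system $A' z < b'$ is feasible, where $A' := AN$ and $b' := b - A x_0$ are again rational.

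Now I invoke Carver's theorem: $A' z < b'$ is feasible iff the only $y$ satisfying $y \ge 0$, $y A' = 0$, $y b' \le 0$ is $y = 0$. Because those three conditions are homogeneous and closed under positive scaling, a nonzero such $y$ exists iff there is one satisfying the additional normalizing equation $\sum_i y_i = 1$. This last system is a weak linear program with rational coefficients, whose feasibility is decidable by proposition \ref{prop:LPrat_decidable}. Chaining the steps gives an algorithm that, on input $(A, B, b, c)$, outputs whether $Ax < b,\ Bx = c$ is feasible.

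The only subtlety is preserving rationality through the reduction: the null-space parameterization of $Bx = c$ must be produced by rational arithmetic (which is standard), and the normalization trick that encodes ``$y \neq 0$'' as the linear equation $\sum_i y_i = 1$ must faithfully capture the alternative side of Carver's theorem. Both are routine but essential, since the whole argument rests on staying within the decidable realm of rational linear programs established in proposition \ref{prop:LPrat_decidable}.
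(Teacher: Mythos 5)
Your proposal is correct and follows essentially the same route as the paper: eliminate the equalities by a rational parameterization of the solution set of $Bx = c$ (the paper solves for $d$ free coordinates, you use a null-space basis, which is the same thing), then apply Carver's theorem together with Proposition (\ref{prop:LPrat_decidable}). Your normalization $\sum_i y_i = 1$ usefully spells out a step the paper leaves implicit, namely how the Carver alternative is itself decided as a weak rational linear program.
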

\begin{proof}
Solve $Bx = c$ for $x$ (for example by Gaussian elimination). If there is no
solution $x$, then the system is not feasible. If there is exactly one solution
$x$, we check whether $x$ satisfies $Ax < b$ and return the result. Finally, if
the solution set forms an affine plane of dimension $d$, then there exist $d$
indices $i_1, \ldots, i_d$ such that each coordinate $x_k$ is a linear
combination of $x_{i_1}, \ldots, x_{i_d}$ and 1.
Substituting these equations into $Ax < b$ yields a new (strict) linear program
$A'x' < b'$ with rational coefficients where $x' = (x_{i_1}, x_{i_2}, \ldots,
x_{i_d})$.
Evidently the feasibility of the original system is equivalent to the
feasibility of $A'x' < b'$, which can be solved via Carver's theorem and
proposition (\ref{prop:LPrat_decidable}).
\end{proof}

Now we are ready to characterize the monadic relational fragment of
$\epsilon$F-logic.
The method of proof follows roughly the same path as for theorem
(\ref{thm:E-monadic_decidable}).

\begin{thm} \label{thm:F-monadic_decidable}
Let $\lL$ be a monadic relational first order language
and $\epsilon \in [0, 1)$ be a rational number.
Then $\epsilon$F-satisfiability in $\lL$ is decidable for the unrestricted, the
countable, and the finite cases.
\end{thm}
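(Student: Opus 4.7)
The plan is to mimic the structure of the proof of Theorem \ref{thm:E-monadic_decidable}, substituting F for E throughout and using Lemma \ref{lma:strict_LP_decide} in place of Proposition \ref{prop:LPrat_decidable} where strict inequalities arise. First I would invoke Lemma \ref{lma:monadic_finite_model} to reduce the unrestricted and countable cases to the finite case, and indeed to the case where satisfaction is witnessed by a simple model --- a probability model $\pmd$ with $M \sbe \powerset(\indset s)$, with $\D$ everywhere defined, and with $P_l^\M(U) \iff l \in U$. This reduction uses Lemma \ref{lma:monadic_finite_model} exactly as in the E-version, since $\equiv_\epsilon$ preserves both $\emodels$ and $\fmodels$.

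Next, putting $\Phi$ in prenex normal form $\varqs n \phi(\vec x)$ with $\phi$ quantifier free, Proposition \ref{prop:sat_reduce2tree} tells us that $\Phi$ is $\epsilon$F-satisfied by a simple model iff there is an $\epsilon$F-tree for $\Phi$ in some simple $\pmd$. I would then enumerate all possible pairs $(\dot M, T)$ where $\dot M \sbe \powerset(\indset s)$ and $T$ is a tree in $\dot M$ of height $n$; there are only finitely many such pairs since $|\powerset(\indset s)| = 2^s$. For each pair, with $\M$ fixed by $P_l^\M(U) \iff l \in U$, condition (\ref{prpty:Ftree_forall}) of Definition \ref{defn:epsilonF-tree} (every $\forall$-level node equals $\dot M$) and condition \ref{prpty:Fbranch} (every bran satisfies $\phi$) are immediately and finitely checkable.

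The interesting step is condition (\ref{prpty:Ftree_exists}): each node $V \sbe \dot M$ at an $\exists$-level must satisfy $\sum_{a \in V} \D(a) > \epsilon$. Introducing variables $\mu_a$ for $a \in \dot M$, this becomes the system
\begin{align*}
\sum_{a \in V} \mu_a &> \epsilon & &\text{for every $\exists$-level node $V$,}\\
\mu_a &\ge 0 & &\text{for every $a \in \dot M$,}\\
\sum_{a \in \dot M} \mu_a &= 1.
\end{align*}
This is a system of rational strict inequalities, weak inequalities, and a rational equation. By rewriting $\mu_a \ge 0$ as $-\mu_a \le 0$ and noting that weak inequalities $\le$ can be handled by the strict-LP machinery (either absorbed into the equations as slack variables, or handled by a small case analysis separating which $\mu_a$ are $0$ from which are positive, yielding finitely many subsystems each of the form $Ax < b,\ Bx = c$), Lemma \ref{lma:strict_LP_decide} decides feasibility in finite time.

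The main obstacle, and the reason we cannot reuse Proposition \ref{prop:LPrat_decidable} directly as in Theorem \ref{thm:E-monadic_decidable}, is the strictness of the $\exists$ quantifier in $\epsilon$F-logic: the inequality $> \epsilon$ cannot be replaced by $\ge \epsilon$ without altering semantics, and ordinary LP feasibility algorithms handle only weak inequalities. Lemma \ref{lma:strict_LP_decide}, resting on Carver's theorem, is precisely the tool needed to overcome this, and once it is applied to each of the finitely many candidate $(\dot M, T)$ pairs, the disjunction of their feasibility verdicts decides $\epsilon$F-satisfiability of $\Phi$.
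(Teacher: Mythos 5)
Your proposal is correct and follows essentially the same route as the paper: reduce to simple finite models via Lemma (\ref{lma:monadic_finite_model}), enumerate the finitely many pairs $(\dot M, T)$, check the non-measure-theoretic tree conditions directly, and decide the $\exists$-level measure conditions via Lemma (\ref{lma:strict_LP_decide}). The one divergence is how the non-negativity constraints are reconciled with a lemma that only handles systems of the form $Ax < b,\ Bx = c$: the paper first argues that one may restrict attention to everywhere-positive $\D$ (by discarding null elements and restricting the tree accordingly), so that $\mu_a \ge 0$ becomes $\mu_a > 0$ and every inequality is strict, whereas you keep $\mu_a \ge 0$ and split into finitely many subsystems according to which $\mu_a$ vanish --- that case split is sound, though your alternative suggestion of absorbing the weak inequalities via slack variables would not work as stated, since slack variables reintroduce weak non-negativity constraints.
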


\newcommand{\choiceset}{\{i_j \in I_j\}_{j=1}^m}
\begin{proof}
Let $\Phi$ be a sentence in prenex normal form.

Again, by lemma (\ref{lma:monadic_finite_model}), it suffices to consider only
the finite case and only
satisfaction by probability models $\pmd$ with universe $M \sbe
\powerset(\indset s)$, with $\D$ everywhere defined, and with the property that
    $$\forall a, b \in M,\ [\forall l \in \indset s, P^\M_l(a) \iff P^\M_l(b)]
        \implies a = b.$$
Call such models \textit{simple models}.

Just as in the proof of theorem (\ref{thm:E-monadic_decidable}), it's enough to
check in finite time, for each $\dot M \sbe \powerset(\indset s)$ and each tree
$T$ in $\dot M$, whether there is a simple $\pmd$ with universe $\dot M$ such
that $T$ is an $\epsilon$F-tree in $\pmd$ for $\Phi$.
If $\pmd$ is some such model, then $T'$ is an $\epsilon$F-tree in
$\pmodel{M'}{D'}$ for $\Phi$, where
\begin{itemize}
    \item $M'$ is the set of nonnull elements of $\dot M$,
    \item $\D'$ is the restriction of $\D$ to $M'$, and
    \item $T'$ is derived from $T$ by restricting every node $V \sbe \dot M$ of
    $T$ to $M'$.
\end{itemize}
Therefore we may consider only $\D$ that is everywhere positive.

Again, we can assume $\M$ to have interpretations
    $$P_l^\M(U) \iff l \in U$$
for each $U \in \dot M$ and $l \in \indset s$.
So conditions (\ref{prpty:Ftree_forall}) and \ref{prpty:Fbranch} of definition
(\ref{defn:epsilonF-tree}) can be easily verified.

Finally, for any surviving $T$ and $\M$ with universe $\dot M$, we find whether
there exists a probability measure $\D$ everywhere defined such that $T$ satisfies
condition (\ref{prpty:Ftree_exists}).
Thus $T$ is an $\epsilon$F-tree in $\pmd$ iff for every node $V \sbe \dot M$ at
an $\exists$-level in $T$,
\begin{equation}
\sum_{a \in V} \D(a) > \epsilon. 		\tag{$\circ_V$} \label{ineq:LP_strict}
\end{equation}
The existence of such $\D$ is equivalent to the feasibility of the strict linear
program LP with variables $\mu_a$ for every $a \in \dot M$ and inequalities
\begin{enumerate}
    \item (\ref{ineq:LP_strict}) with $\mu_a$ replacing $\D(a)$,
    \item $\mu_a > 0$, for each $a \in \dot M$, and
    \item $\sum_{a \in \dot M} \mu_a = 1$.
\end{enumerate}

By lemma (\ref{lma:strict_LP_decide}), this is decidable.

\end{proof}

By duality, we can phrase theorems (\ref{thm:E-monadic_decidable}) and
(\ref{thm:F-monadic_decidable}) thus
\begin{cor}
Let $\lL$ be any monadic relational first order language
and $\epsilon \in (0, 1)$ be a rational number.
Then for X = E or F,
(unrestricted/countable/finite) $\epsilon$X-satisifiability and validity are
both decidable.
\end{cor}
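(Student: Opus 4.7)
The plan is to derive the corollary as a nearly immediate consequence of the two preceding theorems together with the duality principle from proposition (\ref{fereduction}). The satisfiability half of the statement requires no new work: theorem (\ref{thm:E-monadic_decidable}) covers $\epsilon$E-satisfiability for all three model classes (unrestricted, countable, finite) and theorem (\ref{thm:F-monadic_decidable}) covers $\epsilon$F-satisfiability for the same three classes, under the same hypothesis that $\lL$ is monadic relational and $\epsilon \in [0,1)$ is rational, which includes the interval $(0,1) \cap \Q$.

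To obtain the validity half, I would first record the following ``restricted'' version of proposition (\ref{fereduction}): for any class $\mathcal K$ of $\epsilon$-models that is closed under the identity (in particular, the class of all, all countable, or all finite $\epsilon$-models), a sentence $\phi$ is $\epsilon$E-valid over $\mathcal K$ iff $\neg\phi$ is not $\epsilon$F-satisfiable over $\mathcal K$, and dually $\phi$ is $\epsilon$F-valid over $\mathcal K$ iff $\neg\phi$ is not $\epsilon$E-satisfiable over $\mathcal K$. The justification is the same one-line unfolding used in the unrestricted case: $\pmd \fmodels \neg\phi$ is defined to mean $\pmd \not\emodels \neg\neg\phi$, which is equivalent to $\pmd \not\emodels \phi$; so the negation of ``for every $\pmd$ in $\mathcal K$, $\pmd \emodels \phi$'' is exactly ``for some $\pmd$ in $\mathcal K$, $\pmd \fmodels \neg\phi$,'' and likewise with E and F interchanged.

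Given this, deciding whether $\phi$ is (unrestricted, countable, or finite) $\epsilon$E-valid reduces to deciding whether $\neg\phi$ fails to be $\epsilon$F-satisfiable in the same class, which is decidable by theorem (\ref{thm:F-monadic_decidable}); symmetrically, $\epsilon$F-validity reduces to the complement of $\epsilon$E-satisfiability, decidable by theorem (\ref{thm:E-monadic_decidable}). Since the decidable sets are closed under complement, both validity problems are decidable in each of the three model-class regimes, completing the corollary.

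There is essentially no obstacle here; the only subtlety worth flagging is to confirm that the duality of proposition (\ref{fereduction}) transports to the restricted classes, which is immediate because the derivation depends only on the pointwise relationship $\pmd \fmodels \neg\phi \iff \pmd \not\emodels \phi$, not on any property peculiar to the unrestricted class of $\epsilon$-models. The corollary therefore follows without any further model-theoretic or linear-programming argument beyond what has already been carried out.
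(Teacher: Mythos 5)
Your proposal is correct and matches the paper exactly: the corollary is stated there as an immediate consequence of theorems (\ref{thm:E-monadic_decidable}) and (\ref{thm:F-monadic_decidable}) via the duality of proposition (\ref{fereduction}), which, as you note, transports to the finite and countable classes because it rests only on the pointwise equivalence $\pmd \fmodels \neg\phi \iff \pmd \not\emodels \phi$.
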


\subsection{q-Sentence, q-Trees, and q-Satisfiability}
\label{ssec:q-sentence}

Here we generalize the various concepts of $\epsilon$E and $\epsilon$F like
trees and satisfiability.
These generalizations will allow us to express computability reduction results
in the next section.

\begin{defn}
For any first order language $\lL$,
a \textbf{q-sentence in $\lL$} is defined as a string of the form
    $$\varqs n \phi(\vec x)$$
where $\phi$ is a quantifier-free $\lL$-formula in $n$ variables, and
for each $i$,
    $$\varq_i \in \Qset := \{\exists, \forall\} \cup
        \{ \Qr\ge\}_{\epsilon \in \Q \cap [0, 1]} \cup
        \{ \Qr>\}_{\epsilon \in \Q \cap [0, 1]}.$$

Quantifiers of the form $\Qr\ge$ are called \textbf{weak q-quantifiers}.
Quantifiers of the form $\Qr>$ are called \textbf{strong q-quantifiers}.

If q-sentence $\Phi$ has no strong q-quantifiers and no $\forall$, then
$\Phi$ is called a \textbf{qE-sentence}.
In the same way, if $\Phi$ has no weak q-quantifiers and no $\exists$, then
$\Phi$ is called a \textbf{qF-sentence}.
\end{defn}

\begin{defn}
Let $\Phi:= \varqs n \phi(\vec x)$ be a first order logic sentence, with
$\varq_i \in \{\exists, \forall\}$ and $\phi$ quantifier-free.
The \textbf{$\epsilon$E-coercion of $\Phi$} is defined as the q-sentence
    $$\epsilon\Ecoerce(\Phi): = \varq'_1 x_1 \cdots \varq'_n x_n \phi(\vec x)$$
where
    $$ \varq'_i = \begin{cases}
            \exists		&	\text{if $\varq_i = \exists$}\\
            \Qr[1 - \epsilon]\ge	&	\text{if $\varq_i = \forall$.}
    \end{cases}$$

Likewise,
the \textbf{$\epsilon$F-coercion of $\Phi$} is defined as the q-sentence
    $$\epsilon\Fcoerce(\Phi): = \varq'_1 x_1 \cdots \varq'_n x_n \phi(\vec x)$$
where
    $$ \varq'_i = \begin{cases}
            \forall		&	\text{if $\varq_i = \forall$}\\
            \Qr>	&	\text{if $\varq_i = \exists$.}
    \end{cases}$$

For an arbitrary first order sentence $\Phi$, its $\epsilon$E-coercion is the
coercion of the equivalent prenex normal form.
Similarly for $\epsilon$F-coercion.

\end{defn}

Clearly, both coercion functions are computable.
In addition, every q-sentence in the image of $\epsilon\Ecoerce$ is a
qE-sentence, and every q-sentence in the image of $\epsilon\Fcoerce$ is a
qF-sentence.

\begin{defn} \label{defn:qtree}
Let $Q = \la \varq_1, \varq_2, \ldots, \varq_n\ra$ be a sequence of quantifiers
from $\Qset$.
Let $\M$ be a first order model and $\D$ be a probability measure on its
universe $M$.
An \textbf{q-tree in $\pmd$ with levels $Q$} is defined as
a tree in $M$ with height $n$ such that:
\begin{enumerate}
    \item if $\varq_k = \exists$, then all nodes at level $k$ are
    nonempty subsets of $M$;
    level $k$ is called a $\exists$-level.\label{prpty:qtree_exists}
    \item if $\varq_k = \forall$, then all nodes at level $k$ are $M$;
    level $k$ is called a $\forall$-level.
    \label{prpty:qtree_forall}
    \item if $\varq_k = \Qr\ge$, then all nodes at level $k$ are
    subsets of $M$ with $\D$-measure at least $\epsilon$;
    level $k$ is called a $\Qr\ge$-level.
    \label{prpty:qtree_weak}
    \item if $\varq_k = \Qr>$, then all nodes at level $k$ are
    subsets of $M$ with $\D$-measure greater than $\epsilon$;
    level $k$ is called a $\Qr>$-level.
    \label{prpty:qtree_strong}
\end{enumerate}

Let $\Phi:=\varqs n \phi(\vec x)$ be a q-sentence.
A \textbf{q-tree in $\pmd$ for $\Phi$} is defined as a
q-tree in $\pmd$ with levels $\la \varq_1, \ldots,
\varq_n\ra$ with the additional property that
\begin{enumerate}[label=(S)]
    \item \label{prpty:qbranch}
    for every bran $\la a_i \in V_i \ra_{i=1}^n$,
        $$\M \models \phi(\vec a).$$
\end{enumerate}

When $\Phi$ and $\pmd$ are clear from the context, we will simply use the term
\textbf{q-tree}.

\end{defn}

\begin{exmp}
Let $M = \indset 4$ and $\D$ be the uniform distribution.
The following q-tree in $\pmd$
$$\begin{tikzcd}
        &	\{1\} \dar\\
        &	\{1, 2, 3\} \dlar \dar \drar\\
\{2\}	&	\{3\}	&	\{4\}
\end{tikzcd}$$
is a q-tree with levels $Q$ for $Q = \la \exists, \Qr[3/4]\ge, \Qr[0]>\ra$,
$\la \Qr[1/4]\ge, \exists, \exists\ra$, and $\la \Qr[0]>, \Qr[1/2]>,
\Qr[0]\ge\ra$, but not for $Q = \la \exists, \forall, \exists\ra$ or
$\la \Qr[1/2]>, \Qr[3/4]>, \Qr[1/4]>\ra$.
Thus $T$ can be an $\epsilon$E-tree with levels $Q$ but not necessarily be
a q-tree with levels $Q$, as $\forall$ is interpreted differently.

It is also a q-tree for $\Phi$ if
$\Phi = \exists x \Qr[1/2]> y \exists z [x + y = z]$ but not if
$\Phi = \exists x \forall y \exists z [x + y = z]$.

The following q-tree in $\pmd$
$$\begin{tikzcd}
        &	\{1, 2, 3, 4\} \dlar \ar{ddl} \ar{ddr} \ar{dr}\\
\{3, 4\}&		&\{1, 2\}	\\
\{3, 4\}& 		&\{1, 2\}
\end{tikzcd}$$
is a q-tree with levels $Q$ for $Q = \la \forall, \exists\ra$ and
$\la \Qr[1]\ge, \Qr[1/4]>\ra$ but not for
$Q = \la \exists, \forall\ra$ or $\la \forall, \Qr[1/2]>\ra$.

It is also a q-tree for $\Phi$ if
$\Phi = \forall x \exists y [x \not= y]$ or
$\Qr[1]\ge x \Qr[0]> y [x \not = y]$ but not if
$\Phi = \forall x \forall y [x \not= y]$ or
$\forall x \Qr[3/4]\ge y [x \not= y]$.
\end{exmp}
Clearly, q-trees are generalizations of both $\epsilon$E-trees and
$\epsilon$F-trees:
An $\epsilon$E-tree in $\pmd$ for $\Phi$ is exactly a q-tree in $\pmd$ for
$\epsilon\Ecoerce(\Phi)$.
Likewise for $\epsilon$F-trees.

\begin{defn}
A pair $\pmd$ of first order model $\M$ and probability measure $\D$ on $M$ is
said to \textbf{q-satisfy} a q-sentence $\Phi := \varqs n \phi(\vec x)$,
written
    $$\pmd \qmodels \Phi,$$
if there exists a q-tree in $\pmd$ for $\Phi$.

A q-sentence $\Phi$ is said to be \textbf{q-satisfiable} if some $\pmd$
q-satisfies $\Phi$.
\end{defn}

Again, q-satisfiability is just a generalization of $\epsilon$E- and
$\epsilon$F-satisfibiliy:
for a first order sentence $\Psi$,
    $$\pmd \emodels \Psi \iff \pmd \qmodels \epsilon\Ecoerce(\Psi).$$
The obvious analogue holds for $\epsilon$F-satisfiability as well.

\begin{defn}
Let $Q = \la \varq_1, \ldots, \varq_n\ra$ and $Q' = \la \varq'_1, \ldots,
\varq'_m\ra$.
Let $\pmd$ be a pair of first order model and probability measure.
Suppose $T$ and $T'$ are q-trees in $\pmd$ respectively with levels $Q$ and
$Q'$.
The \textbf{wedge product $T \wedge T'$} is defined as the q-tree
with levels $\la \varq_1, \ldots, \varq_n, \varq'_1, \ldots, \varq'_m\ra$
(thus of height $n + m$) constructed as follows:
\begin{quote}
For every leaf node $V$ in $T$ and every element $a \in V$, set
a copy of $T'$ as the subtree under $a$.
\end{quote}

The expression $T_1 \wedge T_2 \wedge \cdots \wedge T_k$ is parsed as
    $$(((T_1 \wedge T_2) \wedge \cdots )\wedge T_k).$$
\end{defn}

\begin{exmp}
Let $T$ be
$$\begin{tikzcd}
        &	\{1, 2, 3\} \dlar \dar \drar\\
\{2, 3\}	&	\{3\}	&	\{1, 4\}
\end{tikzcd}$$
and $T'$ be the singleton tree $\{1\}$. Then $T \wedge T'$ is the q-tree
$$\begin{tikzcd}
        &	&	\{1, 2, 3\} \dlar \dar \drar\\
    &\{2, 3\}\dlar \dar	&	\{3\}\dar	&	\{1, 4\}\dar \drar\\
\{1\}	&	\{1\}	&	\{1\}	&	\{1\}	&	\{1\}
\end{tikzcd}$$

\end{exmp}
\begin{prop}
Let $\Phi := \varqs n \phi(\vec x)$ and $\Phi' := \varqs['y]m \phi'(\vec y)$ be
q-sentences.
Let $\pmd$ be a pair of first order model and probability measure.
Suppose $T$ and $T'$ are q-trees in $\pmd$ respectively for $\Phi$ and $\Phi'$.
Then $T \wedge T'$ is a q-tree in $\pmd$ for the q-sentence
    $$\varqs n \varqs['y]m [\phi(\vec x) \land \phi'(\vec y)].$$
\end{prop}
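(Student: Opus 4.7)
The plan is to verify, clause by clause, that the wedge product $T \wedge T'$ meets the criteria of Definition~\ref{defn:qtree} for the conjoined q-sentence $\varqs n \varqs['y]m[\phi(\vec x) \land \phi'(\vec y)]$, using only the hypothesis that $T$ and $T'$ are already q-trees in $\pmd$ for $\Phi$ and $\Phi'$ respectively. Nothing needs to be said about the probability measure $\D$ beyond invoking the node conditions already witnessed by $T$ and $T'$.

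First I would check the level structure. By the construction of $\wedge$, levels $1, \ldots, n$ of $T \wedge T'$ coincide with the levels of $T$, while levels $n+1, \ldots, n+m$ are formed by grafting a copy of $T'$ beneath each element-child $\child V a$ of each leaf $V$ of $T$. Since $T$ is a q-tree for $\Phi$, every level-$k$ node of $T$ already satisfies the node condition of Definition~\ref{defn:qtree} keyed to $\varq_k$; since $T'$ is a q-tree for $\Phi'$, every level-$j$ node of $T'$ satisfies the condition keyed to $\varq'_j$, and this is preserved under taking isomorphic copies (the sets and their $\D$-measures are unchanged). Hence every level of $T \wedge T'$ satisfies the corresponding one of (\ref{prpty:qtree_exists})--(\ref{prpty:qtree_strong}) with respect to the concatenated quantifier sequence $\la \varq_1, \ldots, \varq_n, \varq'_1, \ldots, \varq'_m \ra$.

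Next I would verify condition \ref{prpty:qbranch}. Let $\la (a_i, V_i) \ra_{i=1}^{n+m}$ be an arbitrary bran of $T \wedge T'$. Since levels $1, \ldots, n$ and their parent-child relations in $T \wedge T'$ agree with $T$, the prefix $\la (a_i, V_i) \ra_{i=1}^n$ is a bran of $T$; by the wedge construction, $V_{n+1} = \child{V_n}{a_n}$ is precisely the root of the copy of $T'$ grafted beneath $a_n$, so the suffix $\la (a_{n+j}, V_{n+j}) \ra_{j=1}^m$, under the canonical identification of that copy with $T'$ itself, is a bran of $T'$. Applying \ref{prpty:qbranch} for $T$ yields $\M \models \phi(a_1, \ldots, a_n)$, and for $T'$ yields $\M \models \phi'(a_{n+1}, \ldots, a_{n+m})$; conjoining gives $\M \models \phi(\vec a) \land \phi'(\vec b)$ with $\vec a = (a_1, \ldots, a_n)$ and $\vec b = (a_{n+1}, \ldots, a_{n+m})$, as required.

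The proposition is essentially a structural unpacking of the definitions, so there is no real obstacle; the only point requiring care is the bran-decomposition step, namely the observation that every bran of $T \wedge T'$ decomposes uniquely as a bran of $T$ concatenated with a bran of exactly one of the grafted copies of $T'$, the copy being pinned down by the choice of $a_n \in V_n$. Once that identification is noted, both the node-condition verification and the satisfaction-condition verification are immediate from the hypotheses on $T$ and $T'$.
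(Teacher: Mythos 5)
Your proof is correct and follows essentially the same route as the paper's: verify the node conditions level by level (they transfer directly from $T$ and $T'$), then observe that every bran of $T \wedge T'$ decomposes as a bran of $T$ concatenated with a bran of a grafted copy of $T'$, so condition \ref{prpty:qbranch} follows by conjoining. Your write-up is merely more explicit about the identification of the grafted copies, which the paper leaves implicit.
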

\begin{proof}
We verify definition (\ref{defn:qtree}).
Conditions (\ref{prpty:qtree_exists}), (\ref{prpty:qtree_forall}),
(\ref{prpty:qtree_weak}), and (\ref{prpty:qtree_strong}) follow easily from the
respective conditions on $T$ and $T'$.

Each bran of $T \wedge T'$ is a concatenation
    $$\la a_1 \in V_1,\ \ldots,\ a_n \in V_n,\ b_1 \in W_1,\ \ldots,\ b_m \in
    W_m\ra $$
of a bran $\la a_i \in V_i\ra_{i=1}^n$ of $T$ and a bran
$\la b_i \in W_i \ra_{i=1}^m$ of $T'$.
Thus
    $$\pmd \qmodels \phi(\vec a) \wedge \phi'(\vec b)$$
by $T$ and $T'$'s property \ref{prpty:qbranch}.
So condition \ref{prpty:qbranch} holds for $T \wedge T'$ as well.
\end{proof}

This proposition immediately yields
\begin{prop} \label{prop:qsat_and}
Let $\Phi$ and $\Phi'$ be as above.
A pair $\pmd$ simultaneously q-satisfies $\Phi$ and $\Phi'$
iff
    $$\pmd \qmodels \varqs n \varqs['y]m [\phi(\vec x) \land \phi'(\vec y)].$$
\end{prop}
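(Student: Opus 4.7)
The plan is to reduce this to the immediately preceding proposition, which already handles the harder ``wedge'' direction.

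For the forward implication, suppose $T$ is a q-tree in $\pmd$ for $\Phi$ and $T'$ is a q-tree in $\pmd$ for $\Phi'$. The previous proposition states that $T \wedge T'$ is then a q-tree in $\pmd$ for $\varq_1 x_1 \cdots \varq_n x_n \varq'_1 y_1 \cdots \varq'_m y_m [\phi(\vec x) \land \phi'(\vec y)]$, which gives the witness establishing the right-hand q-satisfaction.

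For the reverse implication, I would start with a q-tree $U$ in $\pmd$ for the compound sentence and carve two q-trees out of it. First, define $T$ to be the truncation of $U$ to its top $n$ levels. The node-type conditions \ref{prpty:qtree_exists}--\ref{prpty:qtree_strong} of definition (\ref{defn:qtree}) for $T$ are inherited directly from $U$ at the same levels. To verify condition \ref{prpty:qbranch} for $T$ with respect to $\Phi$, take any bran $\la a_i \in V_i\ra_{i=1}^n$ of $T$; since each $V_i$ at a non-leaf level of $U$ has a child $\child{V_i}{a_i}$ that is itself non-leaf and hence (by the node conditions, together with the nonemptiness of $M$) nonempty, we can extend this bran to a full bran $\la a_1\in V_1, \ldots, a_n \in V_n, b_1 \in W_1, \ldots, b_m \in W_m\ra$ of $U$. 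Property \ref{prpty:qbranch} for $U$ gives $\M \models \phi(\vec a) \land \phi'(\vec b)$, and in particular $\M \models \phi(\vec a)$.

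Second, fix any single bran $\la a_i \in V_i\ra_{i=1}^n$ of $T$ and let $T'$ be the subtree of $U$ rooted at the child $\child{V_n}{a_n}$ at level $n+1$ of $U$. Then $T'$ is a tree in $M$ of height $m$ whose levels are exactly $\la \varq'_1, \ldots, \varq'_m\ra$, with node conditions again inherited from $U$. For any bran $\la b_j \in W_j\ra_{j=1}^m$ of $T'$, concatenating with $\la a_i \in V_i\ra_{i=1}^n$ produces a bran of $U$, whence $\M \models \phi(\vec a) \land \phi'(\vec b)$ and in particular $\M \models \phi'(\vec b)$. Thus $T'$ is a q-tree in $\pmd$ for $\Phi'$, completing the argument.

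The only mild obstacle is the bookkeeping in the reverse direction, specifically ensuring that every bran of the truncated tree $T$ really does extend to a full bran of $U$ so that condition \ref{prpty:qbranch} can be transferred; this reduces to observing that every non-leaf node of $U$ is nonempty by its level's node condition, which is immediate. No other routine case analysis is required.
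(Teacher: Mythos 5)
Your forward direction coincides with the paper's entire argument: the paper simply declares that the wedge-product proposition ``immediately yields'' this one and supplies no proof of the converse, so your reverse direction --- truncating the q-tree $U$ to its top $n$ levels to obtain a q-tree for $\Phi$, and taking the subtree of $U$ hanging below one full bran to obtain a q-tree for $\Phi'$ --- is precisely the content the paper leaves implicit, and it is the right decomposition.

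The one real issue is your justification of the bran-extension step. You claim every non-leaf node of $U$ is nonempty by its level's node condition (note you also need the leaf nodes at level $n+m$ nonempty to complete the bran, not just the non-leaf ones). This holds for $\exists$-levels, for $\forall$-levels (since $M \ne \emptyset$), for $\Qr>$-levels, and for $\Qr\ge$-levels with $\epsilon > 0$, but it fails for $\Qr[0]\ge$-levels: the empty set has measure $\ge 0$ and is a legal node of a tree in $M$. In that case a bran of the truncated tree need not extend to any bran of $U$, condition \ref{prpty:qbranch} for $U$ then says nothing about $\phi(\vec a)$, and in fact the proposition itself fails as stated: take $M = \{1\}$ with $P^\M$ false everywhere, $\Phi = \exists x\, P(x)$, and $\Phi' = \Qr[0]\ge y\, P(y)$; the tree with root $\{1\}$ and empty child q-satisfies $\exists x\, \Qr[0]\ge y\, [P(x) \land P(y)]$ vacuously, yet $\Phi$ is not q-satisfied. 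So your argument (like the proposition) is correct exactly when no $\Qr[0]\ge$ quantifier occurs --- which covers every use the paper actually makes of it, namely $\alpha$E-coercions with $\alpha < 1$, whose weak quantifiers are $\Qr[1-\alpha]\ge$ with $1-\alpha > 0$ --- but the blanket nonemptiness claim should be restricted accordingly.
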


\subsection{Finite $\epsilon$E-satisfiability}
\label{ssec:Esat}

In this subsection we will show that $\epsilon$E-satisfiability is
$\Sigma^0_1$-complete for rational $\epsilon$ strictly between 0 and 1.
The main reason that we would like to work with rational $\epsilon$ is the
following set of tools provided by Kuyper and Terwijn:
\newcommand{\esatred}{\mathcal{F}}
\begin{lma}[Kuyper-Terwijn inter-reduction
\cite{model_theory_of_e_logic}\cite{sat_is_sigma_1_1}]
Let
\begin{itemize}
\item $\lL$ be a countable first-order language not containing function symbols
or equality,
\item $\lL'$ be the language obtained by adding an infinite number of unary
predicates to $\lL$, and
\item $\epsilon_0, \epsilon_1$ be rational such that
\begin{enumerate}
  \item $0 \le \epsilon_0 \le \epsilon_1 < 1$, or
  \item $0 < \epsilon_1 \le \epsilon_0 \le 1$.
\end{enumerate}
\end{itemize}

Then there is a computable $f$ mapping $\lL$-sentences to $\lL'$-sentences such
that
$\phi$ is $\epsilon_0$E-satisfiable iff $f(\phi)$ is $\epsilon_1$E-satisfiable.

\label{sat_reduction'}

More generally
\footnote{see \cite[remark 2.14]{model_theory_of_e_logic}.},
this reduction works ``per quantifier'':
For any $\epsilon \in (0, 1) \cap \Q$,
there exists a computable function $\esatred_\epsilon$ mapping
qE-sentences in $\lL$ to
$\lL'$-sentences such that the following are equivalent:
\begin{enumerate}
    \item there exists a pair $\pmd$ such that
        $$\pmd \qmodels \Phi.$$
    \item $\esatred_\epsilon(\Phi)$ is $\epsilon$E-satisfiable.
\end{enumerate}

\end{lma}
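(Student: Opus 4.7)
The plan is to establish the general per-quantifier statement first; the stated threshold-to-threshold reduction then drops out by composition, taking $f(\phi) := \esatred_{\epsilon_1}(\epsilon_0\Ecoerce(\phi))$, since $\phi$ is $\epsilon_0$E-satisfiable iff $\epsilon_0\Ecoerce(\phi)$ is q-satisfiable (by definition of $\epsilon_0\Ecoerce$ together with proposition (\ref{prop:sat_reduce2tree})), iff $\esatred_{\epsilon_1}(\epsilon_0\Ecoerce(\phi))$ is $\epsilon_1$E-satisfiable (by the per-quantifier lemma). The two sign conditions on $\epsilon_0, \epsilon_1$ in the first part of the statement are absorbed into the range of admissible weak q-quantifier thresholds handled in the per-quantifier step.

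To construct $\esatred_\epsilon$, take a qE-sentence in prenex form $\Phi = \varq_1 x_1 \cdots \varq_n x_n \phi(\vec x)$ and relativize each quantifier to a freshly-introduced unary predicate $P_i$ drawn from the infinite reservoir in $\lL'$. Classical $\exists x_i$ becomes $\exists x_i\,[P_i(x_i) \wedge \cdots]$; each weak q-quantifier $\Qr[\alpha_i]\ge x_i$ becomes $\forall x_i\,[P_i(x_i) \limplies \cdots]$, where $\forall$ is the $\epsilon$E-quantifier at threshold $1 - \epsilon$. The arithmetic driving this design is that when $\D(P_i) = p_i$,
$$\forall x_i\,[P_i(x_i) \limplies \psi(x_i)] \text{ at } \epsilon \iff \D(P_i \setminus \{\psi\}) \le \epsilon \iff \D(\{\psi\} \mid P_i) \ge 1 - \epsilon/p_i,$$
so choosing $p_i = \epsilon/(1 - \alpha_i)$ renders this equivalent to the original weak q-quantifier acting on the $P_i$-restricted universe. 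The choice is legitimate ($p_i \in [0,1]$) precisely when $\alpha_i \le 1 - \epsilon$; larger $\alpha_i$ are handled by a preliminary step that splits the threshold into finitely many sub-thresholds each $\le 1-\epsilon$ via nested quantifier blocks, or, for $\alpha_i = 1$, by dualizing to the complement.

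The technical heart is conjoining $\esatred_\epsilon(\Phi)$ with \emph{measure-calibration axioms} that pin each $\D(P_i)$ to the target rational $p_i$. Since $\epsilon$E-logic natively expresses only ``measure $\ge 1 - \epsilon$,'' pinning an arbitrary rational requires partitioning $P_i$ further with additional unary predicates $R^j_i$ and stacking $\forall$-axioms whose combined constraints fix $\D(P_i)$ to exactly $p_i$. With the axioms in place, equivalence is verified by induction on prenex depth via proposition (\ref{prop:sat_reduce2tree}): a q-tree in $\pmd$ for $\Phi$ extends, by adjoining interpretations of $P_i$ and $R^j_i$ carrying the required measures (invoking proposition (\ref{finite_model_extension}) to extend $\D$ as needed), to an $\epsilon$E-tree for $\esatred_\epsilon(\Phi)$; conversely, any $\epsilon$E-tree for $\esatred_\epsilon(\Phi)$ restricts, level by level, to the $P_i$-parts and yields a q-tree for $\Phi$ on the underlying $\lL$-reduct.

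The main obstacle is precisely the measure-calibration step: expressing ``$\D(P_i) = p_i$'' through sentences whose quantifier thresholds are all $1 - \epsilon$. The combinatorics of stacking auxiliary predicates must be uniform in $\Phi$, handle the edge cases $\alpha_i \in \{0, 1\}$ and $\alpha_i + \epsilon > 1$ cleanly, and remain computable --- and the necessity of infinitely many spare unary predicates in $\lL'$ is a direct reflection of this obstacle, mirroring the $\langle \omega, \ldots \rangle$ signature requirements in the hardness portions of this paper.
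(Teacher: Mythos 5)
First, a point of comparison: the paper does not prove this lemma at all --- it is imported verbatim from Kuyper and Terwijn with citations (and a footnote to their ``remark 2.14''), and the paper only argues separately that the cited construction preserves finiteness. So there is no in-paper proof to match; what you have written is a reconstruction of the external argument. Your high-level strategy does match the known one: relativize each quantifier to a fresh unary predicate $P_i$ whose measure is calibrated so that the absolute threshold $1-\epsilon$ becomes the desired conditional threshold $\alpha_i$, which is exactly why $\lL'$ needs infinitely many spare unary predicates and why rationality of the parameters matters. Your arithmetic identity $p_i = \epsilon/(1-\alpha_i)$ is the right one for the regime it covers, and deriving the first half of the lemma from the per-quantifier half via $f(\phi) := \esatred_{\epsilon_1}(\epsilon_0\Ecoerce(\phi))$ is fine (modulo the degenerate endpoints $\epsilon_1 \in \{0,1\}$, which are trivial).

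However, as a proof the proposal has genuine gaps, and they sit exactly where the mathematical content is. (1) The measure-calibration axioms are never constructed: in $\epsilon$E-logic the only primitive measure statement is ``$\ge 1-\epsilon$,'' and pinning $\D(P_i)$ to an arbitrary rational $p_i$ (or even bounding it below by $p_i$, which is what the converse direction actually needs) requires a concrete combinatorial scheme of auxiliary predicates; asserting that ``stacking $\forall$-axioms'' does it is precisely the step one must exhibit. (2) Your treatment of thresholds $\alpha_i > 1-\epsilon$ is wrong in spirit: relativizing to a set of measure $p \le 1$ turns ``$\ge 1-\epsilon$'' into the conditional threshold $1 - \epsilon/p \le 1-\epsilon$, i.e.\ relativization only \emph{weakens} the quantifier, so no choice of $p_i$ and no ``nested quantifier blocks'' of the kind you describe can reach $\alpha_i > 1-\epsilon$ (note that ``measure $\ge \alpha$'' does not factor as an iteration of smaller-threshold quantifiers); Kuyper--Terwijn handle this direction with a genuinely different device (conjunctions of relativized $\forall$-statements over a partition, which is also where the rationality of the ratio enters). (3) The converse extraction is underspecified: distinct quantifiers are relativized to distinct $P_i$ with distinct measures, and an arbitrary $\epsilon$-model of $\esatred_\epsilon(\Phi)$ gives no a priori reason for the conditional distributions $\D(\cdot \mid P_i)$ to agree, so ``restricting level by level to the $P_i$-parts'' does not by itself produce a single pair $\pmd$ with $\pmd \qmodels \Phi$. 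Until these three points are filled in, the proposal is a correct roadmap rather than a proof.
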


Even though the theorem only applies to full satisfiability, the proof works
exactly the same for finite (and countable) satisfiability, because the only
model construction in the proof is the duplication of a given satisfying model a
finite number of times, which preserves finiteness (and countability). Thus

\begin{lma}
Let $\lL$ and $\lL'$ be defined as above.
For any $\epsilon \in (0, 1) \cap \Q$,
there exists a computable function $\esatred_\epsilon$ mapping
qE-sentences in $\lL$ to
$\lL'$-sentences such that the following are equivalent:
\begin{enumerate}
    \item there exists a pair $\pmd$ with $M$ finite such that
        $$\pmd \qmodels \Phi.$$
    \item $\esatred_\epsilon(\Phi)$ is finitely $\epsilon$E-satisfiable.
\end{enumerate}
\label{sat_reduction''}
\end{lma}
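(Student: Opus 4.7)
The plan is to re-use verbatim the computable reduction $\esatred_\epsilon$ supplied by lemma \ref{sat_reduction'} and to verify that its correctness argument restricts cleanly to the finite-model setting. The paragraph preceding the statement already flags the crucial observation: the Kuyper--Terwijn construction manipulates models only by taking a finite number of disjoint copies of a given satisfier and assigning the auxiliary unary predicates of $\lL' \setminus \lL$ accordingly. Both operations preserve finiteness of the underlying universe, so the same function $\esatred_\epsilon$ should witness the equivalence in the finite case; no new combinatorics is needed.

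For the forward direction, suppose $\pmd \qmodels \Phi$ with $M$ finite. By proposition \ref{prop:sat_reduce2tree} (lifted to q-trees) we obtain a finite-height q-tree $T$ in $\pmd$ witnessing q-satisfaction. I would then feed $\pmd$ and $T$ into the Kuyper--Terwijn recipe, which produces an $\epsilon$E-model $\pmodel{M^*}{D^*}$ of $\esatred_\epsilon(\Phi)$ whose universe is a disjoint union of finitely many copies of $M$---one copy per relevant branching pattern of $T$---with the fresh unary predicates tagging which copy an element belongs to, so that the various $\Qr\ge$- and $\Qr>$-thresholds in $\Phi$ get simulated by a single uniform weak threshold $1-\epsilon$ on $M^*$. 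Since $|M|$ and the number of copies are both finite, $|M^*|$ is finite, which is all we need.

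For the backward direction, we start from a finite $\epsilon$E-model $\pmodel{N}{\mathcal E}$ of $\esatred_\epsilon(\Phi)$, invoke proposition \ref{finite_model_extension} so that $\mathcal E$ is everywhere defined on $\powerset(N)$, and recover a pair $\pmd$ and a q-tree for $\Phi$ by purely definable operations on $\pmodel{N}{\mathcal E}$: we forget the auxiliary predicates of $\lL' \setminus \lL$ to get $\M$, set $\D := \mathcal E$, and read the required q-tree off the fibres of those forgotten predicates as prescribed by Kuyper--Terwijn. Since $|M| \le |N| < \infty$, finiteness is inherited. The main obstacle is bookkeeping rather than mathematics: one must audit the Kuyper--Terwijn proof quantifier-by-quantifier and certify that each case---$\exists$, $\Qr\ge$, and $\Qr>$---is handled by a local, finitary construction that never smuggles in an infinite universe (for instance via a limiting argument on the number of copies or a compactness step). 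Once that audit is complete, the finite-model version of the Kuyper--Terwijn inter-reduction falls out immediately.
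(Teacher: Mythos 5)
Your proposal is correct and follows essentially the same route as the paper: the paper likewise reuses the reduction $\esatred_\epsilon$ of lemma (\ref{sat_reduction'}) unchanged and justifies the finite-model version by the single observation that the only model construction in the Kuyper--Terwijn proof is duplicating a given satisfying model finitely many times, which preserves finiteness. Your extra detail on the two directions is a faithful elaboration of that same observation, so no gap remains beyond the audit the paper itself also defers to the cited source.
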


In particular, we are interested in the following case
\begin{lma}
Let $\lL$ and $\lL'$ be defined as above and fix rational $\epsilon \in (0, 1)$.
There is a computable function $f_\epsilon$ such that,
for any finite set of rationals $J \sbe \Q \cap [0, 1]$ and $\lL$-sentences
$\{\Psi_\alpha\}_{\alpha \in J}$,
the following are equivalent:
\begin{enumerate}
    \item there exists a pair $\pmd$ such that, for each $\alpha \in
    J$, $\pmd$ is a finite $\alpha$-model and
        $$\pmd \emodelsp \alpha \Psi_\alpha.$$
    \item $f_\epsilon(\{\Psi_\alpha\}_{\alpha \in J})$ is finitely
        $\epsilon$E-satisfiable.
\end{enumerate}
\label{sat_reduction}
\end{lma}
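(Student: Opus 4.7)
The plan is to reduce Lemma \ref{sat_reduction} to Lemma \ref{sat_reduction''} by packaging the family $\{\Psi_\alpha\}_{\alpha \in J}$ into a single qE-sentence, then applying the per-quantifier Kuyper--Terwijn reduction $\esatred_\epsilon$. First, for each $\alpha \in J$, use Proposition \ref{prenex} to rewrite $\Psi_\alpha$ in prenex normal form and form its $\alpha$E-coercion $\Phi_\alpha := \alpha\Ecoerce(\Psi_\alpha)$, which is a qE-sentence in $\lL$. By the observation recorded after the definition of q-satisfiability, for any pair $\pmd$,
$$\pmd \emodelsp{\alpha} \Psi_\alpha \iff \pmd \qmodels \Phi_\alpha,$$
so the hypothesis of (1) is equivalent to the existence of a finite $\pmd$ that simultaneously q-satisfies every $\Phi_\alpha$.

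Next, rename variables so that the sentences $\{\Phi_\alpha\}_{\alpha \in J}$ use pairwise disjoint variable blocks, writing $\Phi_\alpha = \varq^{\alpha}_1 x^\alpha_1 \cdots \varq^{\alpha}_{n_\alpha} x^\alpha_{n_\alpha}\, \phi_\alpha(\vec x^\alpha)$. Iterating Proposition \ref{prop:qsat_and}, define the single qE-sentence
$$\Phi \;:=\; \varq^{\alpha_1}_1 x^{\alpha_1}_1 \cdots \varq^{\alpha_1}_{n_{\alpha_1}} x^{\alpha_1}_{n_{\alpha_1}} \cdots \varq^{\alpha_k}_1 x^{\alpha_k}_1 \cdots \varq^{\alpha_k}_{n_{\alpha_k}} x^{\alpha_k}_{n_{\alpha_k}} \bigwedge_{\alpha \in J} \phi_\alpha(\vec x^\alpha),$$
which remains a qE-sentence (all the quantifiers inherited from the coercions are either $\exists$ or of the form $\Qr[1-\alpha]\ge$). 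Then $\pmd \qmodels \Phi$ iff $\pmd$ simultaneously q-satisfies each $\Phi_\alpha$.

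Finally, apply Lemma \ref{sat_reduction''} to $\Phi$, obtaining the $\lL'$-sentence $\esatred_\epsilon(\Phi)$ that is finitely $\epsilon$E-satisfiable iff $\Phi$ is q-satisfied by some finite $\pmd$. Set
$$f_\epsilon(\{\Psi_\alpha\}_{\alpha \in J}) \;:=\; \esatred_\epsilon(\Phi).$$
Computability of $f_\epsilon$ is immediate from the computability of prenex conversion, $\alpha$E-coercion, conjunction, and $\esatred_\epsilon$. The remaining technical point is the requirement in (1) that $\pmd$ be an $\alpha$-model for each $\alpha \in J$: this is automatic by Proposition \ref{finite_model_extension}, which allows us to take $\D$ to be everywhere defined on $\powerset(M)$, so that the measurability conditions in Definition \ref{def_emodel} hold for every $\alpha \in [0,1]$ simultaneously.

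There is no serious obstacle here --- the genuine content was carried out by Kuyper and Terwijn in establishing the per-quantifier Lemma \ref{sat_reduction''}. The only thing to watch is that the combination step really yields a qE-sentence (not a general q-sentence), which is guaranteed because $\alpha$E-coercion produces only $\exists$ and weak q-quantifiers, and Proposition \ref{prop:qsat_and} simply concatenates quantifier prefixes and conjoins quantifier-free matrices without introducing new quantifier types.
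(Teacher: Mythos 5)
Your proposal is correct and follows essentially the same route as the paper: coerce each $\Psi_\alpha$ to a qE-sentence, combine them via repeated applications of Proposition (\ref{prop:qsat_and}), and apply Lemma (\ref{sat_reduction''}). Your write-up is in fact more careful than the paper's three-line proof --- in particular you correctly use the $\alpha$E-coercion where the paper's displayed equivalence misprints $\epsilon\Ecoerce(\Psi_\alpha)$, and you explicitly check that the combined sentence stays a qE-sentence and that the $\alpha$-model condition is handled by Proposition (\ref{finite_model_extension}).
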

\begin{proof}
We have
    $$\pmd \emodelsp \alpha \Psi_\alpha \iff
        \pmd \qmodels \epsilon\Ecoerce(\Psi_\alpha).$$
By repeated applications of proposition (\ref{prop:qsat_and}), $\pmd$
simultaneously q-satisfies all $\epsilon\Ecoerce(\Psi_\alpha)$ iff
$\pmd \qmodels \Gamma$ for some q-sentence $\Gamma$.
Now apply lemma (\ref{sat_reduction''}).
\end{proof}

Therefore, we could compute the simultaneous E-satisfiability over different
error parameters $\alpha \in [0, 1]$ by using only one fixed $\epsilon \in (0,
1)$. This property of $\epsilon \in (0, 1)$ turns out to be powerful enough to
allow the encoding of the halting set by sentences in $\epsilon$E-logic. It also
distinguishes the case of $\epsilon \in (0, 1)$ from the case of $\epsilon = 0$,
for which lemma (\ref{sat_reduction}) is not applicable: whereas the
E-satisfiability of the latter is decidable by theorem
(\ref{thm:0F-validity_decidable}), that of the former, as will be shown next, is
$\Sigma^0_1$-complete.

We now commence the first half of the completeness proof.

\begin{thm}
Let $\lL$ be any countable first-order language with an infinite number of unary
predicates and at least three binary predicates.
Finite $\epsilon$E-satisfiability for $\lL$-sentences is
$\Sigma^0_1$-hard for rational $1 > \epsilon > 0$.
\label{thm:Esat_Sigma1-hard}
\end{thm}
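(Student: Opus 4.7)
The plan is to reduce the halting problem for Turing machines, a canonical $\Sigma^0_1$-complete set, to finite $\epsilon$E-satisfiability. Given a deterministic TM $M$, I will construct an effectively computable $\lL$-sentence $\Phi_M$ such that $\Phi_M$ is finitely $\epsilon$E-satisfiable iff $M$ halts on the empty input. The construction factors through Lemma~\ref{sat_reduction''}: I first design a qE-sentence $\Psi_M$ whose q-satisfiability by some finite $\pmd$ encodes ``$M$ has a halting computation,'' then set $\Phi_M := \mathcal F_\epsilon(\Psi_M)$.

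For the encoding of $\Psi_M$, I represent a halting computation as a finite grid with rows indexed by time and columns by tape positions. The three binary predicates of $\lL$ play the roles of $T(\cdot,\cdot)$ (``next time step''), $S(\cdot,\cdot)$ (``next tape cell in a row''), and a strict linear order $<(\cdot,\cdot)$ on time elements. The infinite supply of unary predicates labels elements by their tape symbol, TM state, and auxiliary markers such as ``head is here,'' ``initial,'' ``halting,'' and sort (``time'' versus ``cell''); only finitely many are needed per $M$. The axioms then assert: (a) $T$ and $S$ behave as deterministic partial functions; (b) $<$ is a strict linear order compatible with $T$, i.e., $T(x,y)\to x<y$; (c) each time element indexes a row of cells linked by $S$ with a unique head cell carrying a state and a symbol; (d) an initial-time element exists and carries the initial configuration of $M$; and (e) every non-halting time element has a $T$-successor whose configuration is obtained by the deterministic transition rule of $M$. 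Universal quantifiers in (a)-(c), (e) are read as $\formone$ (the weak $\Qr[1]\ge$); the existential clauses---axiom (d), and the successor witness embedded in (e)---are strengthened to $\Qr[\delta]\ge$ for a sufficiently small positive rational $\delta$, forcing their witnesses to carry positive measure.

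For completeness, if $M$ halts in $N$ steps, the concrete computation history is a finite $\lL$-structure on $O(N^2)$ elements; equipping it with a rational probability measure whose minimum atom exceeds $\delta$, every element is nonnull and the $\formone$ axioms coincide with ordinary $\forall$, so $\Psi_M$ is q-satisfied by this $\pmd$. For soundness, if $\Phi_M$ is finitely $\epsilon$E-satisfiable, Lemma~\ref{sat_reduction''} yields a finite $\pmd$ q-satisfying $\Psi_M$; the $\Qr[\delta]\ge$ in (d) produces a non-null initial-time element, and the $\Qr[\delta]\ge$ in (e) produces a non-null successor at every non-halting step, so the $T$-chain remains entirely inside the nonnull substructure where all $\formone$ axioms apply classically. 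Since $T(x,y)\to x<y$ and $<$ is a strict linear order, this chain is cycle-free; by finiteness it must terminate, and by axiom (e) termination is possible only at a halting-state element, exhibiting a halting computation of $M$.

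The main obstacle is step two, the painstaking encoding under the restrictions of Lemma~\ref{sat_reduction'}: the source language admits no function symbols and no equality. Without equality, uniqueness assertions (``the $T$-successor is unique,'' ``the head cell in each row is unique,'' etc.) must be simulated by local consistency axioms coordinated across the unary labels and the three binary predicates, which is delicate to spell out. The more subtle difficulty is that in $\epsilon$E-logic a classical $\exists$-witness may land on a null element which the $\formone$ axioms do not constrain, potentially breaking the forward propagation of the simulated computation; the remedy is to replace every critical $\exists$ in the encoding by $\Qr[\delta]\ge$, forcing positive-measure witnesses that are then subject to the universal axioms. Coordinating $\delta$ with the axiom structure so that a single finite model of $\Psi_M$ yields a coherent halting computation of $M$ is where the careful bookkeeping of the proof resides.
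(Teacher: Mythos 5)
Your overall strategy --- reducing the halting problem through Lemma (\ref{sat_reduction''}), relativizing the simulation to the non-null part of the model, and arguing that a chain of non-null successors must terminate at a halting configuration by finiteness and acyclicity --- matches the paper's. But there is a genuine gap in the device you use to keep the chain non-null, and it breaks the completeness direction of the reduction. You replace the critical existentials by $\Qr[\delta]\ge$ for a fixed positive rational $\delta$. In any model faithfully encoding a computation, the witness sets $\{t' : T(t_i,t')\}$ attached to distinct time steps are, up to null sets, pairwise disjoint: their non-null members are subject to your $\formone$ consistency axioms and must carry the distinct configurations of a deterministic halting run. Hence $n$ simulated steps force $n$ essentially disjoint sets each of measure at least $\delta$, so $n \le 1/\delta$. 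Concretely, the satisfying model you exhibit for a machine halting in $N$ steps has $O(N^2)$ elements each of measure exceeding $\delta$, which is impossible once $N^2 > 1/\delta$; since $\delta$ is fixed when the sentence is written while the halting time is not computable in advance, your sentence is satisfiable only for machines halting within a fixed time bound, and the image of the reduction is decidable rather than $\Sigma^0_1$-hard. Nor can you repair this by asking the witness set to have merely positive measure: that is the strong quantifier $\Qr[0]>$, which is excluded from qE-sentences and hence from the scope of Lemma (\ref{sat_reduction''}); and plain $\exists$ reintroduces the null-witness problem you correctly identified.

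The paper's proof exists precisely to get around this obstruction. Using an auxiliary padding relation $R$ together with sentences evaluated at error parameters $\frac14$, $\frac12$, $\frac34$ (combined through Lemma (\ref{sat_reduction})), it forces the measure of each element of the time chain to equal the measure of its strict final segment, which pins these measures to the geometric sequence $2^{-i-1}$. This admits arbitrarily long non-null chains inside a probability space, and finiteness of the model then forces the chain to reach the designated endpoint $\underline{\max}$ with positive measure, where a halting state is asserted. Some such measure-manufacturing device, rather than a uniform threshold, is the missing idea in your proposal; the bookkeeping you defer (equality-free uniqueness via the congruence $\eq$, coherence of the grid) is indeed routine by comparison.
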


\begin{proof}
\let\oldforall\forall
\newcommand{\eA}{{\epsilon_0}}
\newcommand{\eB}{{\epsilon_1}}
\newcommand{\T}{\mathrm{T}}
\newcommand{\emdlhalf}{\emodelsp{\f 1 2}}
\newcommand{\Lb}{\mathbf{L}}
\newcommand{\Rb}{\mathbf{R}}
\newcommand{\minc}{\underline{\min}}
\newcommand{\maxc}{\underline{\max}}
\newcommand{\allxinN}[1]{\name{\underset{^{>0}}\oldforall x \in N (#1)}}
\newcommand{\halfpr}[2][y]{\name{\Pr_{#1}[#2] = 1/2}}

The main idea of the proof, as remarked above, is that we want to reduce the
halting problem to finite $\epsilon$E-satisfiability.
Specifically we will show
that there is a reduction from the set of Turing machines that halt on empty
input (which is $\Sigma^0_1$-complete) to the set of finite
$\epsilon$E-satisfiables.
The proof is loosely based on the proof of
Trachtenbrot's theorem in Libkin
\cite[p.~166]{libkin} and the proof of $\Sigma^1_1$-hardness of
$\epsilon$E-satisfaction in Kuyper \cite[Thm.~7.6]{sat_is_sigma_1_1}.
Since by theorem (\ref{sat_reduction}) there is a reduction between finite
$\eA$E- and $\eB$E-satisfiability in $\lL$ for any rational $\eA, \eB \in (0,
1)$, it suffices to establish the case of $\epsilon = \f 1 2$.

Suppose that $M = (Q, \nabla, \delta, q_0, Q_a, Q_r)$ is a single-tape
Turing machine, where
\begin{itemize}
    \item $Q$ is the set of states,
    \item $\nabla$ is the tape alphabet,
    \item $q_0$ is the initial state,
    \item $Q_a$ and $Q_r$ are respectively the sets of accepting and rejecting
    states, and
    \item $\delta: Q \times \nabla \to Q \times \nabla \times \{\Lb, \Rb\}$ is
    the transition function.
\end{itemize}

Since we are only interested in Turing machines with empty input,
$\nabla$ can be assumed WLOG to be $\{0, 1\}$ with 0 representing the blank
symbol.

In what follows, we break into three sections the proof for encoding the halting
of $M$ as a finite $\f 1 2$E-satisfiability problem. Section 1 describes the
first order language used. Section 2 constructs the sentence $\Psi$ which is
finitely $\f 1 2$E-satisfiability iff $M$ halts. Finally section 3 proves that
$\Psi$ indeed has such a property.

\begin{proofpart}[The vocabulary]
We define vocabulary
$$\sigma := \{\minc, \maxc, N(\cdot), \eq, <, R(\cdot, \cdot), T(\cdot, \cdot),
H(\cdot, \cdot), (S_q(\cdot))_{q \in Q}\}.$$

(The constants $\minc, \maxc$ in $\sigma$ can be replaced by unary predicates,
so the theorem as stated will still stand.)

The intuition behind this vocabulary, which will be formalized by the
axioms below, is as follows:
\begin{itemize}
\item Elements satisfying $N$ will be ``roughly" a set of positive measure,
linearly ordered (by $<$) elements that will be our measure of time and space, such that $\minc$ and
$\maxc$ are the minimal and the maximal elements of this chain ---
``Roughly", because using $\epsilon$E-logic, we cannot specify that elements of
measure 0 do not satisfy $N$ (in fact, we cannot say anything about elements of
measure 0).
A nontrivial part of this proof is used to maneuver around these ``phantom
elements".

\item $\eq$ is a binary relation mimicking equality. We avoid using true
equality so that we can use the computable reductions of theorem
(\ref{sat_reduction}).
\item $T(p, t)$, where $p, t \in N$, represents that at time $t$, there
is a 1 at position $p$ on the tape.
\item $H(p, t)$, where $p, t \in N$, represents that at time $t$, the head of
the machine is at position $p$.
\item $S_q(t)$, where $t \in N$, represents that at time $t$, the state of the
machine is $q$.
\item $R$ is an auxiliary relation that is used to force certain measures
to be equal. Its purpose will become more clear over the course of the proof.
\end{itemize}
\end{proofpart}

\begin{proofpart}[The encoding sentence]

In this section we define the sentence $\Psi$ that encodes whether $M$ will
halt.
$\Psi$ will be of the form
$$\Psi := f(\{\T_\alpha\}_{\alpha \in J})$$
where

\begin{itemize}
    \item $J = \{0, \f 1 4, \f 1 2, \f 3 4\}$
    \item $f$ is the reduction function $f_{\epsilon}$ from lemma
    (\ref{sat_reduction}) for $J$ and $\epsilon = \f 1 2$, and
    \item $\T_\alpha$ is a sentence for each $\alpha \in J$.
\end{itemize}
Thus $\Psi$ is finitely $\eA$E-satisfiable iff each $\T_\alpha$ is finitely
$\alpha$E-satisfiable.

For a formula $\phi$ in prenex normal form, we recursively define $\phi^N$
(called {\it $\phi$ relativized to $N$}):
\begin{enumerate}
  \item if $\phi$ is quantifier-free, then $\phi^N = \phi$
  \item if $\phi(\vec y) = \forall x \psi(x, \vec y)$, then $\phi(\vec y)^N =
  \forall x(N(x) \limplies \psi(x, \vec y)^N)$
  \item if $\phi(\vec y) = \exists x \psi(x, \vec y)$, then $\phi(\vec y)^N =
  \exists x (N(x) \land \psi(x, \vec y)^N)$.
\end{enumerate}

\renewcommand{\forall}{\operatorname{\underset{^{+}}\oldforall}}

$\boldsymbol{(\T_0)}$.
$\T_0$ will consist of the conjunction of the following sentences (because
$\oldforall x$ here should be interpreted as ``for almost all $x$'', or, as we
only deal with finite models here, as ``for all $x$ with positive measure,''
we will write $\forall$ for the sake of clarity):
\begin{enumerate}
\item All axioms of equality:
    \begin{enumerate}
        \item $\eq$ is an equivalence relation:
            \begin{align*}
            \forall x &(x \eq x)\\
            \forall x \forall y &(x \eq y \limplies y \eq x)\\
            \forall x \forall y \forall z &(x \eq y \land y \eq z \limplies x
            \eq z)
            \end{align*}
        \item the indiscernability of identicals: for each atomic formula $\pi$,
            $$\pi(a_1, \ldots, a_n) \land \bigmeet_{i = 1}^n a_i \eq b_i
            \limplies \pi(b_1, \ldots, b_n)$$
    \end{enumerate}

\item $<$ is a linear order on all elements of $N$ with nonzero
measure:
\begin{align*}
(\forall x \forall y &(x \eq y \lor x < y \lor y < x))^N\\
(\forall x \forall y &(\neg x \eq y \to (x < y \leftrightarrow \neg y <
x)))^N\\
(\forall x \forall y \forall z &(x < y \land y < z \to x < z))^N
\end{align*}
\label{linearorder0}
\item $\minc$ and $\maxc$ are respectively minimal and maximal in $<$:
\begin{align*}
&N(\minc) \land N(\maxc)\\
&(\forall x (x \eq \minc \lor \minc < x))^N\\
&(\forall x (x \eq \maxc \lor x < \maxc))^N
\end{align*}
\label{mincmaxc0}
\item Initially, $M$ is in state $q_0$, the head is in the first position,
and the tape has all zeros: \label{desc:M_init}
\begin{align*}
&S_{q_0}(\minc)\\
&\forall p (p \eq \minc \leftrightarrow  H(p, \minc))\\
&(\forall p \neg T(p, \minc))^N
\end{align*}
\item For any time $t$, $M$ is in a unique state: \label{desc:unique_state}
$$\left(\forall t\lp\bigvee_{q \in Q} S_q(t) \land \bigwedge_{q, q' \in Q}
\neg(S_q(t) \land S_{q'}(t))\rp\right)^N$$
\item A set of sentences encoding the transition function $\delta$.
\label{desc:trans_fun}

First we define a binary relation $\succ$.
The expression $t' \succ t$ is a shorthand for the conjunction of
\begin{itemize}
    \item ``$t'$ is greater than $t$"
        $$t < t'$$
    \item ``for all $s$, $s$ is less than $t'$ iff $s$ is at most $t$"
        $$\forall s (s < t' \lequiv (s < t \lor s \eq t))$$
    \item ``for all $s$, $s$ is greater than $t$ iff $s$ is at least $t'$"
        $$\forall s (t < s \lequiv (t' < s \lor t' \eq s)).$$
\end{itemize}

Thus $t' \succ t$ says that ``$t'$ is a successor or $t$.''

For any formula $\phi(t, \vec x)$, let $\phi(t + 1, \vec x)$ be
defined as the shorthand for the following relativized implication

    $$[\forall t'(t'\succ t \limplies \phi(t', \vec x))]^N.$$

Hence $\phi(t+1, \vec x)$ states that ``$\phi$ holds for the successor of $t$":

Similarly, let $\phi(t - 1, \vec x)$ be defined as the shorthand for

    $$[\forall t'(t\succ t' \limplies \phi(t', \vec x))]^N.$$

The expression $\phi(t-1, \vec x)$ asserts that ``$\phi$ holds for the
predecessor of $t$."

These shorthands are well-defined when applied to multiple variables. For
example, $\phi(p - 1, t + 1)$ is the shorthand for

    \begin{eqnarray*}
    &           &	[\forall p'(p \succ p' \limplies \phi(p', t + 1))]^N \\
    &\mapsto	&	[\forall p'(p \succ p' \limplies
                            [\forall t'(t' \succ t \limplies
                                        \phi(p', t'))]^N)]^N \\
    &\equiv		&	[\forall p'\forall t'(p \succ p' \land t' \succ t
                            \limplies \phi(p', t'))]^N
    \end{eqnarray*}
where in the middle $\mapsto$ means ``expands into."

Now we turn to the task of encoding the transition function.

\newcommand{\cond}{\textsc{cond}}
\newcommand{\trans}{\textsc{trans}}
Suppose the transition function has rule $\delta(q, w) = (q', w', S)$ for some
$q \in Q; w, w' \in \nabla; S \in \{\Lb, \Rb\}$.
For each $q$ and $w$, we define the sentence
    $$\rho_{q, w} := [\forall p \forall t (\cond(p, t) \limplies
                                            \trans(p, t))]^N$$
where $\cond$ and $\trans$ are constructed as follows.

\begin{enumerate}
    \item $\cond(p, t)$ checks that ``the machine $M$ at time
    $t$ has state $q$, has its head pointing at cell $p$, and the character
    under the head is $w$'':
    explicitly, $\cond(p, t)$ is the conjunction of the following:

        \begin{enumerate}
            \item ``The state of $M$ is $q$ at time $t$"

                $$S_q(t)$$

            \item ``The head of $M$ is above cell $p$"

                $$H(p, t)$$

            \item ``The character at cell $p$ is $w$" (exactly one of the
            following sentences belongs to the conjunction, depending on which
            condition is satisfied)

               $$\begin{cases}
               T(p, t) & \text{if $w = 1$}\\
               \neg T(p, t) & \text{if $w = 0$}
               \end{cases}$$

        \end{enumerate}

    \item $\trans(p, t)$ asserts that ``$M$ at time $t + 1$ has state
    $q'$ and has moved $S$ from cell $p$; the cell at $p$ now contains the
    symbol $w'$'': explicitly, $\trans(p, t)$ is the conjunction of the
    following (in every set of alternatives, exactly one of the sentences
    belongs to the conjunction, depending on which condition is satisfied):

        \begin{enumerate}
            \item ``The state of $M$ is $q'$ at time $t + 1$"
                $$S_{q'}(t + 1)$$
            \item ``At time $t + 1$:
            If $S = \Rb$, the head of $M$ is at cell $p + 1$.
            If $S = \Lb$ and $p = \minc$, the head of $M$ is at cell $\minc$.
            Otherwise, the head of $M$ is at cell $p - 1$." \label{headmove}
                $$\begin{cases}
                H(p + 1, t + 1)		&	\text{if $S = \Rb$}\\
                (p \eq \minc \limplies H(p, t + 1)) \land
                (\neg p \eq \minc \limplies H(p + 1, t + 1)
                                    &	\text{if $S = \Lb$}
                \end{cases}$$
            \item ``At time $t + 1$, cell $p$ contains symbol $w'$''
                $$\begin{cases}
                T(p, t + 1)		& \text{if $w' = 1$}\\
                \neg T(p, t + 1)& \text{if $w' = 0$}
                \end{cases}$$
            \item ``All cells other than those involved in (\ref{headmove}) are
            unaffected''. \label{desc:locality}
                $$\begin{cases}
                \forall p'[\neg p' \eq p \land \neg p'\succ p
                            \limplies
                            (T(p', t) \lequiv T(p', t + 1))]
                                        &	\text{if $S = \Rb$}\\
                \forall p'[\neg p' \eq p \land \neg p \succ p'
                            \limplies
                            (T(p', t) \lequiv T(p', t + 1))]
                                        &	\text{if $S = \Lb$}
                \end{cases}$$
        \end{enumerate}

\end{enumerate}

\item We assert that at time $\maxc$, $M$ arrives at an accepting or rejecting
state: \label{desc:halt}
$$\bigvee_{q \in Q_a \cup Q_r} S_q(\maxc)$$

\item For reasons that will become clear later, for all $x, y$ of positive
measure, we need to let $R(x, y)$ hold only when $y$ is not in $N$:

$$\forall x \forall y (R(x, y) \to \neg N(y))$$ \label{R_then_not_N}

\end{enumerate}
This finishes the description of the sentence $\T_0$.
\renewcommand{\forall}{\oldforall}

$\boldsymbol{(\T_{\f 1 4} \textbf{ and } \T_{\f 3 4})}$.
$\T_{\f 1 4}$ and $\T_{\f 3 4}$ are respectively the sentences $\forall x (x =
\minc)$ and $\forall x (x \not= \minc)$. $f_{\f 1 4}(\T_{\f 1 4})$ and $f_{\f 3
4}(\T_{\f 3 4})$ are simultaneously $\f 1 2$E-satisfiable iff the measure of
$\minc$ is at least $\f 3 4$ and the measure of all other elements is
at least $\f 1 4$. Thus $\T_{\f 1 4}$ and $\T_{\f 3 4}$ force the measure of
$\minc$ to be exactly $\f 1 4$.

$\boldsymbol{(\T_{\f 1 2})}$.
$\T_{\f 1 2}$ is the following conjunction of sentences
\renewcommand{\forall}{\operatorname{\underset{^{\ge 1/2}}\oldforall}}
(because the $\oldforall x$ here
should be interpreted as ``for a set of $x$ with measure at least $\f 1 2$'',
we write $\forall$ for the sake of clarity):
\begin{enumerate}
    \item the set of elements in $N$ takes up measure exactly $\f 1 2$:

        $$\forall x N(x) \land \forall x \neg N(x)$$ \label{halfmeasureN}

    \item we want each element of $N$ to have measure equal to the
    total measure of all greater elements.
    This is where the padding relation $R$ is used.

    First, for $\phi(x)$ a formula of a single free variable $x$, we define the
    following shorthand

        $$\allxinN{\phi(x)} := \forall x (N(x) \wedge \phi(x))$$

    The RHS says ``for a set $X$ of measure at least $1/2$, $X \sbe N$ and
    every element of $X$ satisfies $\phi$." If we assume that
    (\ref{halfmeasureN}) is $\f 1 2$E-satisfied, then $\allxinN{\phi(x)}$ is
    equivalent to ``all elements in $N$ of positive measure must satisfy
    $\phi$."

    Secondly, for formula $\psi(y, \vec x)$, we define the following shorthand

        $$\halfpr{\psi(y, \vec x)} := [\forall y \psi(y, \vec x)]
        \land [\forall y' \neg\psi(y', \vec x)]$$

    The conjuncts on the right respectively assert that ``the probability of $y$
    satisfying $\psi(y, \vec x)$ is at least $1/2$" and ``the probability of
    $y'$ not satisfying $\psi(y', \vec x)$ is at least $1/2$." Hence
    $\halfpr{\psi(y, \vec x)}$ says that ``the probability of $y$
    satisfying $\psi(y, \vec x)$ is exactly $1/2$.

    Finally, we define the actual sentences in the conjunction of $\T_{\f 1 2}$:

    (Recall that $\land$ has precedence over $\lor$ for parsing)
    \begin{align*}
    \allxinN{&x \eq \maxc \lor \halfpr{R(x, y) \lor N(y) \land x < y}}\\
    \allxinN{&x \eq \maxc \lor \halfpr{R(x, y) \lor x \eq y}}
    \end{align*}

    If clause $(\ref{R_then_not_N})$ of $\T_0$ is 0E-satisfied (``If $R(x, y)$
    holds then $y$ has measure 0 or is not in $N$"), then the disjuncts $R(x,
    y)$ and $N(y) \land x < y$ are disjoint. Therefore, the above two sentences
    together express ``for all $x$ in $N$ that's not $\maxc$ and not null, the
    measure of the strict final segment of $x$ is the same as the probability of
    $x$ itself; they are both $(\f 1 2 - \Pr_y[R(x, y)])$." Forcing the
    equality of these two measures is the purpose of the predicate $R(\cdot,
    \cdot)$, which has otherwise no uses.
    \label{pr_y[x = y] = pr_y[x < y]}

\end{enumerate}

This concludes the construction of the reducing sentence $\Psi$.
\renewcommand{\forall}{\oldforall}
\end{proofpart}

\begin{proofpart}[The reduction]
\newcommand{\uni}{W}
\newcommand{\md}{\mathcal{\uni}}
\newcommand{\posN}{{N_0^\md}}

Now we show that $M$ halts if and only if $\Psi$ has a finite
E-satisfying model.

\textbf{($M$ halts $\implies$ $\Psi$ satisfiable).}
Suppose $M$ halts in time $m$. We define the
finite satisfying model $(\md, \D)$ thus:
\begin{enumerate}[label=$\circ$]
    \item Let the universe $\uni$ be the set $\{1, 2, \ldots, 2m\}$.
    \item Let $\eq^\md$ be true equality $=$.
    \item Let $N^\md$ be $\{1, 2, \ldots, m\}$
    \item Let $a <^\md b$ for $a, b \in W$ be defined to agree with the
    natural ordering on $\uni$.
    \item Let $\minc^\md = 1$ and $\maxc^\md = m$.
    \item Define the measure $\D(i) = \D(i+m) = 2^{-i-1}$ for $i \in [1, m - 1]$,
    and define $\D(m) = \D(2m) = 2^{-m}$.
    \item Define $R^\md(i, j)$ iff $i \ge j - m \ge 1$
    \item Define $H^\md(i, j)$ iff $j \le m$ and $M$'s head is at position $i$
    at time $j$.
    \item Define $T^\md(i, j)$ iff $j \le m$ and the tape's symbol at position
    $i$ at time $j$ is 1.
    \item Define $S_q^\md(i)$ iff $M$ is in state $q$ at time $i$.
\end{enumerate}

Since all elements of $\md$ have positive measure, all $\forall$ quantifiers in
$\T_0$ are interpreted classically. Therefore one can easily check that
$(\md, \D)$ 0E-satisfies $\T_0$.

As $\minc = 1$ has measure $\f 1 4$, $\T_{\f 1 4}$ and $\T_{\f 3 4}$ are
satisfied.

$N^\md$ obviously has measure $1/2$, so the first clause of $\T_{\f 1 2}$ is
satisfied.

Finally, consider clauses (\ref{pr_y[x = y] = pr_y[x < y]}) in
$\T_{\f 1 2}$.

For a fixed $i < m$, the measure of \{$j:
R^\md(i, j)$\} is $\sum_{k=1}^{i} 2^{-k-1} = 2^{-1} - 2^{-i-1}$, and the measure
of $j \in N^\md$ such that $i < j$ is $\sum_{k=i+1}^{m-1} 2^{-k-1} + 2^{-m} =
2^{-i-1}$. Thus, for this fixed $i$,

    $$\Pr_{j \sim \D}[R(i, j) \lor N(j) \land i < j] = \f 1 2.$$

Letting $i$ vary, we can conclude that ``for all $x$ in $\posN$, either $x$ is
$\maxc^\md$ or the probability of $y$ such that $R(x, y) \lor N(y) \land x < y$
holds is exactly $1/2$." In other words, the following clause in $\T_{\f 1 2}$

    $$\allxinN{x \eq \maxc \lor \halfpr{R(x, y) \lor N(y) \land x < y}}$$
holds in $(\md, \D)$.

Similarly, as $\D(i) = 2^{-i-1}$ for any $i < m$,

    $$\Pr_{j \sim \D}[R(i, j) \lor i \eq j] = \f 1 2,$$
so the following clause in $\T_{\f 1 2}$

    $$\allxinN{x \eq \maxc \lor \halfpr{R(x, y) \lor x \eq y}}$$
is $\f 1 2$E-satisfied by $(\md, \D)$.

Therefore $(\md, \D)$ $\f 1 2$E-satisfies all of $\T_{\f 1 2}$, as desired.

\textbf{($M$ halts $\Longleftarrow$ $\Psi$ satisfiable).}
Let $(\md, \D)$ be a finite $\f 1 2$-model. By
proposition (\ref{finite_model_extension}), we can assume $\D$ is defined on all
subsets of $\md$ and is a probability model. Suppose $(\md, \D) \emdlhalf \Psi$.
We wish to show that the Turing machine $M$ halts. Our strategy will be to show
that $\minc^\md$ and $\maxc^\md$ have positive measures, and every element
between them has positive measure. Then checking the sentences $\T_0$ encoding
$M$ becomes straightforward, as $\forall$ is interpreted classically on this
linear chain.

Let $\posN$ be the subset of elements of $N^\md$ of positive measure.
By the axioms of equality of $\T_0$, the restriction $\eq_0^\md$
of $\eq^\md$ to $\posN$ is an equivalence relation and satisfies the
indiscernability of identicals.
Therefore $\md \equiv \md/\!\!\eq_0^\md$ as first order models, and
for all $\epsilon$,
$(\md, \D)$ is $\epsilon$-elementarily equivalent to the probability model
$(\md/\!\!\eq_0^\md, \D')$, where $\D'$ is defined by assigning each
equivalence class $[a]$ (as a point of $\md/\!\!\eq_0^\md$) the
$\D$-measure of $[a]$ as a set. We are thus justified in assuming that $\eq$ is
true equality on $\posN$ henceforth.

By $\T_{\f 1 4}$ and $\T_{\f 3 4}$ we know that $\D(\minc^\md) = \f 1 4$, so
$\minc^\md \in \posN$. By clause (\ref{halfmeasureN}) of $\T_{\f 1 2}$ we know
that $\D(N^\md) = \f 1 2$.

Any $\forall$ quantifier relativized to $N$ can be interpreted
classically on $\posN$. Thus, by clauses (\ref{linearorder0}) of
$\T_0$, $<^\md$ defines a linear order on $\posN$, and, by clauses
(\ref{mincmaxc0}) of $\T_0$, $\minc^\md$ is the unique minimal element. However,
it is not immediate whether $\maxc^\md$ has positive measure and thus is the
unique maximal element of $\posN$.

But by clause (\ref{R_then_not_N}) of $\T_0$, we have $(\md, \D) \emodelz
\forall x \forall y (R(x, y) \to \neg N(y))$.
Then, for a fixed $x$ of positive measure, the set of $y$ where $R^\md(x,
y)$ holds intersects $N^\md$ with measure 0. Thus, by clauses
(\ref{pr_y[x = y] = pr_y[x < y]}) of $\T_{\f 1 2}$, if $a \in \posN$ is not
$\maxc^\md$, then the equations
\begin{align*}
    \Pr_{y\sim \D}[R(a, y)] + \Pr_{y \sim \D}[N(y) \land a < y] & = \f 1 2
        \qquad \text{and}\\
    \Pr_{y\sim \D}[R(a, y)] + \Pr_{y \sim \D}[a \eq y] & = \f 1 2
\end{align*}
hold.
For such an $a$,
    \begin{equation}\label{D(y) = pr_y[x < y]}
    \Pr_{y \sim D}[N(y) \land a < y] = \Pr_{y \sim \D}[a \eq y] = \D(a).
    \tag{$\star$}
    \end{equation}

\begin{lma}
$\maxc^\md \in \posN$
\end{lma}

\begin{proof}
\newcommand{\two}{\underline{2}^\md}

Let $\two$ denote the immediate successor of $\minc^\md$ in $\posN$; it exists
since $1/4 = \D(\minc^\md) < \D(N^\md) = 1/2$ and $\minc^\md$ is minimal in
the finite set $\posN$.
Suppose $\two \not= \max^\md$, and
    $$\tau := \D(\two) \quad \text{ and } \quad
    \xi := \Pr_{y \sim \D}[N(y) \land \two < y].$$
Then we have
\begin{align*}
    \tau + \xi = \Pr_{y \sim \D}[N(y) \land \minc^\md < y]
                        &= \D(\minc^\md) = \f 1 4,	&&	\text{and}\\
    \tau 				&= \xi,						&&	\text{by
                                                    (\ref{D(y) = pr_y[x < y]}).}
\end{align*}

Hence $\D(\two) = \tau = \xi = \f 1 8$.

\newcommand{\msucc}[1][m]{\underline{#1}^\md}
In general, if the $(n-1)$-fold successor $\msucc[n]$ of $\minc^\md$ has
probability $2^{-n-1}$ for each $n < m$, then the $(m-1)$-fold successor
$\msucc$ of $\minc^\md$ exists, and a) is either $\maxc^\md$, or b) satisfy the following equations
\begin{align*}
    \Pr_{y \in \D}[N(y) \land \msucc[m-1] < y] &= \Pr_{y \in \D}[N(y) \land
    \msucc[m] < y] + \D(\msucc) && \text{and}\\
    \D(\msucc) &= \Pr_{y \in \D}[N(y) \land \msucc < y] &&
                                                    \text{by
                                                    (\ref{D(y) = pr_y[x < y]}),}
\end{align*}
which, along with
\begin{eqnarray*}
  &  & 1/2 - 2^{-m} + \Pr_{y \in \D}[N(y) \land \msucc[m-1] < y]\\
  & = & \D(\minc) + \sum_{n=2}^{m-1} \D(\msucc[n]) + \Pr_{y \in \D}[N(y) \land
  \msucc[m-1] < y]\\
  & = & \Pr_{y \in \D}[N(y)] = 1/2,
\end{eqnarray*}
imply $\Pr_{y \in \D}[N(y) \land \msucc[m-1] < y] = 2^{-m}$ and thus
$\D(\msucc) = 2^{-m-1}$.

Now assume for the sake of contradiction that $\D(\maxc^\md) = 0$. Then $\msucc
= \maxc^\md$ for no finite $m$. This would mean that the $m$-fold successor of
$\minc^\md$ exists for all finite $m$. But $\posN$ is finite, so this cannot be
true. Therefore $\maxc^\md$ must have positive measure, as desired.
\renewcommand{\qedsymbol}{$\blacksquare$}
\end{proof}

We have thus shown that $\posN$ is a linear chain ordered by $<^\md$, with
minimal element $\minc^\md$ and maximal element $\maxc^\md$. This structure
allows us to interpret the sentences in $\T_0$ encoding the Turing machine $M$
classically.

Indeed, we can construct the computation history of $M$ as follows:
\begin{quote}
At time $t$,
\begin{enumerate}[label=---]
    \item the tape has a 1 at position $p$ iff $T(p, t)$ holds,
    \item the head of $M$ is above cell $p$ iff $H(p, t)$ holds, and
    \item the state of $M$ is $q$ iff $S_q(t)$ holds.
\end{enumerate}
\end{quote}

Now, aided by the verbal translation provided in the description of $\T_0$, we
can verify that
\begin{enumerate}[label=$\circ$]
    \item Initially, $M$ is in state $q_0$, the head is in the first position,
    and the tape has all 0s (clauses (\ref{desc:M_init}) of $\T_0$).
    \item At any time, $M$ is in one state and one state only
        (clause (\ref{desc:unique_state}) of $\T_0$).
    \item The tape and $M$'s head position and state are updated correctly
    according to $\delta$ (clauses (\ref{desc:trans_fun}) of $\T_0$).
    In particular, all cells not specified by the update
    rule have the same symbol after the update (clauses (\ref{desc:locality})).
    \item At time $\maxc^\md$, $M$ is in either an accepting or rejecting state
    (clause (\ref{desc:halt}) of $\T_0$).
\end{enumerate}

Hence, the $\f 1 2$E-satisfaction of $\Psi$ implies $M$ halts.
\end{proofpart}
\end{proof}

Finally, to complete our proof that finite $\epsilon$E-satisfiability is
$\Sigma^0_1$-complete, we prove

\begin{thm}
For both X = E and X = F, finite $\epsilon$X-satisfiability is
$\Sigma^0_1$-definable for rational $\epsilon \in (0, 1)$ and any first order
language.
\label{emodels_finite_sat_is_sigma_1}
\end{thm}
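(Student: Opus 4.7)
The plan is to show that the set of finitely $\epsilon$X-satisfiable sentences is computably enumerable by dovetailing an effective search through triples $(\M, T, \D)$ consisting of a finite first-order $\lL$-model, a candidate tree, and a rational-valued probability measure. By Proposition~\ref{prop:sat_reduce2tree}, $\phi$ is finitely $\epsilon$X-satisfiable iff some finite $\pmd$ admits an $\epsilon$X-tree in $\pmd$ for $\phi$, so such an enumeration yields the desired $\Sigma^0_1$ definition.

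Given $\phi$, by Proposition~\ref{prenex} assume it is in prenex normal form with quantifier block $\varq_1 x_1 \cdots \varq_n x_n$. The algorithm would enumerate $k \in \N$, then all interpretations on universe $\{1, \ldots, k\}$ of the finitely many non-logical symbols occurring in $\phi$ (finitely many such $\M$ for each $k$, and interpretations of symbols not appearing in $\phi$ are irrelevant), and then all trees $T$ in $\{1, \ldots, k\}$ of height $n$ (finitely many for fixed $k$ and $n$). For each such pair $(\M, T)$, the branch condition \ref{prpty:Ebranch} (for X = E) or \ref{prpty:Fbranch} (for X = F) is effectively verifiable, since it amounts to finitely many evaluations of the quantifier-free kernel of $\phi$ in $\M$.

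What remains is to decide whether a rational probability measure $\D$ on $\{1, \ldots, k\}$ exists such that $T$ becomes an $\epsilon$X-tree in $\pmd$ for $\phi$. Exactly as in the proofs of Theorems~\ref{thm:E-monadic_decidable} and~\ref{thm:F-monadic_decidable}, this reduces to a linear program in variables $\{\mu_a : a \in \{1, \ldots, k\}\}$: include $\sum_{a \in V} \mu_a \ge 1 - \epsilon$ for each $\forall$-level node $V$ (in the E case), or $\sum_{a \in V} \mu_a > \epsilon$ for each $\exists$-level node $V$ (in the F case), together with the probability constraints $\mu_a \ge 0$ and $\sum_a \mu_a = 1$. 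Since $\epsilon \in \Q$, all coefficients are rational, and feasibility is decidable by Proposition~\ref{prop:LPrat_decidable} (X = E) or Lemma~\ref{lma:strict_LP_decide} (X = F). The enumeration outputs $\phi$ as soon as some triple $(\M, T, \D)$ survives all checks.

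The main concern, and essentially the only content beyond the machinery already developed, is completeness: an arbitrary satisfying finite $\pmd$ may use an irrational $\D$. This is handled automatically by the fact that a feasible rational linear program --- whether with weak or with strict inequalities --- has a rational solution. In the weak case, the feasible polytope is rational and thus has rational vertices; in the strict case, the feasible set is relatively open inside the rational affine subspace cut out by the equalities, hence meets $\Q^k$ by density. Consequently, whenever $\phi$ is witnessed by any finite $\epsilon$-model, the corresponding LP is feasible over the reals, hence also over the rationals, so some rational triple $(\M, T, \D')$ is eventually encountered, yielding the $\Sigma^0_1$ enumeration.
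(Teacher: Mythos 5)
Your proposal is correct, but it reaches the conclusion by a genuinely different route from the paper. The paper's proof enumerates finite classical models together with rational point measures directly (coding the measure by a finite sequence of naturals and normalizing) and mechanically checks satisfaction; its completeness argument is a perturbation lemma (\ref{rat_pert}, \ref{rational_finite_model}) showing that every finite $\epsilon$-model is $\epsilon$-elementarily equivalent to one whose point masses are all rational and positive, obtained by choosing the perturbation radius $\delta$ small enough that no comparison with $1-\epsilon$ flips. You instead route everything through the tree characterization of Proposition (\ref{prop:sat_reduce2tree}) and reduce the existence of a suitable measure for a fixed pair $(\M, T)$ to a rational linear program, exactly as in Theorems (\ref{thm:E-monadic_decidable}) and (\ref{thm:F-monadic_decidable}); completeness then comes essentially for free from the fact that a feasible rational system has a rational feasible point. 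Your version buys a terminating inner loop (for each fixed universe size, the existence of a witnessing measure is decided rather than searched for over infinitely many rational measures) and avoids the $\delta$-bookkeeping of the perturbation lemma; the paper's version buys a reusable and somewhat stronger statement (Lemma \ref{rational_finite_model}) and needs neither prenex normal form nor the tree machinery. One small point to patch in your F case: the system mixes strict constraints $\sum_{a\in V}\mu_a > \epsilon$ with weak ones $\mu_a \ge 0$, which is not literally of the form $Ax<b,\ Bx=c$ covered by Lemma (\ref{lma:strict_LP_decide}); either case-split over which of the $\mu_a$ vanish (reducing to finitely many all-strict-plus-equality systems), or observe that for mere $\Sigma^0_1$-definability you may skip LP decidability entirely and enumerate rational measures directly, since your density argument already guarantees that a rational witness exists whenever a real one does.
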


Before the proof, we will need the following perturbation results, which allow
us to ``shake up'' the probability measures of each $\epsilon$-model into a
nicer form.

\begin{lma}
Suppose $(M, \dom \D, \D)$ is a measure space such that $M$ is finite and $\D$
is defined on all subsets of $M$. Then for any $\delta > 0$, there exists a measure
$\D'$ defined on all subsets of $M$ such that
\begin{itemize}
  \item all values of $\D'$ are rational,
  \item $\D(S) = \D'(S)$ whenever $\D(S)$ is rational and positive,
  \item $\max_{S \subseteq M}|\D(S) - \D'(S)| < \delta$, and
  \item $\D(S) > 0$ for all $S \sbe M$
\end{itemize}
\label{rat_pert}
\end{lma}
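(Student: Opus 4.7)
The plan is to exploit the finite-dimensional $\Q$-linear algebra hiding inside the real-valued point masses. Since $M = \{a_1, \ldots, a_n\}$ is finite and $\D$ is defined on all of $\powerset(M)$, the measure is determined by the point masses $p_i := \D(\{a_i\})$. I would take $V \sbe \mathbb{R}$ to be the $\Q$-linear span of $\{1, p_1, \ldots, p_n\}$, fix a $\Q$-basis of $V$ of the form $\{1, \beta_1, \ldots, \beta_k\}$, and expand each mass uniquely as $p_i = q_i^0 + \sum_{j=1}^{k} q_i^j \beta_j$ with $q_i^j \in \Q$. For any $S \sbe M$, setting $\phi_j(S) := \sum_{i \in S} q_i^j$, one has $\D(S) = \phi_0(S) + \sum_{j \ge 1} \phi_j(S) \beta_j$; the $\Q$-linear independence of $\{1, \beta_1, \ldots, \beta_k\}$ then shows that $\D(S)$ is rational precisely when $\phi_j(S) = 0$ for every $j \ge 1$, in which case $\D(S) = \phi_0(S)$.

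Next I would replace each $\beta_j$ by a rational $\beta'_j$ very close to $\beta_j$, and set $p'_i := q_i^0 + \sum_j q_i^j \beta'_j$ and $\D'(S) := \sum_{i \in S} p'_i$. Each $p'_i$ is rational by construction. The identity $\sum_i p_i = 1$ together with the linear independence just noted forces $\sum_i q_i^0 = 1$ and $\sum_i q_i^j = 0$ for $j \ge 1$, so $\sum_i p'_i = 1$ and $\D'$ is a probability measure. Whenever $\D(S)$ is rational, $\phi_j(S) = 0$ for $j \ge 1$, so $\D'(S) = \phi_0(S) = \D(S)$; this takes care of the preservation clause on all rational positive values in one stroke. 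For the uniform closeness, $|\D(S) - \D'(S)| \le \sum_j |\phi_j(S)| \cdot |\beta_j - \beta'_j|$, and since the $|\phi_j(S)|$ are bounded by a constant depending only on the rationals $q_i^j$, the right-hand side can be driven below $\delta$ by choosing the rational $\beta'_j$ close enough to $\beta_j$.

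The delicate step, which I expect to be the main obstacle, is the positivity clause for originally null elements: if $p_i = 0$ then $\Q$-linear independence forces $q_i^0 = 0$ and $q_i^j = 0$ for every $j$, so the construction above still yields $p'_i = 0$. For elements with $p_i > 0$, continuity ensures $p'_i > 0$ once the $\beta'_j$ are close enough, so the problem is localized to the set $Z := \{i : p_i = 0\}$. To lift positivity to $Z$ I would apply a secondary rational correction: pick some $a_{i_0}$ with $p_{i_0} > 0$ (one exists since $\sum p_i = 1$) and transfer an arbitrarily small rational amount $\eta$ from $a_{i_0}$ to each element of $Z$. This keeps all masses rational and preserves the normalization $\sum_i p'_i = 1$, and by choosing $\eta$ smaller than the slack remaining in the $\delta$-bound, the uniform closeness clause is maintained. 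The slight tension is that $\D'(S) = \D(S)$ can fail on sets $S$ that split $a_{i_0}$ from some element of $Z$; the discrepancy there is at most $|Z|\eta$, which can be absorbed into the same $\delta$-tolerance that the downstream satisfiability argument is prepared to accept.
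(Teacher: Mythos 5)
Your argument reaches the same essential conclusion as the paper's, but by a different route. The paper enumerates the sets $S_j$ on which $\D$ is rational and positive, writes the resulting constraints as a rational linear system on the vector of point masses, row-reduces it, and asserts that one can perturb to a nearby rational solution of the same system; you instead expand the point masses over a $\Q$-basis $\{1,\beta_1,\dots,\beta_k\}$ of their rational span and rationalize the $\beta_j$. Both amount to picking a rational point close to the mass vector inside a rational affine subspace through it, but your version makes explicit \emph{why} every rational value of $\D$ is preserved (the linear-independence computation showing $\D(S)\in\Q$ iff all the coefficients $\phi_j(S)$ with $j\ge 1$ vanish), where the paper leaves this at ``it's then clear,'' and it in fact preserves \emph{all} rational values of $\D$, not merely the positive ones. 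The normalization argument ($\sum_i q_i^0=1$ and $\sum_i q_i^j=0$ for $j\ge1$) and the uniform bound $|\D(S)-\D'(S)|\le\sum_j|\phi_j(S)|\,|\beta_j-\beta'_j|$ are both correct, so for the first three bullets your proof is complete and, if anything, tighter than the paper's.

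On the positivity clause (which, as written, says ``$\D(S)>0$'' but is surely meant to read ``$\D'(S)>0$''), you have put your finger on a defect in the lemma itself rather than a gap in your own argument. As you observe, a null point has all coordinates zero and therefore stays null under any rationalization of the basis; worse, positivity genuinely conflicts with exact preservation of positive rational values. With $M=\{1,2\}$, $\D(\{1\})=1$, $\D(\{2\})=0$, the second bullet forces $\D'(\{1\})=\D'(M)=1$ and hence $\D'(\{2\})=0$, so no $\D'$ can satisfy the second and fourth bullets simultaneously. The paper's own proof simply asserts that the perturbed masses can all be taken positive while the constraint system is preserved, which is false in general; the paper then concedes after Lemma (\ref{rational_finite_model}) that positivity is not actually needed for the application. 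Your proposed mass-transfer repair, trading exact preservation for an extra error of at most $|Z|\eta$, is a sensible way to salvage a weakened fourth bullet, but you should say explicitly that what you prove is a modified lemma (either drop positivity, as the downstream use permits, or relax the equality in the second bullet), since the statement as written is not provable.
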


\begin{proof}
\newcommand{\vv}{\textbf{v}}
\newcommand{\pp}{\textbf{p}}
\newcommand{\qq}{\textbf{q}}
\newcommand{\RR}{\textbf{R}}

WLOG let $M = \{1, 2, \ldots, m\}$. Then $\D$ is uniquely determined by its
values on $i \in M$. Let $\pp = \la p_i\ra_{i=1}^m$ represent this vector. Thus
$\D(S) = \vv_S \cdot \pp$ where $\vv_S$ is the vector whose value at position
$i$ is 1 if $i \in S$ and 0 otherwise.

Let $\la S_j \ra_{j=1}^k$ be an enumeration of all $S \subseteq M$ such that
$\D(S)$ is rational and positive. We can then form the matrix $\RR$ with row
vectors $\vv_{S_j}$ and the column vector $\qq = \la \D(S_j)\ra_{j=1}^k$.
Immediately, we have
    $$\RR \pp = \qq.$$

Since all entries of $\RR$ and $\qq$ are rational, by Gaussian elimination, we
can reduce the associated matrix $\RR | \qq$ to row echelon form $\RR' | \qq'$
with all rational entries. It's then clear that we can perturb each value of
$\pp$ by less than $\delta / |M|$ to get $\pp'$ such that 1) each entry of
$\pp'$ is positive and rational, and 2) $\RR' \pp' = \qq'$ and thus $\RR \pp' =
\qq$ and all positive rational values of $\D$ are unaffected.
Extending the point measure $\pp'$ linearly to a measure over all subsets of $M$
gives the desired result.
\end{proof}

\begin{lma}
Let $\epsilon \in (0, 1)$ be rational. If $\pmd$ is a finite
$\epsilon$-model, then for some measure $\D'$ with $\dom \D'=
\powerset(M)$ such that $\D'(x)$ is rational and positive for all $x \in M$,
$\pmodel M {D'}$ is $\epsilon$-elementarily equivalent to $\pmd$.
\label{rational_finite_model}
\end{lma}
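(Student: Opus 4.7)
The plan is to combine the two previous lemmas: first use proposition (\ref{finite_model_extension}) to replace $\pmd$ by an $\epsilon$-equivalent probability model in which $\D$ is everywhere defined on $\powerset(M)$, and then apply the perturbation lemma (\ref{rat_pert}) with $\delta$ chosen small enough that no measure threshold relevant to the semantics of $\forall$ is crossed when we pass to $\D'$. The main subtlety is ensuring that threshold-crossing cannot occur even when some definable set has $\D$-measure exactly equal to $1-\epsilon$; this is where rationality of $\epsilon$ becomes essential.

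Concretely, I first invoke (\ref{finite_model_extension}) to assume $\dom \D = \powerset(M)$. Because $M$ is finite, the set $\{\D(S) : S \subseteq M\}$ is finite, so
$$\eta := \min\{\,|\D(S) - (1-\epsilon)| : S \subseteq M,\ \D(S) \neq 1-\epsilon\,\}$$
is strictly positive (with the convention $\min \emptyset = 1$). I then apply lemma (\ref{rat_pert}) with any $\delta < \eta$ to obtain a measure $\D'$ on $\powerset(M)$ such that (i) $\D'(x) > 0$ and $\D'(x) \in \Q$ for every $x \in M$, (ii) $|\D(S) - \D'(S)| < \delta$ for all $S \subseteq M$, and (iii) $\D'(S) = \D(S)$ whenever $\D(S)$ is positive and rational.

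I then prove $\pmd \emodels \phi \iff \pmodel M {D'} \emodels \phi$ by induction on the complexity of $\phi$ (allowing parameters from $M$). The atomic, boolean, and $\exists$ cases are immediate since the underlying first-order model $\M$ is unchanged and these clauses in definition (\ref{def:E-truth}) never reference the measure. For $\phi(\vec y; \vec p) = \forall x\, \psi(x, \vec y; \vec p)$, the induction hypothesis gives that the set
$$A := \{a \in M : \pmd \emodels \psi(a, \vec y; \vec p)\}$$
coincides with $\{a \in M : \pmodel M {D'} \emodels \psi(a, \vec y; \vec p)\}$, so it suffices to check that $\D(A) \geq 1 - \epsilon$ iff $\D'(A) \geq 1 - \epsilon$. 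If $\D(A) \neq 1 - \epsilon$, then $|\D(A) - (1-\epsilon)| \geq \eta > \delta$, so $\D$ and $\D'$ agree on which side of the threshold $A$ lies on. If $\D(A) = 1 - \epsilon$, then since $\epsilon$ is rational we have $\D(A)$ positive and rational, so $\D'(A) = \D(A) = 1 - \epsilon$ by property (iii), and again the threshold condition is preserved.

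The hard part of the statement is really this last case: an arbitrary real-valued perturbation would freely push $A$ across the threshold, so the key ingredient is the ``rationality-preserving'' clause of lemma (\ref{rat_pert}), which in turn requires $1 - \epsilon \in \Q$. Everything else is bookkeeping: once the two preceding lemmas are in hand, the induction on formulas goes through mechanically, and the conclusion $\pmd \equiv_\epsilon \pmodel M {D'}$ follows.
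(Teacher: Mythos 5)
Your proposal is correct and follows essentially the same route as the paper: reduce to an everywhere-defined measure via Proposition (\ref{finite_model_extension}), perturb via Lemma (\ref{rat_pert}) with $\delta$ below the smallest gap to the threshold $1-\epsilon$, and observe that sets of measure exactly $1-\epsilon$ are untouched because $1-\epsilon$ is positive and rational. Your choice of $\eta$ (minimizing over all $S$ with $\D(S)\neq 1-\epsilon$) differs cosmetically from the paper's $\delta$ (which minimizes over sets of irrational measure together with the term $|1-\epsilon|$), but both yield the same threshold-preservation property $\D(S)\ge 1-\epsilon \iff \D'(S)\ge 1-\epsilon$, and the induction you spell out is the one the paper leaves implicit.
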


\begin{proof}
By proposition (\ref{finite_model_extension}) we may assume that $\D$ is defined
on all subsets of $M$. Using lemma (\ref{rat_pert}) with
    $$\delta = \f 1 2\min\lp|1 - \epsilon|, \min_{\substack{S \subseteq M\\
                            \D(S) \not \in \Q}}
                |\D(S) - (1 - \epsilon)|\rp > 0,$$
there is a measure $\D'$ defined on all subsets of $M$ such that $\D'$ has
all rational and positive values, $\D'$ differs from $\D$ only on sets of
irrational $\D$-measure, and this difference is uniformly bounded by $\delta$. In
particular, $\D(S) \ge 1 - \epsilon \iff \D'(S) \ge 1 - \epsilon$, so (by an
easy induction argument) $\pmd$ is $\epsilon$-elementary equivalent to $\pmodel
M {D'}$. Therefore $\pmodel M {D'} \emodels \phi$.
\end{proof}

For the following proof, we do not actually need $\D'(x)$ to be positive, but
this lemma provides an alternative justification for the assumption in
the proof of (\ref{thm:F-monadic_decidable}).

\begin{proof}[Proof of thm~(\ref{emodels_finite_sat_is_sigma_1})]
\newcommand{\bt}{\textbf{t}}
Let $\la \bt^i \ra_{i\ge 1}$ be an effective enumeration of all finite sequences
of $\omega$ (for example by G\"odel's $\beta$ function). For each finite
sequence $\bt^i$ let $|\bt^i|$ denote the length of the sequence, let $\bt^i_j$
denote the $j$th element of $\bt^i$, for $1 \le j \le |\bt^i|$, and let
$\|\bt^i\|$ denote the sum of its elements $\sum_{j=1}^{|\bt^i|} \bt^i_j$.

Let $\sigma$ be the vocabulary used in $\phi$.
To test whether a sentence $\phi$ is $\epsilon$E-satisfiable by a finite model,
we inspect $\bt^i$ in sequence for $i = 1, 2, \ldots.$ For each $\bt^i$ we
form all classical models with signature $\sigma$ on $|\bt^i|$ elements $\{1, 2, \ldots,
|\bt^i|\}$. There are only a finite number of them since $\sigma$ is finite. We
turn these classical models into $\epsilon$-models by imbuing them with the
measure $\D$ defined by $\D(j) = \bt^i_j/\|\bt^i \|$. We can then mechanically check whether any of
them satisfy $\phi$.

If $\pmd \emodels \phi$ for some finite $\epsilon$-model $\pmd$, then WLOG we
can assume $M = \{1, 2, \ldots, m\}$ for some $m$, $\D$ to be
defined on all subsets of $\M$ (by proposition (\ref{finite_model_extension})),
and $\D$ to have all rational values (by lemma (\ref{rational_finite_model})).
Thus there exists an integer $r$ such that for each $1 \le i \le m$, $\D(i) =
r_i/r$ and $r_i$ is an integer. Our algorithm described above will then
terminate before $k+1$ outer loops, where $\bt^k = \la r_i\ra_{i=1}^m$.

This shows that the $\epsilon$E-satisfiability problem for rational $\epsilon$
is $\Sigma^0_1$-definable. The $\epsilon$F-satisfiability case is resolved
identically.
\end{proof}

Combined with theorem (\ref{thm:Esat_Sigma1-hard}), this result shows
\begin{cor}
Let $\lL$ be a countable first order language with an infinite number of unary
relations and at least three binary relations.
For any rational $\epsilon \in (0, 1)$, finite $\epsilon$E-satisfiability is
$\Sigma^0_1$-complete.
Equivalently, finite $\epsilon$F-validity is $\Pi^0_1$-complete.
\label{thm:Esat_sigma1_complete}
\end{cor}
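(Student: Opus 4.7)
The plan is to assemble the two halves of the corollary from results already established in this section and from the duality recorded in Proposition (\ref{fereduction}). For the first claim, finite $\epsilon$E-satisfiability is $\Sigma^0_1$-hard by Theorem (\ref{thm:Esat_Sigma1-hard}), whose hypotheses on $\lL$ (infinitely many unary predicates, at least three binary predicates) match those of the corollary, and it is $\Sigma^0_1$-definable over any countable first order language by Theorem (\ref{emodels_finite_sat_is_sigma_1}). Combining these two assertions immediately yields $\Sigma^0_1$-completeness.

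For the dual claim, I would observe that Proposition (\ref{fereduction}), although originally stated in the unrestricted setting, goes through verbatim when both quantifications over $\epsilon$-models are restricted to the class of finite $\epsilon$-models: by definition $\pmd \fmodels \phi \iff \pmd \not\emodels \neg\phi$, so $\phi$ is finitely $\epsilon$F-valid iff every finite $\pmd$ fails to $\epsilon$E-satisfy $\neg\phi$ iff $\neg\phi$ is not finitely $\epsilon$E-satisfiable. Thus the computable involution $\phi \mapsto \neg\phi$ provides a many-one reduction in each direction between the set of finitely $\epsilon$F-valid sentences and the complement of the set of finitely $\epsilon$E-satisfiable sentences.

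Given this computable inter-reduction, $\Sigma^0_1$-completeness of finite $\epsilon$E-satisfiability transfers to $\Pi^0_1$-completeness of finite $\epsilon$F-validity: $\Pi^0_1$-definability follows by pulling back the $\Sigma^0_1$-definition of satisfiability through $\phi \mapsto \neg\phi$ and complementing, and $\Pi^0_1$-hardness follows by composing the $\Sigma^0_1$-hardness reduction into finite $\epsilon$E-satisfiability with $\phi \mapsto \neg\phi$ to reduce any $\Pi^0_1$ set into finite $\epsilon$F-validity.

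No real obstacle is expected; the only sanity check is the restricted-class version of Proposition (\ref{fereduction}) mentioned above, which is immediate from the inductive definition of $\fmodels$ in terms of $\emodels$ and does not depend on whether the ambient class of models is all $\epsilon$-models or merely the finite ones.
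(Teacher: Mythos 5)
Your proposal is correct and matches the paper's own (largely implicit) argument: the corollary is obtained exactly by combining the $\Sigma^0_1$-hardness of Theorem (\ref{thm:Esat_Sigma1-hard}) with the $\Sigma^0_1$-definability of Theorem (\ref{emodels_finite_sat_is_sigma_1}), and the $\epsilon$F-validity statement follows from the duality of Proposition (\ref{fereduction}) relativized to finite models. Your explicit check that the duality $\pmd \fmodels \phi \iff \pmd \not\emodels \neg\phi$ restricts harmlessly to the class of finite $\epsilon$-models is a worthwhile detail that the paper leaves tacit, but it is the same route.
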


\subsection{Finite and Countable $\epsilon$E-validity}
\label{ssec:eval}

Like the unrestricted case, the set of finitely (resp. countably) 0E-valid
sentences also coincides with the set of finitely (resp. countably) classically
valid sentences.
With Kuyper's inter-reduction result for $\epsilon$E-validities and the
$\Sigma^0_1$-definability derived from the last section, we can characterize
$\epsilon$E-validity over finite models precisely as $\Pi^0_1$-complete
whenever $\epsilon \in \Q$.

\begin{thm}
For any countable first order language, the set of finitely 0E-valid sentences
is exactly the set of finitely classically valid sentences.
The set of countably 0E-valid sentences is exactly the set of
classically valid sentences.
\end{thm}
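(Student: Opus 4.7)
The plan is to establish each equality as two inclusions, using the Lowenheim--Skolem theorem once in the countable case to reduce to countable classical models. The forward direction --- every classically valid sentence is 0E-valid over the respective class --- follows from a straightforward structural induction on sentences $\phi$ in prenex normal form, where the only nontrivial step concerns $\forall$: if $\M \models \forall x\,\phi'(x,\vec p)$, the induction hypothesis gives $\pmd \emodelz \phi'(a,\vec p)$ for every $a \in M$, hence $\{a : \pmd \emodelz \phi'(a,\vec p)\} = M$ has $\D$-measure $1$, so $\pmd \emodelz \forall x\,\phi'(x,\vec p)$. Since every finite (resp.\ countable) 0-model has an underlying finite (resp.\ countable) classical model, this handles the direction (classically valid) $\Rightarrow$ (0E-valid) for both cases simultaneously.

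For the reverse inclusion in the finite case, I would, given any finite classical model $\M$, equip $M$ with the uniform distribution $\D$. Then $\pmd$ is a finite probability model (and therefore a finite 0-model), and the crucial feature is that every singleton of $M$ has strictly positive measure. Consequently, a subset of $M$ has $\D$-measure $1$ if and only if it equals $M$, so the 0E interpretation of $\forall$ reduces to the classical universal quantifier on $\pmd$. A symmetric induction then yields $\pmd \emodelz \phi \iff \M \models \phi$, and finite 0E-validity of $\phi$ forces $\M \models \phi$, as required.

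The countable case proceeds identically, with two small adjustments. First, by Lowenheim--Skolem, it suffices to conclude $\M \models \phi$ for every countable classical $\M$. Second, since the uniform distribution is unavailable on infinite sets, I would instead enumerate $M = \{a_n\}_{n \geq 1}$ and set $\D(a_n) = 2^{-n}$, extended by countable additivity to an everywhere-defined probability measure on $\powerset(M)$. Every singleton still has positive measure, so the induction collapses $\pmd \emodelz \phi$ to $\M \models \phi$ just as before, and countable 0E-validity yields $\M \models \phi$. The one point of care throughout is choosing $\D$ in the reverse direction to be positive on every singleton, so that the measure-$1$ reading of $\forall$ collapses to the classical ``for all''; once that is arranged, no quantitative measure-theoretic work remains.
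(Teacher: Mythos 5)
Your proposal is correct and follows essentially the same route as the paper: the forward inclusion is the easy induction, and the reverse inclusion equips each finite classical model with the uniform distribution (resp.\ each countable model with $\D(a_n)=2^{-n}$) so that every singleton has positive measure and the measure-$1$ reading of $\forall$ collapses to the classical one, with Lowenheim--Skolem handling the passage from unrestricted to countable classical validity. The only difference is that you spell out the ``classically valid $\Rightarrow$ 0E-valid'' induction that the paper dismisses as obvious.
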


\begin{proof}
Obviously every finitely classically valid sentence is a finitely 0E-valid
sentence.

Now suppose $\phi$ is finitely 0E-valid. Then for any $n$, $\pmd
\emodelz \phi$ for all classical models $\M$ of size $n$ and $\D$ the uniform
distribution. But in such 0-models, $\forall$ has the same
interpretation as classically. Hence all finite classical models satisfy $\phi$,
as desired.

The proof works the same for countable validities, except that for countably
infinite models, we instead ascribe the exponential distribution $\D(n) = \f 1
{2^n}$.
\end{proof}

\newcommand{\frag}{\mathcal{S}}
Immediately,
\begin{cor} \label{cor:0Eval=classic}
Let $\lL$ be any first order language and $\frag$ be a set of sentences in
$\lL$.
The following are equivalent:
\begin{itemize}
    \item The set of finitely (resp. unrestricted) classically valid sentences
    in $\frag$ is decidable.
    \item The set of finitely (resp. countably) 0E-valid sentences in
    $\frag$ is decidable.
\end{itemize}
\end{cor}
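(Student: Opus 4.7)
The plan is to observe that this corollary follows immediately from the preceding theorem by restricting attention to sentences in $\frag$. The theorem establishes that the set of finitely 0E-valid sentences (over the full language) is literally equal, as a set, to the set of finitely classically valid sentences, and likewise that the set of countably 0E-valid sentences coincides with the set of (unrestricted) classically valid sentences.

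Intersecting both sides of each of these set equalities with $\frag$ yields that the finitely classically valid sentences in $\frag$ are exactly the finitely 0E-valid sentences in $\frag$, and similarly that the classically valid sentences in $\frag$ are exactly the countably 0E-valid sentences in $\frag$. Since decidability of a set $X$ depends only on $X$ itself (as a subset of $\N$ via some fixed Gödel coding of $\lL$-sentences), two sets that are equal are either both decidable or both undecidable.

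Thus each of the two equivalences in the statement collapses to the tautology ``a set is decidable iff itself is decidable.'' This is what I would record as the proof. The only step worth spelling out is the reminder that ``the set of $\epsilon$X-valid sentences in $\frag$'' means $\{\phi \in \frag : \phi \text{ is } \epsilon\text{X-valid}\}$, so that intersecting with $\frag$ is legitimate on both sides of the theorem's equalities; no further obstacle arises.
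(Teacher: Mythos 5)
Your proposal is correct and matches the paper exactly: the paper derives this corollary with the single word ``Immediately'' from the preceding theorem, relying on precisely the observation you spell out --- that the two sets of valid sentences coincide, hence their intersections with $\frag$ coincide, and decidability is a property of the set itself.
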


In any first order language with at least one binary relation,
the set of finitely classically valid sentences is $\Pi^0_1$-complete
\cite[p.~166]{libkin}.
Therefore,
\begin{cor}
In any first order language with at least one binary relation,
the problem of determining whether
a sentence is finitely 0E-valid is $\Pi^0_1$-complete.
\label{fin_eval_Pi1_complete}
\end{cor}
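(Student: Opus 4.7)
The plan is to bootstrap directly off the preceding theorem, which establishes that the set of finitely $0$E-valid sentences is \emph{literally identical} (not merely equidecidable) with the set of finitely classically valid sentences for any first order language. Once this set-theoretic equality is in hand, any complexity-theoretic property of one set transfers verbatim to the other under the identity reduction, which is trivially computable.

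Concretely, I would proceed as follows. First, I invoke the theorem immediately above to identify the set of finitely $0$E-valid $\lL$-sentences with the set of finitely classically valid $\lL$-sentences. Second, I cite Trachtenbrot's theorem in the form stated on p.~166 of Libkin~\cite{libkin}: in any first order language with at least one binary relation symbol, the set of finitely classically valid sentences is $\Pi^0_1$-complete. Since the two sets are equal, they share the property of being $\Pi^0_1$-complete, completing the proof.

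Because the argument is so short and because corollary (\ref{cor:0Eval=classic}) already records the decidability version of the same transfer, I do not anticipate any real obstacle; the only thing to check is that the many-one reductions underlying $\Pi^0_1$-completeness carry over without modification, which is automatic since the sets are literally the same subset of $\mathbb{N}$ under the standard G\"odel numbering of $\lL$-sentences. The identity map serves as the required reduction in both directions.
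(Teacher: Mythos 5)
Your proposal is correct and is essentially identical to the paper's own argument: the paper derives this corollary immediately from the preceding theorem identifying the set of finitely 0E-valid sentences with the set of finitely classically valid sentences, combined with the $\Pi^0_1$-completeness of finite classical validity from Trachtenbrot's theorem as cited in Libkin. Nothing further is needed.
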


Likewise, as classical validity in any language with at least one binary
relation is $\Sigma^0_1$-complete \cite{Turing}, we have in the countable case
\begin{cor}
\label{cor:countabl_0val_Sigma1_complete}
In any first order language with at least one binary relation,
the problem of determining whether
a sentence is countably 0E-valid is $\Sigma^0_1$-complete.
\end{cor}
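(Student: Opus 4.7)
The plan is to reduce countable 0E-validity to classical (unrestricted) first order validity and then invoke the known $\Sigma^0_1$-completeness of the latter for languages containing a binary relation.

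First I would apply the immediately preceding theorem, which establishes that the set of countably 0E-valid sentences coincides with the set of (unrestricted) classically valid sentences. This is the content of Corollary \ref{cor:0Eval=classic} specialized to the countable/unrestricted case, and it already does the hard semantic work: one direction is trivial (any classically valid sentence is countably 0E-valid because $\forall$ under $\emodelz$ in a 0-model with countable universe and an appropriate measure such as the exponential distribution on $\omega$ reduces to the classical interpretation), and the other direction uses that for any countable classical model one can equip it with a probability measure giving positive mass to each element, so that 0E-truth of $\phi$ forces classical truth.

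Second, I would cite Turing's theorem that in any first order language with at least one binary relation, the set of classically valid sentences is $\Sigma^0_1$-complete. Hardness follows from the undecidability of the \textit{Entscheidungsproblem}, while $\Sigma^0_1$-definability follows from Gödel's completeness theorem: classical validity is recursively enumerable by enumerating proofs in any sound and complete deductive calculus.

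Combining the two steps, countable 0E-validity is computably equivalent (indeed, identical as a set of sentences) to classical validity, and hence $\Sigma^0_1$-complete in any language with at least one binary relation. The only mild subtlety to check is that the reduction witnessed by the preceding theorem is effective, but since the ``reduction'' is just the identity map on sentences, this is immediate. No step here should present a real obstacle; the entire corollary is essentially a bookkeeping consequence of the preceding theorem together with the classical result, strictly parallel to how Corollary \ref{fin_eval_Pi1_complete} deduces $\Pi^0_1$-completeness of finite 0E-validity from Trachtenbrot's theorem.
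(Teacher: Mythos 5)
Your proposal is correct and follows exactly the paper's route: the theorem immediately preceding this corollary identifies the countably 0E-valid sentences with the (unrestricted) classically valid ones, and the corollary then follows from the classical $\Sigma^0_1$-completeness of validity in any language with a binary relation, the reduction being the identity map. Nothing further is needed.
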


There exist computable reductions for $\epsilon$E-validity just like in the case
of $\epsilon$E-satisfiability (see proposition (\ref{sat_reduction'})):

\begin{prop}[Kuyper inter-reduction \cite{val_is_pi_1_1}]
\label{prop:fsat_reduction'}
Let
\begin{itemize}
\item $\lL$ be a countable first-order language not containing function symbols
or equality, and
\item $\lL'$ be the language obtained by adding an infinite number of unary
predicates to $\lL$.
\end{itemize}
Then, for all rational $0 \le \epsilon_0 \le \epsilon_1 < 1$, the set of normally
$\epsilon_0$E-valid $\lL$-sentences many-one reduces via a computable function
to the set of normally $\epsilon_1$E-valid $\lL$-sentences.

More generally
\footnote{see \cite[remark below thm~3.3]{val_is_pi_1_1}.},
this reduction works ``per quantifier'':
For any $\epsilon \in (0, 1) \cap \Q$,
there exists a computable function $\esatred_\epsilon$ mapping
qF-sentences in $\lL$ to
$\lL'$-sentences such that the following are equivalent:
\begin{enumerate}
    \item there exists a pair $\pmd$ such that
        $$\pmd \qmodels \Phi.$$
    \item $\esatred_\epsilon(\Phi)$ is $\epsilon$F-satisfiable.
\end{enumerate}

\end{prop}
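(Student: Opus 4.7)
The proposition makes two assertions: a reduction between normal $\epsilon_0$E- and $\epsilon_1$E-validity classes, and a per-quantifier reduction from q-satisfiability of qF-sentences to $\epsilon$F-satisfiability of $\lL'$-sentences. Both are the F-logic counterparts of the Kuyper-Terwijn inter-reductions already recorded as lemmas (\ref{sat_reduction'}) and (\ref{sat_reduction''}). My plan is to derive the first from those earlier lemmas by a duality argument, and to obtain the second by directly adapting Kuyper-Terwijn's encoding to the strict-inequality setting of F-logic.

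For the validity reduction, I exploit the duality of proposition (\ref{fereduction}). A sentence $\phi$ is normally $\epsilon_0$E-valid iff $\neg\phi$ is not normally $\epsilon_0$E-satisfiable; by \cite[Thm~2.6, Prop.~2.7]{sat_is_sigma_1_1}, the latter is equivalent to $\neg\phi$ not being $\epsilon_0$E-satisfiable in the unrestricted sense. The single-parameter case of lemma (\ref{sat_reduction'}) applies (since $0 \le \epsilon_0 \le \epsilon_1 < 1$) and yields a computable $f$ such that $\neg\phi$ is $\epsilon_0$E-satisfiable iff $f(\neg\phi)$ is $\epsilon_1$E-satisfiable. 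Chaining these equivalences and dualizing once more, the computable map $g(\phi) := \neg f(\neg\phi)$ satisfies: $\phi$ is normally $\epsilon_0$E-valid iff $g(\phi)$ is normally $\epsilon_1$E-valid, as required.

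For the per-quantifier qF-sentence reduction, the plan is to mirror the construction underlying lemma (\ref{sat_reduction''}), uniformly replacing weak inequalities by strict ones. The essence of that construction is to introduce a finite family of fresh unary predicates in $\lL'$ whose intended measures rescale arbitrary rational thresholds $\alpha$ to the single fixed parameter $\epsilon$, allowing each $\Qr[\alpha]\ge$ quantifier to be simulated via the $\Qr[1-\epsilon]\ge$-interpretation of $\forall$ combined with boolean statements about the padding predicates. The F-logic counterpart uses the very same padding to simulate each $\Qr[\alpha]>$ quantifier via the $\Qr[\epsilon]>$-interpretation of $\exists$ in $\epsilon$F-logic; the probabilistic bookkeeping is identical, only the direction of every inequality flips. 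Verifying the soundness and completeness of this mirrored encoding produces the desired computable $\esatred_\epsilon$.

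The principal obstacle is this second part, for which a naive reduction by negation is circular. Indeed, if $\Phi$ is a qF-sentence, then $\pmd \qmodels \Phi$ is equivalent to $\pmd$ failing to q-satisfy its syntactic dual qE-sentence $\neg\Phi$, so existentially quantifying over $\pmd$ converts the problem into a universal one, namely the q-validity of $\neg\Phi$. Re-expressing such a validity statement as an $\epsilon$F-satisfiability claim about an $\lL'$-sentence is itself what the proposition seeks to establish, so lemma (\ref{sat_reduction''}) cannot be invoked as a black box without begging the question. The Kuyper-Terwijn padding argument must therefore be re-executed in the strict-inequality setting, and the bulk of the remaining technical work lies in verifying that each step of the original encoding, ported to strict thresholds and the F-logic semantics, continues to preserve both directions of the equivalence between q-satisfaction of $\Phi$ and $\epsilon$F-satisfiability of $\esatred_\epsilon(\Phi)$.
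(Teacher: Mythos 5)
Your first step contains a genuine error that undermines the whole first half of the argument. You assert that $\phi$ is normally $\epsilon_0$E-valid iff $\neg\phi$ is not normally $\epsilon_0$E-\emph{satisfiable}. That is precisely the symmetry that fails in this logic: by paraconsistency, $\pmd \emodels \phi$ and $\pmd \emodels \neg\phi$ can hold simultaneously (e.g.\ when an exceptional set is nonempty but null), so validity of $\phi$ does not exclude E-satisfiability of $\neg\phi$. The correct duality, proposition (\ref{fereduction}), swaps E for F: $\phi$ is $\epsilon$E-valid iff $\neg\phi$ is not $\epsilon$\emph{F}-satisfiable. Once this is corrected, lemma (\ref{sat_reduction'}) no longer applies, since it inter-reduces $\epsilon$E-satisfiability, and what you would then need is an inter-reduction of $\epsilon$F-satisfiability --- which is, by the very duality you invoke, exactly the E-validity inter-reduction you are trying to prove. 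So the first part is wrong as written and circular after repair. (That the two problems are not simple complements of one another is also reflected in table (\ref{tble:general_computability}), where E-satisfiability and E-validity sit in different completeness classes with different language requirements.)

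For the record, the paper does not prove this proposition at all: it is imported from Kuyper \cite{val_is_pi_1_1}, where it is established by a direct construction (the paper only argues afterwards that the construction, being a finite duplication of models, preserves finiteness and countability). Your second part correctly diagnoses that the qF-sentence case cannot be obtained by negating lemma (\ref{sat_reduction''}), but what you offer in its place is a plan --- ``re-execute the padding argument with strict inequalities'' --- rather than a proof; none of the soundness and completeness verification is carried out, and the claim that ``only the direction of every inequality flips'' is precisely the part that needs checking, since in F-logic it is $\exists$ that is interpreted non-classically while $\forall$ is classical, the reverse of the E-logic situation. As it stands, neither half of the proposal establishes the proposition.
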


Again, the proof for this theorem and the construction of the reduction function
given in \cite{val_is_pi_1_1} carry over almost identically when we restrict our
attention from ``normally $\epsilon$E-valid''
(validity over all probability models)
to ``finitely $\epsilon$E-valid"
(validity over all finite probability models --- which is equivalent to the
validity over all finite models by (\ref{finite_model_extension})):
the method of proof is the duplication of a given model a finite number of
times, and this procedure preserves finiteness of $\epsilon$-models.
This observation remains true in considering countable $\epsilon$-models.

Therefore, we have, by duality, the following two $\epsilon$F-analogues of
(\ref{sat_reduction''}) and (\ref{sat_reduction}).

\begin{lma}
Let $\lL$ and $\lL'$ be defined as above.
For any $\epsilon \in (0, 1) \cap \Q$,
there exists a computable function $\esatred_\epsilon$ mapping
qF-sentences in $\lL$ to
$\lL'$-sentences such that the following are equivalent:
\begin{enumerate}
    \item there exists a pair $\pmd$ with $M$ finite such that
        $$\pmd \qmodels \Phi.$$
    \item $\esatred_\epsilon(\Phi)$ is finitely $\epsilon$F-satisfiable.
\end{enumerate}
\label{fsat_reduction''}
\end{lma}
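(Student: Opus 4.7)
The plan is to mirror the argument already given for Lemma (\ref{sat_reduction''}), applying it to the qF analogue provided by Proposition (\ref{prop:fsat_reduction'}) rather than to the qE reduction of Lemma (\ref{sat_reduction'}). That is, I would take the reduction function $\esatred_\epsilon$ whose existence is asserted by the per-quantifier statement of (\ref{prop:fsat_reduction'}) and simply observe that both directions of that equivalence, as established in Kuyper's original argument, preserve finiteness of the underlying universe.

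First I would state the forward direction: if $\pmd$ is a finite pair with $\pmd \qmodels \Phi$, then we must produce a finite $\epsilon$F-model of $\esatred_\epsilon(\Phi)$. Inspecting Kuyper's construction in \cite{val_is_pi_1_1}, the target model is obtained from $\pmd$ by taking a finite number of disjoint duplicates (together with the addition of finitely many auxiliary unary predicates to align the measures with the threshold $\epsilon$); since a finite disjoint union of finite models is finite, finiteness is preserved. For the reverse direction, one starts from a finite $\epsilon$F-model $(\mathcal{N}, \mathcal{E})$ of $\esatred_\epsilon(\Phi)$ and extracts a pair $(\mathcal{N}', \mathcal{E}')$ q-satisfying $\Phi$, typically by restricting to a definable submodel carved out by the auxiliary predicates and renormalizing the measure; this restriction again cannot increase cardinality, so finiteness is again preserved.

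Hence the same computable $\esatred_\epsilon$ witnesses the equivalence in the finite category, which is exactly what the lemma asserts. Stylistically, I would simply echo the short justification offered just before (\ref{sat_reduction''}) rather than reproduce Kuyper's construction in detail, since it is already present in the cited literature and the only novelty needed here is the finiteness-preservation observation.

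The main obstacle, such as it is, lies not in any conceptual difficulty but in verifying that every step of Kuyper's original per-quantifier reduction for $\epsilon$F-validity truly respects finiteness; in particular one must confirm that the number of duplicate copies is always finite (which it is, since it depends only on the rational $\epsilon$ and on the fixed quantifier prefix of $\Phi$, not on the size of any model) and that the auxiliary predicates used to carve out the sub-universe on the reverse direction are definable without introducing any compactness-style infinite construction. Once these bookkeeping checks are made, the lemma follows immediately by duality from the already-established (\ref{sat_reduction''}), and the same argument also yields the countable-model analogue mentioned parenthetically in the text preceding the lemma.
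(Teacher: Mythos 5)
Your proposal matches the paper's own justification: the paper likewise takes the per-quantifier reduction of Proposition (\ref{prop:fsat_reduction'}) from \cite{val_is_pi_1_1} and observes that its model constructions — duplicating a given model finitely many times — preserve finiteness, so the same computable $\esatred_\epsilon$ works in the finite category. The finiteness-preservation bookkeeping you flag is exactly the one-paragraph remark the paper offers in lieu of a formal proof.
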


\begin{lma}
Let $\lL$ and $\lL'$ be defined as above and fix rational $\epsilon \in (0, 1)$.
There is a computable function $f_\epsilon$ such that,
for any finite set of rationals $J \sbe \Q \cap [0, 1]$ and $\lL$-sentences
$\{\Psi_\alpha\}_{\alpha \in J}$,
the following are equivalent:
\begin{enumerate}
    \item there exists a pair $\pmd$ such that, for each $\alpha \in
    J$, $\pmd$ is a finite $\alpha$-model and
        $$\pmd \fmodelsp \alpha \Psi_\alpha.$$
    \item $f_\epsilon(\{\Psi_\alpha\}_{\alpha \in J})$ is finitely
        $\epsilon$F-satisfiable.
\end{enumerate}
\label{fsat_reduction}
\end{lma}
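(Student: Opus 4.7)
The plan is to dualize the proof of lemma (\ref{sat_reduction}) verbatim. First, for each $\alpha \in J$, observe that by the definition of $\epsilon$F-coercion and the q-tree characterization of $\fmodelsp\alpha$,
    $$\pmd \fmodelsp \alpha \Psi_\alpha \iff
        \pmd \qmodels \alpha\Fcoerce(\Psi_\alpha).$$
The key point is that each $\alpha\Fcoerce(\Psi_\alpha)$ is a qF-sentence: its quantifier prefix contains only $\forall$ (interpreted classically in a q-tree) and strong q-quantifiers $\Qr[\alpha]>$, and no $\exists$ or weak q-quantifiers.

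Next, apply proposition (\ref{prop:qsat_and}) repeatedly (after renaming bound variables so that the quantifier prefixes of the sentences $\{\alpha\Fcoerce(\Psi_\alpha)\}_{\alpha \in J}$ share no variables): a single $\pmd$ simultaneously q-satisfies all of $\alpha\Fcoerce(\Psi_\alpha)$ for $\alpha \in J$ iff $\pmd \qmodels \Gamma$, where $\Gamma$ is the q-sentence obtained by concatenating all the prenex quantifier blocks of the coercions and conjoining their quantifier-free matrices. Because each coercion is a qF-sentence and qF-sentences are closed under this construction, $\Gamma$ is itself a qF-sentence.

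Finally, invoke lemma (\ref{fsat_reduction''}) on $\Gamma$ to produce an $\lL'$-sentence $\esatred_\epsilon(\Gamma)$ that is finitely $\epsilon$F-satisfiable iff some finite $\pmd$ q-satisfies $\Gamma$. Setting
    $$f_\epsilon(\{\Psi_\alpha\}_{\alpha \in J}) := \esatred_\epsilon(\Gamma)$$
gives the desired reduction. Computability of $f_\epsilon$ follows from the computability of $\Fcoerce$, of the syntactic assembly of $\Gamma$ from the $\alpha\Fcoerce(\Psi_\alpha)$, and of $\esatred_\epsilon$.

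There is no real obstacle here — everything reduces to machinery already in place. The only point to double-check is that $\Gamma$ is indeed a qF-sentence, so that lemma (\ref{fsat_reduction''}) applies. This is immediate from the definition of $\Fcoerce$, which never introduces $\exists$ or weak q-quantifiers, together with the fact that the construction in proposition (\ref{prop:qsat_and}) simply concatenates prefixes without altering quantifier types.
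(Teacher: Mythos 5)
Your proposal is correct and follows essentially the same route as the paper, which proves the E-version of this lemma by coercing each $\Psi_\alpha$ into a q-sentence, combining them via proposition (\ref{prop:qsat_and}), and applying the finite inter-reduction lemma, then obtains the F-version by duality exactly as you do with $\Fcoerce$ and lemma (\ref{fsat_reduction''}). Your explicit verification that $\Gamma$ remains a qF-sentence (so that lemma (\ref{fsat_reduction''}) applies) is a worthwhile detail the paper leaves implicit.
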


These two lemmas were used in the example section to express sentences
over different error parameters.

In (\ref{prop:fsat_reduction'}), letting $\epsilon_0 = 0$, we obtain a
computable reduction from finite 0E-validity to finite $\epsilon$E-validity for
any $\epsilon \in (0, 1)$. By corollary
(\ref{fin_eval_Pi1_complete}), we get
\begin{cor}
For any language with an infinite number of unary predicates and
at least one binary predicate,
finite $\epsilon$E-validity is $\Pi^0_1$-hard for rational $\epsilon \in (0, 1)$.
\end{cor}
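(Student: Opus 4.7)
The plan is to chain a known $\Pi^0_1$-hardness result with Kuyper's computable inter-reduction between rational error parameters, following the hint given just before the corollary. By Corollary \ref{fin_eval_Pi1_complete}, in any first order language with at least one binary predicate, finite 0E-validity is already $\Pi^0_1$-complete, so we fix a sublanguage $\lL_0 \sbe \lL$ consisting of exactly one binary predicate (and nothing else); finite 0E-validity of $\lL_0$-sentences is $\Pi^0_1$-hard.

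Next, I would apply Proposition \ref{prop:fsat_reduction'} with $\epsilon_0 = 0$ and $\epsilon_1 = \epsilon$. Although the proposition as stated concerns normally $\epsilon$E-validity, the discussion preceding this corollary observes that its proof transports verbatim to the finite setting, because the only model-theoretic construction in \cite{val_is_pi_1_1} is duplicating a given model a finite number of times, and finite duplication preserves finiteness. (One may also observe that, by Proposition \ref{finite_model_extension}, finite $\epsilon$E-validity coincides with validity over all finite probability models, so "normal" and "plain" versions agree in the finite case.) This yields a computable function $g$ taking $\lL_0$-sentences to $\lL_0'$-sentences, where $\lL_0'$ extends $\lL_0$ by infinitely many fresh unary predicates, such that $\phi$ is finitely 0E-valid iff $g(\phi)$ is finitely $\epsilon$E-valid.

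Finally, since $\lL$ contains at least one binary predicate and infinitely many unary predicates, I would fix a computable injection of the signature of $\lL_0'$ into that of $\lL$ and treat $g(\phi)$ as an $\lL$-sentence through this renaming. The composition is then a computable many-one reduction from finite 0E-validity of $\lL_0$-sentences to finite $\epsilon$E-validity of $\lL$-sentences, and $\Pi^0_1$-hardness transfers to establish the claim. The only real obstacle is verifying assertion that Kuyper's construction preserves finiteness of models in both directions of the equivalence; this is immediate on inspection of his proof, since the number of copies taken depends only on $\epsilon_0,\epsilon_1 \in \Q$ and is thus a fixed finite integer, never introducing infinite universes.
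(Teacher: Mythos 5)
Your proposal is correct and follows essentially the same route as the paper: it invokes Corollary (\ref{fin_eval_Pi1_complete}) for the $\Pi^0_1$-hardness of finite 0E-validity and then applies Kuyper's inter-reduction (\ref{prop:fsat_reduction'}) with $\epsilon_0 = 0$, $\epsilon_1 = \epsilon$, relying on the observation that the duplication construction preserves finiteness. The extra care you take with the sublanguage $\lL_0$ and the signature embedding is sound bookkeeping that the paper leaves implicit.
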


By theorem (\ref{emodels_finite_sat_is_sigma_1}) for the case of X =
F, finite $\epsilon$F-satisfiability is $\Sigma^0_1$-definable for any language.
By duality, finite $\epsilon$E-validity is $\Pi^0_1$-definable.
Hence, in combination with the above corollary, this implies
\begin{thm}
For any language with an infinite number of unary predicates and
at least one binary predicate,
finite $\epsilon$E-validity is $\Pi^0_1$-complete for rational $\epsilon \in (0,
1)$.

\label{thm:Eval_Pi1_complete}
\end{thm}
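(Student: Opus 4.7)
The plan is to obtain $\Pi^0_1$-completeness by splitting into the hardness and the definability halves, both of which reduce to results already established. For hardness, I would invoke Kuyper's inter-reduction, proposition \ref{prop:fsat_reduction'}, in its finite variant with $\epsilon_0 = 0$ and $\epsilon_1 = \epsilon$. This yields a computable many-one reduction from finite 0E-validity over a language $\lL$ containing at least one binary relation to finite $\epsilon$E-validity over the enlarged language $\lL'$ obtained by adjoining countably many unary predicates. Any target language $\lL_*$ satisfying the hypothesis of the theorem already contains (a copy of) such an $\lL'$, so the reduction lands inside $\lL_*$. Combining this with corollary \ref{fin_eval_Pi1_complete}, which asserts $\Pi^0_1$-hardness of finite 0E-validity in any language equipped with a binary relation, gives $\Pi^0_1$-hardness of finite $\epsilon$E-validity in $\lL_*$.

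For definability, I would invoke duality via proposition \ref{fereduction}: a sentence $\phi$ is finitely $\epsilon$E-valid iff $\neg \phi$ is not finitely $\epsilon$F-satisfiable. Theorem \ref{emodels_finite_sat_is_sigma_1}, applied in the case X = F, tells us that the set of finitely $\epsilon$F-satisfiable sentences is $\Sigma^0_1$-definable for any countable first-order language and any rational $\epsilon \in (0,1)$. Precomposing the $\Sigma^0_1$ semidecision procedure with the computable map $\phi \mapsto \neg \phi$ and then complementing exhibits the set of finitely $\epsilon$E-valid sentences as $\Pi^0_1$. Together with the previous paragraph, this is exactly $\Pi^0_1$-completeness.

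The substantive work has all been discharged earlier, in theorem \ref{thm:Esat_Sigma1-hard} (from which the dual $\Sigma^0_1$-definability flows once one verifies the perturbation lemma), in the Kuyper inter-reduction, and in the observation that the duplication-of-models construction preserves finiteness. Thus I do not expect a genuine obstacle. The only bookkeeping I would want to perform carefully is to confirm that when applied with $\epsilon_0 = 0$, the reduction function of proposition \ref{prop:fsat_reduction'} really produces $\lL'$-sentences using only the unary predicates adjoined and the pre-existing binary relation of $\lL$, so that it fits inside the signature permitted by the statement of the theorem; this is evident from inspection of Kuyper's construction, but deserves an explicit sentence in the writeup.
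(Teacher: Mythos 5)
Your proposal is correct and follows essentially the same route as the paper: $\Pi^0_1$-hardness via the finite variant of Kuyper's inter-reduction with $\epsilon_0 = 0$ together with Corollary (\ref{fin_eval_Pi1_complete}), and $\Pi^0_1$-definability via duality from the $\Sigma^0_1$-definability of finite $\epsilon$F-satisfiability in Theorem (\ref{emodels_finite_sat_is_sigma_1}). The only slip is in your closing remark, where you attribute the definability half to Theorem (\ref{thm:Esat_Sigma1-hard}) (the hardness of satisfiability) rather than to Theorem (\ref{emodels_finite_sat_is_sigma_1}); the body of your argument cites the right result, so this is harmless.
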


Along the same lines, for the countable case, we have
\begin{cor}\label{cor:countabl_epsval_Sigma1_hard}
For any language with an infinite number of unary predicates and
at least one binary predicate,
countable $\epsilon$E-validity is $\Sigma^0_1$-hard for rational $\epsilon \in
(0, 1)$.
\end{cor}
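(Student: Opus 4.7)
The plan is to mimic the proof of Theorem \ref{thm:Eval_Pi1_complete}, but in the countable setting, starting from the $\Sigma^0_1$-completeness of countable 0E-validity (Cor \ref{cor:countabl_0val_Sigma1_complete}) rather than the $\Pi^0_1$-completeness of finite 0E-validity. More precisely, I would first note that for any language containing at least one binary predicate, the set of countably 0E-valid sentences is $\Sigma^0_1$-hard by Cor \ref{cor:countabl_0val_Sigma1_complete}. It therefore suffices to exhibit a computable many-one reduction from countable 0E-validity to countable $\epsilon$E-validity.

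For this, I would invoke Proposition \ref{prop:fsat_reduction'} with $\epsilon_0 = 0$ and $\epsilon_1 = \epsilon \in (0,1) \cap \Q$, producing a computable function from $\lL$-sentences to $\lL'$-sentences (where $\lL'$ adjoins infinitely many unary predicates) that reduces normal $0$E-validity to normal $\epsilon$E-validity. The key observation, already remarked in the paragraph following \ref{prop:fsat_reduction'}, is that the Kuyper construction merely duplicates a given model finitely many times, an operation that preserves countability as well as finiteness. Hence the same computable function reduces countable 0E-validity to countable $\epsilon$E-validity. Composing with the $\Sigma^0_1$-hardness of the former gives $\Sigma^0_1$-hardness of the latter.

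The only step requiring care is verifying that the Kuyper reduction indeed transports cleanly to countable models, but this is precisely the content of the remark preceding Lemma \ref{fsat_reduction''} in the paper, which explicitly observes that the finite duplication step preserves both finiteness and countability. There is no additional obstacle: the argument is structurally identical to the derivation of Theorem \ref{thm:Eval_Pi1_complete}, but with $\Pi^0_1$ replaced by $\Sigma^0_1$ and "finite" replaced by "countable" at every occurrence, and we do not need a matching $\Sigma^0_1$ upper bound (hence only hardness, not completeness, is claimed).
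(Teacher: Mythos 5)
Your proposal is correct and follows essentially the same route as the paper: it applies the Kuyper inter-reduction (Proposition \ref{prop:fsat_reduction'}) with $\epsilon_0 = 0$, uses the observation that the finite-duplication construction preserves countability, and composes with the $\Sigma^0_1$-completeness of countable 0E-validity from Corollary \ref{cor:countabl_0val_Sigma1_complete}. Nothing is missing.
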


Finally, we mention a theorem of Terwijn.
\begin{prop}[Terwijn \cite{Terwijn_dec}]
Let $\phi$ be a sentence.
$\phi$ is finitely classically valid iff $\phi$ is countably $\epsilon$E-valid
for every $\epsilon > 0$.
\end{prop}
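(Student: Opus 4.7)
My plan proceeds in two directions.

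For the direction ``countable $\epsilon$E-validity for every $\epsilon > 0$ implies finite classical validity,'' I would argue by contrapositive. Given a finite classical model $\M \not\models \phi$, I equip $M$ with the uniform distribution $\D(a) = 1/|M|$. For any $\epsilon \in (0, 1/|M|)$, the only $\D$-measurable set of measure $\ge 1 - \epsilon$ is $M$ itself, so on $\pmd$ the $\epsilon$E-quantifier $\forall$ coincides with the classical $\forall$. A routine induction on the structure of $\phi$ then gives $\pmd \emodels \phi \iff \M \models \phi$, whence $\pmd \not\emodels \phi$ witnesses failure of countable $\epsilon$E-validity at this $\epsilon$.

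For the forward direction, let $\pmd$ be a countable $\epsilon$-model and put $\phi$ in prenex normal form $Q_1 x_1 \cdots Q_n x_n \psi(\vec x)$ with $\psi$ quantifier-free. By proposition (\ref{finite_model_extension}) I may regard $\D$ as a point measure defined on $\powerset(M)$. Since $M$ is countable and $\D(M) = 1$, I select a finite $F \sbe M$ with $\D(F) \ge 1 - \epsilon$, augmented by the interpretations of the constants appearing in $\phi$. Assuming first that $\lL$ is function-free, $\M_F$ is a finite substructure of $\M$, so finite classical validity of $\phi$ yields $\M_F \models \phi$, and hence the verifier possesses a winning Skolem strategy $\sigma$: at each $\exists x_i$ position, $\sigma$ chooses an element of $F$ from the history, while every element of $F$ succeeds at a $\forall x_j$ position.

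I would then construct an $\epsilon$E-tree $T$ in $\pmd$ for $\phi$ by placing $F$ (of $\D$-measure $\ge 1 - \epsilon$) at every $\forall$-level node and the singleton $\{\sigma(\vec a)\}$ at every $\exists$-level node, where $\vec a$ lists the labels on the path from the root. Every bran then lies inside $F$, and the winning property of $\sigma$ combined with the preservation of the quantifier-free $\psi$ between $\M_F$ and $\M$ yields $\M \models \psi(\vec a)$ at every leaf; thus $T$ is a valid $\epsilon$E-tree for $\phi$ in $\pmd$. Proposition (\ref{prop:sat_reduce2tree}) then gives $\pmd \emodels \phi$, completing the direction.

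The main obstacle will be accommodating function symbols in $\lL$, since closing $F$ under such functions need not remain finite. I would resolve this by restricting to the finite sub-signature $\lL_\phi$ of symbols actually occurring in $\phi$ and taking the closure $F'$ of $F$ under its functions to a finite depth exceeding the maximum term depth of $\psi$ multiplied by the quantifier count, which provides a sufficient ``buffer'' for all $n$ nested strategy substitutions. Defining an auxiliary finite classical model $\M^*$ on $F'$ whose functions copy $\M$'s whenever the output lies in $F'$ (and are arbitrary otherwise) still produces $\M^* \models \phi$ by finite classical validity, and terms of $\psi$ evaluate identically in $\M$ and $\M^*$ on arguments that remain in the inner layers of $F'$, so the tree construction transfers with minor bookkeeping.
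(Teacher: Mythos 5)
The paper gives no proof of this proposition (it is quoted from Terwijn), so I am judging your argument on its own terms. Both of your main directions are correct for function-free languages. The contrapositive direction --- uniform measure on a finite classical countermodel and $\epsilon < 1/|M|$, which forces every set of measure $\ge 1-\epsilon$ to be all of $M$ and hence makes $\forall$ classical --- is exactly right. The forward direction, choosing a finite $F \sbe M$ with $\D(F) \ge 1 - \epsilon$ (legitimate after proposition (\ref{finite_model_extension})) and building an $\epsilon$E-tree with $F$ at every $\forall$-level and Skolem singletons at every $\exists$-level, is also sound when $\lL$ has no function symbols: quantifier-free formulas, including equality atoms, evaluate identically in the induced substructure $\M_F$ and in $\M$, so proposition (\ref{prop:sat_reduce2tree}) delivers $\pmd \emodels \phi$.

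The function-symbol patch, however, cannot be repaired, because the forward implication is actually false once function symbols and equality are both present. Take $\lL = \{f, =\}$ and
$$\phi := \forall x \forall y\, [f(x) = f(y) \limplies x = y] \ \limplies\ \forall z \exists w\, [f(w) = z],$$
which is finitely classically valid (an injection of a finite set into itself is onto). Let $\M = (\N, f)$ with $f(n) = n+1$, and let $\D(0) = 1/2$ and $\D(n) = 2^{-n-1}$ for $n \ge 1$. The negated antecedent $\exists x \exists y\,[f(x) = f(y) \land \neg(x = y)]$ fails because $f$ is injective, and the consequent requires the set of $z$ with an $f$-preimage, namely $\N \setminus \{0\}$, to have measure $\ge 1 - \epsilon$, whereas it has measure $1/2$. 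Hence $\pmd \not\emodels \phi$ for every $\epsilon < 1/2$. In your construction this surfaces exactly at the ``outer layer'': for $z = 0$ the Skolem witness $w$ chosen in $\M^*$ satisfies $f^{\M^*}(w) = 0 \neq f^{\M}(w)$, and no finite buffer depth can help, since $0$ has no $f$-preimage in $\M$ at all while carrying measure greater than $\epsilon$. You should therefore restrict the forward direction to relational languages --- the setting of Terwijn's result and of the paper's reduction lemmas --- or explicitly flag that the unrestricted statement fails.
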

\section{Future Work}
\label{sec:future}

\subsection{The Countable Case}
As displayed by table (\ref{tble:countable_computability}), we still do not
know much about $\epsilon$E-logic over countable $\epsilon$-models.
Most egregiously we have no idea of the computability of its satisfiability
problem.
Looking over the entries of tables (\ref{tble:general_computability}),
(\ref{tble:finite_computability}), and (\ref{tble:countable_computability}),
the pattern seems to favor the possibility of countable
$\epsilon$E-satisfiabilty being $\Pi^0_1$-hard or even complete.
Obviously our proof for the finite case would not carry over, but it is
conceivable that replacing the halting set with a $\Pi^0_1$-complete set would
work out naturally.
Alternatively, we could look at the dual problem of
reducing the halting set to countable $\epsilon$F-validity.

\subsection{Reducing Language Requirement}
In our results, the requirements of an infinite number of unary predicates and
at least three binary predicates are likely not optimal.
In classical first order logic, these requirements can be collapsed to the
single requirement of one binary predicate through graph theoretic or
set theoretic encodings.
However, in $\epsilon$E- and $\epsilon$F- logic, these methods do not seem to
play well with the additional structure of a probability space.
In any case, for our theorems to be more relevant to applications,
the number of unary predicates must be brought down to a finite number, whether
strengthening our undecidability or breaking into decidability.

\subsection{q-Logic and Trees}
We developed q-sentences and other q-concepts only to arrive at results for
$\epsilon$E- and $\epsilon$F-logic, but they can as well be studied on their
own.
In particular, an obvious definition of q-logic would make it a stronger version
of Keisler's probability logic which only allows the quantifiers $\Qr\ge$ and
$\Qr>$.
Keisler's work \cite{keisler_p_quantifiers} can then be applied in most aspects
to such a q-logic.

Similarly, one could investigate the algebraic structure of $\epsilon$E-,
$\epsilon$F-, and q-trees, whose properties we have not fully exploited.
It should not be hard to see that, for a fixed $\epsilon$-model
$\pmd$ and a fixed (q-)sentence $\Phi$, there is a natural partial order and
a join operation on trees $\pmd$ for $\Phi$ of each class.
Instinctively one could ask, under what circumstance does a meet operation
exists?
Deeper research into the semilattice structure of these trees could
reveal information on the computability of $\epsilon$E and $\epsilon$F fragments
not discussed here.

\subsection{Irrational $\epsilon$}
There are several insufficiencies in current techniques with regard to deducing
facts about irrational $\epsilon$s:
\begin{enumerate}[label=(\roman*)]
    \item Terwijn and Kuyper's proofs of the inter-reduction theorems
    fundamentally require the ratio $\epsilon_0/\epsilon_1$ to be rational.
    So while this would imply we have inter-reductions between, say
    $\f 1 {\sqrt{2}}$ and $\f 1 {2\sqrt 2}$, we cannot say much about
    the relative difficulties of $\f 1 2$ and $\f 1 {\sqrt 2}$.
    Surpassing the obstacles to generalize to irrational $\epsilon$ would
    necessitate brand new methods, which
    could also eliminate the infinite unary predicate restriction.
    \item Our proof of the $\Sigma^0_1$-hardness of
    finite $\epsilon$E-satisfiability depends crucially on the ability of $\f 1
    2$E-logic to force the measures of two sets to be equal.
    Without a reduction from $\f 1 2$E- to $\epsilon$E-validity, we cannot
    conclude that $\epsilon$E-validity is $\Sigma^0_1$-hard.
    \item In order for a linear program to be solved in finite time,
    all arithmetic operations over the field generated by its coefficients
    must be total computable.
    This holds for the rational and general number fields,
    but not for the field of computable reals, for which the comparison
    function is not total.
    Thus our proof of decidabilities in monadic relational languages do not
    carry over to the general case.
\end{enumerate}

Fortunately, as our example section illustrate, in many cases only the relative
magnitude of $\epsilon$ matters.
This observation also boosts the likelihood that our computability results
hold for general $\epsilon$ as well.

\subsection{Classical Model Theory Techniques}
Given the applications in section (\ref{ssec:examples_and_applications}) and the
initial motivation of $\epsilon$E-logic, we see that it has an intimate
connection with computational learning theory.
At the intersection of CLT and classical model theory is the concept of
VC dimension
\cite{kearns_CLT}\cite{VCmodel_Laskowski}\cite{VCmodel_I}\cite{VCmodel_II},
which we mentioned briefly.
It could be possible to reconcile the results of these two disciplines in
$\epsilon$E- and $\epsilon$F-logics, to the benefit of all involved.

As our examples also hinted, many techniques in finite model theory could be
converted to versions for our probability logics.
One could also experiment with adding new means of expressions like the BIT
relation, a canonical ordering, counting operators, etc.
An analogue of descriptive complexity could be developed;
given the abundance of probabilistic quantifiers in classes like BPP, PP, PCP,
and so on, it is indeed plausible that one could equate one of these complexity
classes with a description class in $\epsilon$E-logic.

A version of Ehrenfeucht-Fra\"iss\'e games \cite{libkin}, another common tool
in finite model theory, for $\epsilon$E-logic could also have connections to the
malicious advisary learning model
\cite{adversary_Klivans}\cite{adversary_Servedio}.

\subsection{Computational Complexity}
In addition to developing descriptive complexity, one could
also explore typical%
\footnote{to be distinguished from \textit{average-case} complexity.}
-case complexity in $\epsilon$E-logic.

\section{Acknowledgements}

I owe the inspiration for this topic to a conversation with Leslie Valiant,
who provided me with a copy of his work \textit{Robust Logics}
\cite{Valiant_robust_logics}.
His class CS228 at Harvard cultured my appreciation of computational learning
theory and equipped me with the knowledge for much of the material in the
application section.
The first draft of this essay was born as the final project for the course.

I thank Rutger Kuyper for his enthusiastic assistance throughout my research
despite being an ocean away.
My specific interest in $\epsilon$-logic started when I looked for derivative
works of Valiant's \textit{Robust Logics} and discovered Kuyper's thesis.
He helped gather for me all current materials on $\epsilon$-logic and answered
my questions readily, making what could be a rocky journey very smooth.

Finally, and most importantly, I am indebted to Nate Ackerman for the
intellectual conversations, mathematical guidance, and detailed comments on
life, liberty, and an early draft of this paper.
Without him, I would not have seen half of the mathematical world as I have now;
without him, I would not have anywhere to echo my love and amazement of this
magical world;
and without him, I would not be able to amuse myself with his Ackermanic
similes.

\bibliography{ref}
\bibliographystyle{plain}
\end{document}